\documentclass{sig-alternate}
\usepackage{graphicx}
\usepackage{times}
\usepackage[linesnumbered,ruled,vlined]{algorithm2e}
\usepackage{subfigure}
\usepackage{multirow}
\newtheorem{definition}{Definition}
\newtheorem{theorem}{Theorem}
\newtheorem{lemma}{Lemma}
\newtheorem{corollary}{Corollary}

\title{Temporal Graph Traversals:\\ Definitions, Algorithms, and Applications}

%\author{
%{Da Yan{\small $^{\#1}$}, James Cheng$^{*2}$, Kai Xing{\small $^{\#3}$}, Wilfred Ng{\small $^{\#4}$}, M.\ Tamer \"{O}zsu{\small $^{\dag5}$}}
%\vspace{1.6mm}\\
%\fontsize{10}{10}\selectfont\itshape
%{\small $^\#$}Department of Computer Science and Engineering, Hong Kong University of Science and Technology\\
%\fontsize{9}{9}\selectfont\ttfamily\upshape \{$^1$yanda, $^3$kxing, $^4$wilfred\}@cse.ust.hk\\
%\fontsize{10}{10}\selectfont\itshape\rmfamily $^*$Department of Computer Science and Engineering, The Chinese University of Hong Kong\\
%\fontsize{9}{9}\selectfont\ttfamily\upshape $^2$jcheng@cse.cuhk.edu.hk\\
%\fontsize{10}{10}\selectfont\itshape\rmfamily $^\dag$Cheriton School of Computer Science, University of Waterloo\\
%\fontsize{9}{9}\selectfont\ttfamily\upshape $^5$tozsu@cs.uwaterloo.ca}

\author{
{Silu Huang$^{1}$, James Cheng$^{2}$, Huanhuan Wu$^{3}$}
\vspace{1.6mm}\\
\fontsize{10}{10}\selectfont\itshape
\fontsize{10}{10}\selectfont\itshape\rmfamily Department of Computer Science and Engineering, The Chinese University of Hong Kong\\
\fontsize{9}{9}\selectfont\ttfamily\upshape \{$^1$slhuang, $^2$jcheng, $^3$hhwu\}@cse.cuhk.edu.hk
}

\begin{document}
\maketitle

\begin{abstract}

%A temporal graph is a graph in which connections between vertices are active at specific times, and such temporal information leads to completely new patterns and knowledge that are not present in a non-temporal graph. In this paper, we study traversal problems in a temporal graph. Graph traversals, such as DFS and BFS, are basic operations for processing and studying a graph. While both DFS and BFS are well-known simple concepts, it is non-trivial to adopt the same notions from a non-temporal graph to a temporal graph. We analyze the difficulties of defining graph traversal operations in a temporal graph and propose new definitions of DFS and BFS for a temporal graph. We investigate the properties of temporal graph traversals and propose efficient algorithms with optimal complexity. In particular, we also study important applications of temporal DFS and BFS. We verify the efficiency of our graph traversal algorithms in real world temporal graphs and apply the algorithms to study the properties of these temporal graphs.

A temporal graph is a graph in which connections between vertices are active at specific times, and such temporal information leads to completely new patterns and knowledge that are not present in a non-temporal graph. In this paper, we study traversal problems in a temporal graph. Graph traversals, such as DFS and BFS, are basic operations for processing and studying a graph. While both DFS and BFS are well-known simple concepts, it is non-trivial to adopt the same notions from a non-temporal graph to a temporal graph. We analyze the difficulties of defining temporal graph traversals and propose new definitions of DFS and BFS for a temporal graph. We investigate the properties of temporal DFS and BFS, and propose efficient algorithms with optimal complexity. In particular, we also study important applications of temporal DFS and BFS. We verify the efficiency and importance of our graph traversal algorithms in real world temporal graphs.

\end{abstract} 
\section{Introduction}  \label{sec:intro}

%Graph traversals, such as \emph{depth-first search} (\emph{DFS}) and \emph{breadth-first search} (\emph{BFS}), are the most fundamental graph operations. Both DFS and BFS are not only themselves essential in studying and understanding graphs, but they are also building blocks of numerous more advanced graph algorithms \cite{CormenLRS09mit}. Their importance to graph theory and applications is beyond question, and their concepts and applications are covered in details in almost every introductory textbook to algorithms \cite{CormenLRS09mit}.

Graph traversals, such as \emph{depth-first search} (\emph{DFS}) and \emph{breadth-first search} (\emph{BFS}), are the most fundamental graph operations. Both DFS and BFS are not only themselves essential in studying and understanding graphs, but they are also building blocks of numerous more advanced graph algorithms \cite{CormenLRS09mit}. Their importance to graph theory and applications is beyond question.

Surprisingly, however, such basic graph traversal operations as DFS and BFS are not even defined or studied in any depth for an important source of graph data, namely \emph{temporal graphs}. Although both DFS and BFS are simple for a \emph{non-temporal graph}, we shall show that the concepts of DFS and BFS are non-trivial for temporal graphs, which reveal many important properties useful for understanding temporal graphs and lead to new applications.

Temporal graphs are graphs in which vertices and edges are temporal, i.e., they exist or are active at specific time instances. Formally, the temporal graph we study is a graph $G=(V,E)$, where $V$ is the set of vertices and $E$ is the set of edges. Each edge $(u,v) \in E$ is associated with a list of time instances at which $(u,v)$ is active or $u$ is communicating to $v$. A vertex is considered active whenever it is involved in an active edge communication. Figure \ref{fig:tmpG} depicts a Short Message Service (SMS) network modeled as a temporal graph. In the graph, we can see that $a$ sends a message to $b$ at time $1$ and $6$; while $b$ sends a message to $a$ at time $8$.

\begin{figure}[!tbp]
\begin{center}
%\hspace{-0mm}
\includegraphics[width = 1.4in]{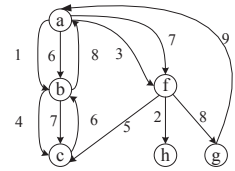}
\vspace{-1mm}
\caption{A temporal graph $G$} \label{fig:tmpG}
\end{center}
\vspace{-3mm}
\end{figure}

\begin{figure}[!tbp]
\begin{center}
%\hspace{-0mm}
\includegraphics[width = 2.4in]{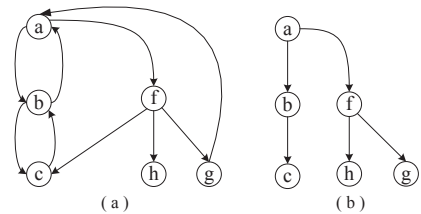}
\vspace{-2mm}
\caption{A non-temporal graph $\hat{G}$ and its DFS/BFS tree} \label{fig:staticG}
\vspace{-4.5mm}
\end{center}
\end{figure}

%Although research on graph data has been mainly focused on non-temporal graphs, temporal graphs are in fact a very common graph data source. For example, \textbf{give examples of real life temporal graphs here please, need to show they appear in different important application domains ....}

Although research on graph data has been mainly focused on non-temporal graphs, temporal graphs are in fact ubiquitous in real life. For example, SMS networks, phone call networks, email networks, online social posting networks, stock exchange networks, flight scheduling or travel planning graphs, etc., are all temporal graphs as the objects (i.e., vertices) communicate/connect to each other at different time instances. A long list of different types of temporal graphs is described in \cite{HolmeS11corr}.

%We have shown how popular temporal graphs exist in real world. However, research on temporal graphs are seriously inadequate, which we believe is mainly due to the common practice of representing a temporal graph as a non-temporal graph (by discarding the temporal information) for easier data analysis and algorithm design. Figure \ref{fig:staticG}(a) shows the non-temporal graph representation $\hat{G}$ of the temporal graph $G$ in Figure \ref{fig:tmpG}. Unfortunately, it has been largely overlooked that a non-temporal graph representation actually loses critical information in the temporal graph, which we explain as follows.

Though existing in a wide spectrum of application domains, research on temporal graphs are seriously inadequate, which we believe is mainly due to the common practice of representing a temporal graph as a non-temporal graph for easier data analysis and algorithm design. Figure \ref{fig:staticG}(a) shows the non-temporal graph representation $\hat{G}$ of the temporal graph $G$ in Figure \ref{fig:tmpG}, where the temporal information in $G$ is discarded and multiple edges are combined into one (e.g., the two edges (a,b) at time 1 and 6 are combined into a single edge in $\hat{G}$). Unfortunately, it has been largely overlooked that a non-temporal graph representation actually loses critical information in the temporal graph, which we explain as follows.

%Both DFS and BFS are closely related to graph reachability \cite{ChengHWF13sigmod}, as any path from an ancestor $u$ to a descendant $v$ in the DFS/BFS tree indicates that $u$ can reach $v$. However, a DFS/BFS tree of the non-temporal graph representation does not imply the reachability between the same vertices in the corresponding temporal graph. For example, the DFS and BFS trees, with vertex $a$ as the root, of the non-temporal graph $\hat{G}$ in Figure \ref{fig:staticG}(a) are the same and are given in Figure \ref{fig:staticG}(b). The path $\langle a,f,h\rangle$ in the DFS/BFS tree indicates that $a$ can reach $h$ in $\hat{G}$; however, $a$ cannot reach $h$ in the temporal graph $G$ because in $G$, $a$ reaches $f$ at $3$ and $7$ but $f$ reaches $h$ only before $3$, i.e., $t=2$. In fact, $\langle a,f,h\rangle$ is not a proper path in $G$ since there is no information that can be transmitted from $a$ to $f$ and then from $f$ to $h$ with respect to a chronological time sequence.

Both DFS and BFS are closely related to graph reachability \cite{AgrawalBJ89sigmod,SchaikM11sigmod}, as any path from an ancestor $u$ to a descendant $v$ in the DFS/BFS tree indicates that $u$ can reach $v$. However, a DFS/BFS tree of the non-temporal graph representation does not imply the reachability between the same vertices in the corresponding temporal graph. For example, the DFS and BFS trees, with vertex $a$ as the root, of the non-temporal graph $\hat{G}$ in Figure \ref{fig:staticG}(a) are the same and are given in Figure \ref{fig:staticG}(b). The path $\langle a,f,h\rangle$ in the DFS/BFS tree indicates that $a$ can reach $h$ in $\hat{G}$; however, $a$ cannot reach $h$ in the temporal graph $G$ because in $G$, $a$ reaches $f$ at $3$ and $7$ but $f$ reaches $h$ only before $3$, i.e., $t=2$. In fact, $\langle a,f,h\rangle$ is not a proper path in $G$ since information cannot be transmitted from $a$ to $f$ and then from $f$ to $h$ following a chronological time sequence.

%Similarly, the BFS tree in Figure \ref{fig:tmpG}(d), which gives the distance from $A$ to every other vertex in $\hat{G}$, also presents incorrect information about the temporal graph $G$. For example, the distance from $A$ to $D$ in $G$ is at least 3, while that from $A$ to $C$ is infinity.

The above discussion not only shows that the non-temporal graph representation is not a good tool for studying temporal graphs, but also motivates the need to define basic graph traversals for temporal graphs. To this end, we conduct the first study on traversal problems in temporal graphs and propose definitions, algorithms, and applications of both DFS and BFS in temporal graphs. We will show that these simple traversal operations on non-temporal graphs become unexpectedly complicated in temporal graphs as the presence of the edges are governed by chronological time sequences.

Note that a temporal graph can be viewed as a sequence of snapshots, where each snapshot is a non-temporal graph in which all edges are active at the same time $t$. A naive approach of temporal graph traversal is to conduct a DFS or BFS in each snapshot. However, such an approach is unrealistic since the number of snapshots in a temporal graph can be very large, e.g., the {\tt wiki} dataset used in our experiment has $134,075,025$ snapshots. Dividing a temporal graph into so many snapshots is not suitable for analysis as it is not easy and efficient to relate the information of one snapshot to that of the next one. %It is also not easy to see how graph traversals in such a large number of snapshots can be useful for real applications.

The main contributions of our work are summarized as follows.

\begin{itemize}
  \item We show that critical temporal information is missing in the non-temporal graph representation of a temporal graph, hence the motivation to study temporal graphs by preserving temporal information.
  \item We show the challenges in defining meaningful DFS and BFS for temporal graphs. We then formally define DFS and BFS in temporal graphs, and design efficient algorithms for their computation.
  \item We believe that basic traversals such as DFS and BFS are the keys to studying temporal graphs, and hence the significance of our work in contributing to future research on temporal graphs that has not been given enough attention so far. As a first step along this direction, we study various graph properties that can be obtained by a DFS or BFS traversal of a temporal graph, and then we identify a set of important applications for both DFS and BFS in temporal graphs.
%  \item We conduct extensive experiments on real world temporal graphs. We first evaluate the efficiency of our algorithms. We then apply our algorithms to study various characteristics of temporal graphs, compared with those of the non-temporal graph representation. We also demonstrate how our graph traversal techniques are used as building blocks of more advanced graph analysis and present important findings of temporal graphs.

  \item We conduct extensive experiments on a range of real world temporal graphs. We first evaluate the efficiency of our algorithms. We study the properties of temporal DFS and BFS, and demonstrate their importance by comparing with results obtained from non-temporal graphs. We also show temporal graph traversals are useful in applications.

\end{itemize}

The rest of the paper is organized as follows. In Section \ref{sec:notation} we give the notations. Then, in Sections \ref{sec:dfs} and \ref{sec:bfs} we present the details of temporal DFS and BFS. We discuss applications in Section \ref{sec:app}. We conduct experimental studies in Section \ref{sec:results}. Finally, we discuss related work in Section \ref{sec:related} and give our conclusions in Section \ref{sec:conclusion}.

\section{Notions and Notations}  \label{sec:notation}

We define notions and notations related to temporal graph in this section. We first define two types of closely related edges.

\begin{itemize}
    \item \emph{Temporal edge:} a temporal edge is represented by a triplet, $(u,v,t)$, where $u$ is the start point or start vertex, $v$ is the end point or end vertex, and $t$ is the time when $u$ sends a message to $v$ or when the edge $(u,v)$ is active. we call $u$ the \emph{in-neighbor} of $v$ and $v$ the \emph{out-neighbor} of $u$.
    \item \emph{Non-temporal edge:} a non-temporal edge is simply the conventional edge representation, given by a pair $(u,v)$, where $u$ is the start vertex, and $v$ is the end vertex.
%    \item Temporal edge vs static edge: Each static edge $e_s=(u,v)$ corresponds to a set of temporal edges with the same start and end vertices, that is, $(u,v)=e_s \leftrightarrow \{e_1,e_2,...,e_{d(u,v)} \} =
%    \{ (u,v,t_1),(u,v,t_2),...,(u,v,t_{d(u,v)}) \}$, where $d(u,v)$ is the number of time instances when the edge $(u,v)$ is active.
\end{itemize}

Based on the two types of edges, we define temporal graph and non-temporal graph as follows.

\vspace{2mm}

\noindent \emph{Temporal graph:} Let $G=(V,E)$ be a temporal graph, where $V$ is the set of vertices and $E$ is the set of edges in $G$.
\begin{itemize}
\item Each edge $e=(u,v,t) \in E$ is a temporal edge from a vertex $u$ to another vertex $v$ at time $t$. For any two temporal edges $(u,v,t_1)$ and $(u,v,t_2)$, $t_1 \ne t_2$.
\item Each vertex $v \in V$ is active when there is a temporal edge that starts or ends at $v$.
\item $d(u,v)$: the number of temporal edges from $u$ to $v$ in $G$.
\item $E(u,v)$: the set of temporal edges from $u$ to $v$ in $G$, i.e., $E(u,v)= \{ (u,v,t_1),(u,v,t_2),\ldots,(u,v,t_{d(u,v)})\}$.
\item $N_{out}(v)$ or $N_{in}(v)$: the set of out-neighbors or in-neighbors of $v$ in $G$, i.e., $N_{out}(v)=\{u: (v,u,t) \in E\}$ and $N_{in}(v)=\{u: (u,v,t) \in E\}$.
\item $d_{out}(v)$ or $d_{in}(v)$: the temporal out-degree or in-degree of $v$ in $G$, defined as $d_{out}(v)=\sum_{u \in N_{out}(v)} d(v,u)$ and $d_{in}(v)=\sum_{u \in N_{in}(v)} d(u,v)$.
\end{itemize}

Now given a temporal graph $G=(V,E)$, we define the corresponding non-temporal graph $\hat{G}$ of $G$ as follows.

\vspace{2mm}

%\noindent \emph{Non-temporal graph:} Given a temporal graph $G=(V,E)$, we construct a static graph $G_s=(V_s,E_s)$ from $G$ as follows:
%\begin{itemize}
%\item $V_s=V$.
%\item $E_s=\{(u,v): E(u,v) \subseteq E\}$, i.e., we create a static edge $(u,v)$ in $G_s$ for every set $E(u,v)$ in $G$.
%\item $N_{out}(v,G_s)$ (or $N_{in}(v,G_s)$: the set of out-neighbors or in-neighbors of $v$ in $G_s$, i.e., $N_{out}(v,G_s)=\{u: (v,u) \in E_s\}$ and $N_{in}(v,G_s)=\{u: (u,v) \in E_s\}$.
%\item $d_{out}(v,G_s)$ (or $d_{in}(v,G_s)$: the out-degree or in-degree of $v$ in $G_s$, defined as $d_{out}(v,G_s)=|N_{out}(v,G_s)|$ and $d_{in}(v,G_s)=|N_{in}(v,G_s)|$.
%\end{itemize}

\noindent \emph{Non-temporal graph:} Given a temporal graph $G=(V,E)$, we construct a non-temporal graph $\hat{G}=(\hat{V},\hat{E})$ from $G$ as follows:
\begin{itemize}
\item $\hat{V}=V$.
\item $\hat{E}=\{(u,v): E(u,v) \subseteq E\}$, i.e., we create a non-temporal edge $(u,v)$ in $\hat{G}$ for every set $E(u,v)$ in $G$.
\item $\hat{N}_{out}(v)$ or $\hat{N}_{in}(v)$: the set of out-neighbors or in-neighbors of $v$ in $\hat{G}$, i.e., $\hat{N}_{out}(v)=\{u: (v,u) \in \hat{E}\}$ and $\hat{N}_{in}(v)=\{u: (u,v) \in \hat{E}\}$.
\item $\hat{d}_{out}(v)$ or $\hat{d}_{in}(v)$: the out-degree or in-degree of $v$ in $\hat{G}$, defined as $\hat{d}_{out}(v)=|\hat{N}_{out}(v)|$ and $\hat{d}_{in}(v)=|\hat{N}_{in}(v)|$.
\end{itemize}

%Given $G$ and $\hat{G}$, we further define $n=|V|=|\hat{V}|$, $m=|\hat{E}|$ and $M=|E|$.

Figures \ref{fig:tmpG} and \ref{fig:staticG}(a) show a temporal graph $G$ and its corresponding non-temporal graph $\hat{G}$. We have $d(a,b)=2$ as $E(a,b)=\{(a,b,1), (a,b,6)\}$, $N_{out}(a)=\hat{N}_{out}(a)=\{b,f\}$ and thus $\hat{d}_{out}(a)$ $=2$, while $d_{out}(a)=d(a,b)+d(a,f)=2+2=4$.

\vspace{2mm}

%\noindent \emph{Remarks:} We focus our discussions on directed graphs in this paper, but our definitions and algorithms can be trivially applied to undirected graphs (or one may trivially treat an undirected graph as a directed graph by considering each undirected edge bi-directional). For simplicity, we also do not consider self-loops (i.e., we assume that edge $(v,v)$ does not exist), which can be easily handled but their handling makes definitions/algorithms more complicated without adding much technical contribution.

\noindent \emph{Remarks:} We focus our discussions on directed graphs, but our definitions and algorithms can be trivially applied to undirected graphs. For simplicity, we do not consider self-loops, which can also be easily handled. We also remark that our method can be easily extended to handle temporal edges with a time duration.

%{\bf Modification start:}
% In this paper, we only consider instantaneous contact,i.e., each edge's duration is zero. However we remark that cases where edge is active for non-zero duration, i.e., $(u,v,t,\delta)$, can be generalized easily by setting the departure time of vertex $u$ as $t$ and the arrive time of vertex $v$ as $t+\delta$.
%{\bf Modification end:} 
\section{Depth-First Search}  \label{sec:dfs}

%In this section, we propose several definitions of \emph{depth-first search} (\emph{DFS}) for temporal graphs, and discuss why more than one definitions of DFS are introduced, by analyzing the strengths and limitations of the different types of DFS. We investigate properties of DFS in temporal graphs and then present efficient algorithms for DFS in temporal graphs.

In this section, we propose two definitions of \emph{depth-first search} (\emph{DFS}) for temporal graphs, and discuss why two definitions are needed. We investigate properties of DFS in temporal graphs and then present efficient algorithms for DFS in temporal graphs.

\subsection{Challenges of DFS}   \label{ssec:dChallenges}

%DFS in a non-temporal graph is rather simple, which starts from a chosen source vertex and traverses as deep as possible along each path before backtracking. The DFS constructs a tree rooted at the source vertex (or a forest rooted at multiple sources if the graph is not connected). However, in a temporal graph, even such a simple graph traversal problem becomes very complicated due to the presence of temporal information on the edges and the existence of multiple edges between two vertices.

DFS in a non-temporal graph is rather simple, which starts from a chosen source vertex and traverses as deep as possible along each path before backtracking. The DFS constructs a tree rooted at the source vertex. However, in a temporal graph, even such a simple graph traversal problem becomes very complicated due to the presence of temporal information on the edges and the existence of multiple edges between two vertices.

In Section \ref{sec:intro}, we have shown that if we ignore the temporal information, the DFS tree obtained will present incorrect information about the temporal graph. Thus, a DFS in a temporal graph must follow the chronological order carried by the temporal edges.

%\begin{definition}[Time Constraint in Temporal Graph Traversal]  \label{de:timeConstraint}
%During a traversal of a temporal graph,  for any two consecutive edges $(u,v,t_1)$ and $(v,w,t_2)$ on any root-to-leave path , we have $t_1 \le t_2$.

We can impose a time constraint when traversing a temporal graph. Naturally, the following time constraint should be imposed: when we traverse as deep as possible along a path in the DFS tree, for any two consecutive edges $(u,v,t_1)$ and $(v,w,t_2)$ on any root-to-leave path, we have $t_1 \le t_2$. This constraint is meaningful because if $t_1 > t_2$, then the edge $(u,v,t_1)$ exists after $(v,w,t_2)$ and at time $t_1$ when we traverse $(u,v,t_1)$, the edge $(v,w,t_2)$ no longer exists (as it existed in the past at time $t_2$). For example, in the graph in Figure \ref{fig:tmpG}, $a$ first sent a message to $b$ at \emph{time 1} and then $b$ forwarded it to $c$ at \emph{time 4}, which naturally gives two chronologically ordered edges $(a, b, 1)$ followed by $(b, c, 4)$. On the contrary, if we first have $(a, b, 6)$, then it should not be followed by $(b, c, 4)$ as this order does not give the correct chronological development of events and may lead to a chaotic time sequence especially when the path grows longer.

Imposing the above-mentioned time constraint during temporal graph traversal probably addresses the problem if there is only a single temporal edge going from one vertex to another vertex. However, the existence of multiple temporal edges between two vertices complicates the problem. Consider again the graph in Figure \ref{fig:tmpG}, there are two temporal edges from $a$ to $b$, and the question is how DFS traverses the two edges, are they treated as tree edges or forward edges? The situation is further complicated as there are also multiple temporal edges connecting among their neighbors, leading to a combinatorial effect. Such tricky cases do not occur in a non-temporal graph, and thus careful investigation is needed to define meaningful and useful DFS in temporal graphs.

\subsection{Definitions of DFS}   \label{ssec:dDefinitions}

%We will now show two forms of DFS in a temporal graph arisen from two forms of temporal graphs defined in Section \ref{sec:notation}: dynamic and static temporal graphs.
%
%%The dynamics of a temporal graph lead to the definition of DFS to be executed dynamically as the temporal graph evolves. We name such a DFS as \emph{dynamic DFS}, which is defined as follows.
%
%We first give the definition of DFS in a dynamic temporal graph as follows.

We first formally define the time constraint on a temporal graph traversal (including both DFS and BFS) as follows.

\begin{definition} [Time Constraint on Traversal]  \label{de:TC}
Let $u$ be the current vertex during a traversal in a temporal graph, and $\sigma(u)$ be the time when $u$ is visited, i.e., the traversal either starts from $u$ as the source vertex at time $\sigma(u)$, or visits $u$ via a temporal edge $(u',u,\sigma(u))$. Given an edge $e=(u,v,t)$, we traverse $e$ only if $\sigma(u) \le t$.
\end{definition}

%The rationale for setting the above-defined time constraint has been explained in Section \ref{ssec:dChallenges}. To address the problem of multiple temporal edges between two vertices, we attempt to allow multiple occurrences of a vertex in a DFS tree, in contrast to a DFS tree in a non-temporal graph in which each vertex appears exactly once. This is reasonable because each vertex is actually active at multiple times when the (multiple) edges are active, although we will show that simply allowing multiple temporal edges between two vertices may not lead to a meaningful definition of DFS since we also need to consider its applications.

The above time constraint was proposed to define temporal paths in \cite{KempeKK02jcss}, and the rationale for setting this time constraint has been explained in Section \ref{ssec:dChallenges}.

To address the problem of multiple temporal edges between two vertices, we allow multiple occurrences of a vertex in a DFS tree, in contrast to a DFS tree in a non-temporal graph in which each vertex appears exactly once. This is reasonable because each vertex is actually active at multiple times when the (multiple) edges are active. We give our first version of DFS in Definition \ref{de:dfs2}.

\begin{definition}[Temporal DFS-v1]  \label{de:dfs2}
Given a temporal graph $G=(V,E)$ and a starting time $t_s$, a DFS in $G$ starting at $t_s$, named as \emph{DFS-v1}, is defined as follows:
\begin{enumerate}
    \item Initialize $\sigma(v)$$=$$\infty$ for all $v$$\in$$V$, and select a source vertex $s$.
    \item Visit $s$ and set $\sigma(s)=t_s$, and go to Step 2(a):
        \begin{enumerate}
            \item \textbf{After visiting a vertex $u$:} Let $E_{u,v}$ be the set of temporal edges going from $u$ to $v$, where each edge $(u,v,t) \in E_{u,v}$ has not been traversed before and $\sigma(u) \le t$.

                If there exists an out-neighbor $v$ of $u$ such that $E_{u,v} \ne \emptyset$, then choose the edge $e=(u,v,t) \in E_{u,v}$, where $t=\min\{t': (u,v,t') \in E_{u,v}\}$, and traverse $e$ and go to Step 2(b).

                If there is no out-neighbor $v$ of $u$ such that $E_{u,v} \ne \emptyset$, then we backtrack to $u$'s predecessor $u'$ (i.e., we have just visited $u$ via the temporal edge $(u',u,t')$) and repeat Step 2(a); or if $u$ is the source vertex, then terminate the DFS. %;if all vertices are visited but there is some edge $(s,s',t')$ not yet traversed, then select $s$ as the source vertex $s$ and go to Step 2; % Note that each temporal edge can be traversed at most once.
            \item \textbf{After traversing a temporal edge $(u,v,t)$:} If $\sigma(v) > t$, we visit the vertex $v$ and set $\sigma(v) = t$, and go to Step 2(a). Else, repeat Step 2(a).
        \end{enumerate}
\end{enumerate}
\end{definition}

Definition \ref{de:dfs2} allows a vertex $v$ to be visited multiple times. The condition ``$\sigma(v) > t$'' in Step 2(b) is necessary to prevent a vertex being both the ancestor and descendant of itself in a DFS tree. For example, if we set ``$\sigma(v) \le t$'', then a DFS of the graph in Figure \ref{fig:tmpG} following the edges $\langle (a, b, 1), (b, c, 4), (c, b, 6), (b, a, 8) \ldots \rangle$ creates a loop $\langle a, b, c, b, a \ldots \rangle$. Thus, setting $\sigma(v)=\infty$ for all $v \ne s$ indicates that initially ``$\sigma(v) > t$'' is satisfied and $v$ can be visited, while setting $\sigma(s)=t_s$ we do not allow $s$ to be visited from any other vertices and hence restrict $s$ to be the root of a DFS tree only.

However, setting the condition ``$\sigma(v) > t$'' alone is not sufficient as there are multiple out-edges we can choose to traverse. Naturally we specify the order of the edges to be traversed to follow the ascending order of the time at which they are active. In addition, since some applications may favor more recent information. Thus, we also allow users to specify a starting time $t_s$ to capture temporal information only at or after $t_s$, while the information before $t_s$ is considered obsolete.

\begin{figure}[!tbp]
\begin{center}
%\hspace{-0mm}
\includegraphics[width = 3in]{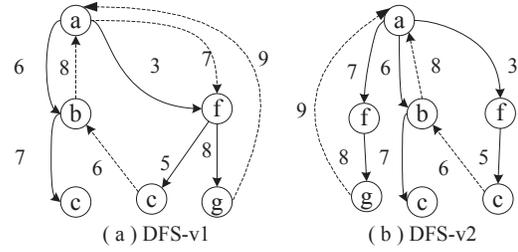}
\caption{DFS in the temporal graph $G$ in Figure \ref{fig:tmpG}} \label{fig:dfs}
\vspace{-4mm}
\end{center}
\end{figure}

Figure \ref{fig:dfs}(a) gives the DFS tree obtained by executing DFS-v1, starting at $t_s=2$, on the graph in Figure \ref{fig:tmpG}. Note that all temporal edges before $t=2$ are neither tree edges nor non-tree edges, as they are considered obsolete. However, it is observed that more recent edges such as $(a, f, 7)$ are not considered as equally as older edges such as $(a, f, 3)$. Thus, if we consider that all temporal edges after a user-specified starting time $t_s$ should receive equal treatments, we will need a new definition of DFS, which we present as follows.

\begin{definition}[Temporal DFS-v2]  \label{de:dfs3}
Given a temporal graph $G=(V,E)$ and a starting time $t_s$, a DFS in $G$ starting at $t_s$, named as \emph{DFS-v2}, is defined as follows:
\begin{enumerate}
    \item Initialize $\sigma(v)$$=$$\infty$ for all $v$$\in$$V$, and select a source vertex $s$.
    \item Visit $s$ and set $\sigma(s)=t_s$, and go to Step 2(a):
        \begin{enumerate}
            \item \textbf{After visiting a vertex $u$:} Let $E_u$ be the set of temporal edges outgoing from $u$, where each edge $(u,v,t) \in E_u$ has not been traversed before and $\sigma(u) \le t$.

                If $E_u \ne \emptyset$, we choose the edge $e=(u,v,t) \in E_u$, where $t=\max\{t': (u,v',t') \in E_u\}$, and traverse $e$ and go to Step 2(b).

                Else (i.e., $E_u = \emptyset$), we backtrack to $u$'s predecessor $u'$ (i.e., we have just visited $u$ via the temporal edge $(u',u,t')$) and repeat Step 2(a); or if $u$ is the source vertex, then terminate the DFS. %;if all vertices are visited but there is some edge $(s,s',t')$ not yet traversed, then select $s$ as the source vertex $s$ and go to Step 2; % Note that each temporal edge can be traversed at most once.
            \item \textbf{After traversing a temporal edge $(u,v,t)$:} If $\sigma(v) > t$, we visit the vertex $v$ and set $\sigma(v) = t$, and go to Step 2(a). Else, repeat Step 2(a).
        \end{enumerate}
\end{enumerate}
\end{definition}

The main difference between Definition \ref{de:dfs3} and Definition \ref{de:dfs2} is that among the set of available outgoing temporal edges from a vertex $u$ in Step 2(a), Definition \ref{de:dfs3} chooses the edges to traverse in reverse chronological order. This may look to be counter intuitive, but we will show that this definition of DFS is meaningful, especially in Section \ref{sec:app} we show how it allows us to answer important path and ``distance'' queries.

Figure \ref{fig:dfs}(b) presents the DFS tree obtained by executing DFS-v2, starting at $t_s=2$, on the graph in Figure \ref{fig:tmpG}. We can see the multiple temporal edges between vertices are equally considered in the DFS and presented in the DFS tree.

%Finally, we note that Definition \ref{de:dfs3} can also allow the flexibility of letting users specify both the starting time $t_s$ and the source vertex, according to different application needs. For example, we can obtain all vertices reachable from a given source vertex $s$ after a specific time $t_s$ by a DFS from $s$ starting at $t_s$. %We will discuss more related applications in details in Section \ref{sec:app}.

\subsection{Notions and Properties Related to DFS}   \label{ssec:dProperties}

Now we present a number of notions related to DFS and some good properties of DFS for temporal graphs.

We first formally define tree edges in a DFS as follows.

\begin{definition} [DFS tree and Tree Edges]  \label{de:treeEdge}
In a DFS of a temporal graph $G=(V,E)$, an edge $e=(u,v,t) \in E$ is a \emph{tree edge} if $e$ is traversed in the DFS and $\sigma(v) > t$ when we traverse $e$ (and then we visit $v$ via $e$ and set $\sigma(v)=t$).

The DFS constructs a \emph{DFS tree}, $T$, which is rooted at the source vertex $s$, where the set of vertices in $T$ is the set of vertices visited in the DFS, and the set of edges in $T$ is the set of all tree edges in the DFS. Since some $v \in V$ may be visited multiple times in the DFS, there may be multiple occurrences of $v$ in $T$. %To distinguish $v$ from an occurrence of $v$, we use $o(v)$ to denote an occurrence of $v$ in $T$.
\end{definition}

%For simplicity, in all subsequent discussions, whenever we mention a vertex $v$ in the DFS tree $T$, we refer to a specific occurrence of $v$ in $T$.

Let $E(t_s)=\{e=(u,v,t): e \in E, t \ge t_s\}$ be the number of temporal edges in  a temporal graph $G=(V,E)$ that are active at or after $t_s$. The following lemmas analyze the bound on the size of the DFS tree. Lemma \ref{le:dOnce} will also be used to analyze the complexity of our algorithms in Section \ref{ssec:dAlgorithms}.

\begin{lemma}   \label{le:dOnce}
In a DFS of $G$ (either by Definition \ref{de:dfs2} or \ref{de:dfs3}), only edges in $E(t_s)$ are traversed and each edge in $E(t_s)$ is traversed at most once.
\end{lemma}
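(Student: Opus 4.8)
The plan is to prove the two claims separately, both following directly from the time constraint and the monotonicity of the $\sigma$ values along any root-to-leaf path in the DFS tree.

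\textbf{Only edges in $E(t_s)$ are traversed.} First I would argue that whenever an edge $e=(u,v,t)$ is traversed during the DFS, it must satisfy $t \ge t_s$. The key observation is a monotonicity invariant: for the current vertex $u$, we always have $\sigma(u) \ge t_s$. I would establish this by induction along the traversal. The base case is the source vertex $s$, for which $\sigma(s) = t_s$ by Step 2 of the definition. For the inductive step, note that a vertex $v$ is visited (and $\sigma(v)$ is set) only via traversing some edge $(u,v,t)$ with $\sigma(v)$ being set to $t$ in Step 2(b); but by the time-constraint clause in Step 2(a) (common to both Definitions \ref{de:dfs2} and \ref{de:dfs3}), an edge $(u,v,t)$ is only eligible for traversal when $\sigma(u) \le t$. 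Combining with the inductive hypothesis $\sigma(u) \ge t_s$, we get $t \ge \sigma(u) \ge t_s$, so $\sigma(v) = t \ge t_s$, which both completes the induction and shows that every traversed edge $e=(u,v,t)$ has $t \ge t_s$, i.e.\ $e \in E(t_s)$.

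\textbf{Each edge in $E(t_s)$ is traversed at most once.} For the second claim I would appeal directly to the bookkeeping built into Step 2(a). In both definitions, the candidate set of edges out of $u$ ($E_{u,v}$ in Definition \ref{de:dfs2}, $E_u$ in Definition \ref{de:dfs3}) is explicitly restricted to those edges that ``have not been traversed before.'' Thus once an edge is traversed, it is permanently excluded from every subsequent candidate set, so it can never be selected and traversed a second time. This gives the ``at most once'' bound immediately.

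\textbf{Anticipated obstacle.} The mechanically delicate point is making the monotonicity invariant airtight across the backtracking behavior: when the DFS backtracks from $u$ to its predecessor $u'$, the ``current vertex'' reverts to $u'$, and I must confirm that $\sigma(u') \ge t_s$ still holds at that moment. Since $\sigma(u')$ was fixed when $u'$ was first visited and is never decreased afterward (a vertex's $\sigma$ value is only ever set once, at its visit), the invariant is preserved under backtracking as well. The only subtlety worth spelling out is that the ``not been traversed before'' guard operates on individual temporal edges rather than on vertex pairs, which is precisely what makes the per-edge counting argument valid even though a vertex may appear multiple times in $T$. Once these two points are noted, the proof is essentially a one-line induction plus the traversal guard, with no real computation required.
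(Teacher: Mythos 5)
Your core argument is correct and is exactly the expansion of what the paper intends: the paper gives no explicit proof, stating only that the lemma ``follows directly from Definition \ref{de:dfs2} or \ref{de:dfs3},'' and your two observations --- the guard $\sigma(u) \le t$ in Step 2(a) combined with $\sigma(\cdot) \ge t_s$, and the ``has not been traversed before'' restriction on the candidate edge sets --- are precisely the two definitional clauses being invoked.

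One correction, however: in your ``anticipated obstacle'' paragraph you assert that ``a vertex's $\sigma$ value is only ever set once, at its visit'' and is never decreased afterward. That is false for temporal DFS: the whole point of Definitions \ref{de:dfs2} and \ref{de:dfs3} is that a vertex may be visited multiple times, and each re-visit via an edge $(u,v,t)$ with $\sigma(v) > t$ resets $\sigma(v)$ to the smaller value $t$ (Step 2(b)); the paper says this explicitly right after Definition \ref{de:dfs2}. Fortunately, your induction does not need that claim. Strengthen the invariant from ``the current vertex satisfies $\sigma(u) \ge t_s$'' to ``every finite $\sigma$ value is at least $t_s$ at every moment of the execution.'' Every event that sets or updates a $\sigma$ value has the form $\sigma(v) := t$ for a traversed edge $(u,v,t)$ with $\sigma(u) \le t$ and $\sigma(u)$ finite, so by induction on the sequence of such events we get $t \ge \sigma(u) \ge t_s$, and the invariant is maintained; backtracking then needs no separate treatment at all, since it never changes any $\sigma$ value. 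With that adjustment your proof is airtight, and the ``at most once'' half stands as written.
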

%\begin{proof}
%The proof follows directly from Definition \ref{de:dfs2} or \ref{de:dfs3}.
%\end{proof}

\begin{lemma}   \label{le:dTreeSize}
Let $T$ be the DFS tree of $G$. Then, $T$ has at most $|E(t_s)|+1$ vertices and edges.
\end{lemma}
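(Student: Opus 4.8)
The plan is to bound the number of tree edges and then relate the vertex count to the edge count using the tree structure. By Definition~\ref{de:treeEdge}, every edge of the DFS tree $T$ is a tree edge, and by Lemma~\ref{le:dOnce} every tree edge belongs to $E(t_s)$ and is traversed at most once. Hence distinct tree edges correspond to distinct edges of $E(t_s)$, which immediately gives that $T$ has at most $|E(t_s)|$ edges. This is the easy half and follows directly from the results stated earlier in the excerpt.

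For the vertex count, I would invoke the standard fact that in any tree the number of vertices equals the number of edges plus one (the tree is connected and acyclic). The DFS tree $T$ is rooted at the source $s$ and, by construction, is connected: every vertex of $T$ other than $s$ is visited precisely via the tree edge by which it enters $T$, so each non-root occurrence has exactly one parent edge. Therefore, if $T$ has $k$ edges it has exactly $k+1$ vertices. Combining with the edge bound $k \le |E(t_s)|$ yields at most $|E(t_s)|+1$ vertices, as claimed.

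The main subtlety I anticipate is not a deep calculation but making sure the argument is honest about \emph{occurrences} rather than vertices of $V$. Since Definition~\ref{de:treeEdge} explicitly allows a vertex $v \in V$ to appear multiple times in $T$, the object $T$ is a tree whose nodes are occurrences of vertices, not the vertices themselves. The key point is that each occurrence (other than the root occurrence of $s$) is created by traversing exactly one tree edge that sets its $\sigma$-value, so occurrences and non-root tree edges are in bijection; this is what justifies treating $T$ as a genuine tree and applying the vertices-equals-edges-plus-one identity to occurrences. I would state this correspondence carefully, since it is the place where the temporal setting differs from the classical DFS tree and where an inattentive argument could miscount.

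With these pieces in place the proof is short: first cite Lemma~\ref{le:dOnce} to bound tree edges by $|E(t_s)|$, then observe the bijection between non-root occurrences and tree edges to conclude $|V(T)| = |E(T)| + 1 \le |E(t_s)| + 1$.
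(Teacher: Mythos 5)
Your proof is correct and follows essentially the same route as the paper: the paper's (one-sentence) argument also bounds tree edges via Lemma~\ref{le:dOnce} and Definition~\ref{de:treeEdge}, and bounds vertex occurrences by the fact that a vertex is visited only when an edge is traversed, i.e., exactly your bijection between non-root occurrences and tree edges. Your write-up just makes the occurrence-versus-vertex bookkeeping explicit, which the paper leaves implicit.
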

%\begin{proof}
%The proof follows directly from Definition \ref{de:treeEdge} and Lemma \ref{le:dOnce}, and the fact that we visit a vertex in the DFS only when we traverse an edge.
%\end{proof}

%We next define ancestor-descendant relationship in a DFS as follows.
%
%%\begin{definition} [Ancestor-Descendant]  \label{de:AD}
%%In a DFS of a temporal graph $G=(V,E)$, for two vertices $u$ and $v$ that are visited in the DFS, $u$ is the \emph{ancestor} of $v$ (or $v$ is the \emph{descendant} of $u$) if there exists a path from $u$ to $v$ using only tree edges.
%%\end{definition}
%
%\begin{definition} [Ancestor-Descendant]  \label{de:AD}
%In a DFS of a temporal graph $G=(V,E)$, for two vertices $u$ and $v$ that are along the same root-to-leaf path of the DFS tree, $u$ is the \emph{ancestor} of $v$ (or $v$ is the \emph{descendant} of $u$) if the sub-path from $u$ to $v$ consists of only tree edges.
%\end{definition}
%
%
%With the definition of ancestor and descendant, we define non-tree edges in a DFS as follows.

The proof of Lemma \ref{le:dOnce} follows directly from Definition \ref{de:dfs2} or \ref{de:dfs3}, while the proof of Lemma \ref{le:dTreeSize} follows directly from Definition \ref{de:treeEdge} and Lemma \ref{le:dOnce}, and the fact that we visit a vertex in the DFS only when we traverse an edge. We next define non-tree edges traversed in a DFS as follows.

\begin{definition} [Non-Tree Edges]  \label{de:nonTreeEdge}
Given a DFS tree $T$ of $G=(V,E)$, an edge $e=(u,v,t) \in E$ is either a tree edge in $T$, or a \emph{non-tree edge} belonging to one of the following four types:
\begin{itemize}
    \item \emph{Forward edge}: $e$ is a \emph{forward edge} if at the time when the DFS traverses $e$, $u$ is already an ancestor of $v$ in $T$.
    \item \emph{Backward edge}: $e$ is a \emph{backward edge} if at the time when the DFS traverses $e$, $u$ is already a descendant of $v$ in $T$.
     \item \emph{Cross edge}: $e$ is a \emph{cross edge} if at the time when the DFS traverses $e$, $u$ is neither an ancestor nor a descendant of $v$ in $T$.
     \item \emph{Non-DFS edge}: $e$ is a \emph{non-DFS edge} if $e$ is not traversed in the DFS.
\end{itemize}
\end{definition}

%Note that the last category of non-DFS edges can exist since a DFS in a temporal graph may not traverse all temporal edges, such as those that are active before the starting time $t_s$. %because of the time sequential constraint

The following lemmas and corollary present some important properties of DFS in a temporal graph.

\begin{lemma}   \label{le:AD}
Given a DFS tree $T$ of $G=(V,E)$, if a vertex $u$ is an ancestor of another vertex $v$ in $T$ (or $v$ is a descendant of $u$), then $\sigma(u) \le \sigma(v)$. Here $u$ and $v$ refer to a specific occurrence of vertex $u$ and $v$ in $T$, respectively.
\end{lemma}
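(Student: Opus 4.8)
The plan is to reduce the ancestor--descendant claim to the parent--child case and then chain the resulting inequalities along the tree path. First I would record the two facts that hold for every tree edge $e=(u,v,t)$ of $T$. By the time constraint embedded in Step~2(a) of both Definition~\ref{de:dfs2} and Definition~\ref{de:dfs3} (that is, Definition~\ref{de:TC}), the edge $e$ is eligible to be traversed only when $\sigma(u)\le t$ holds at that moment. By Definition~\ref{de:treeEdge}, a tree edge is precisely one for which $\sigma(v)>t$ when $e$ is traversed, after which we set $\sigma(v)=t$. Combining these, the occurrence of $v$ created by $e$ satisfies $\sigma(v)=t\ge\sigma(u)$, where $\sigma(u)$ is the value held by the parent occurrence of $u$ at the instant $e$ is traversed. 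This gives $\sigma(u)\le\sigma(v)$ for every parent--child pair, and the general statement then follows by induction on the length of the tree path from $u$ to $v$: writing $u=w_0,w_1,\dots,w_k=v$ for the occurrences along that path, transitivity of $\le$ applied to $\sigma(w_0)\le\sigma(w_1)\le\cdots\le\sigma(w_k)$ yields $\sigma(u)\le\sigma(v)$.

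The delicate point, and the one I expect to be the main obstacle, is that $\sigma$ is a single mutable array whose entries only ever decrease, while the lemma speaks of the value attached to a \emph{specific occurrence} of a vertex. I therefore need to guarantee that the $\sigma(u)$ used in the time-constraint check coincides with the value attached to the parent occurrence, i.e.\ that the $\sigma$-value of a vertex does not change while that vertex sits on the active DFS stack. To this end I would strengthen the statement into an invariant maintained throughout the execution: if the current active path from $s$ is $o_0=s,o_1,\dots,o_k$ with $o_i$ reached by a tree edge of time $t_i$, then the current array value at the vertex of $o_i$ equals $t_i$, and $t_s=\sigma(o_0)\le t_1\le\cdots\le t_k$.

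I would verify this invariant by induction over the DFS operations. Extending the path by a tree edge $(u,v,t_{k+1})$ preserves monotonicity because eligibility forces $\sigma(u)=t_k\le t_{k+1}$, and it preserves the ``value equals tree-edge time'' clause once we observe that the vertex of $v$ cannot already lie on the active path: if it did, say as $o_j$, then its current value would be $t_j\le t_k\le t_{k+1}$, contradicting the tree-edge requirement $\sigma(v)>t_{k+1}$. This is exactly where the condition $\sigma(v)>t$ of Step~2(b) is used, and it also shows that no vertex is simultaneously an ancestor and a descendant of itself along the active path. Consequently, while a vertex remains on the stack no subsequent visit can overwrite its entry, so its array value is stable; backtracking and the traversal of non-tree edges (which leave both the path and $\sigma$ unchanged) trivially preserve the invariant.

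Finally I would cash out the invariant. When the occurrence of $v$ is created, the active stack is exactly the root-to-$v$ path of $T$, so any ancestor occurrence of $u$ appears as some $o_i$ on that stack; the monotonicity clause then gives $\sigma(u)=t_i\le t_{k+1}=\sigma(v)$ for the per-occurrence values, which is the assertion of the lemma. The only genuine work lies in the stability and no-reappearance step above; the remainder is bookkeeping and transitivity.
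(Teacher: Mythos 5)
Your proof is correct and follows essentially the same route as the paper: both reduce the claim to the tree path from $u$ to $v$ and chain the per-edge inequalities $\sigma(w_i) \le t_i \le \sigma(w_{i+1})$, obtained from the time constraint in Step~2(a) and the tree-edge assignment in Step~2(b) of Definitions~\ref{de:dfs2} and~\ref{de:dfs3}. Your additional stack invariant (that $\sigma$-values of vertices on the active path are stable and nondecreasing, so no vertex can reappear on its own root-to-leaf path) is a more rigorous justification of the per-occurrence semantics that the paper's proof asserts implicitly, and which the paper only addresses separately in Lemma~\ref{le:ADA}.
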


%\begin{lemma}   \label{le:AD}
%Given a DFS tree $T$ of $G=(V,E)$, if a vertex $u$ is an ancestor of another vertex $v$ in $T$ (or $v$ is a descendant of $u$), then $\sigma(u) \le \sigma(v)$.
%\end{lemma}

\begin{proof}
If $u$ is an ancestor of $v$, then there exists a path $\langle u=w_1, w_2, \ldots, w_p=v \rangle$ such that for each edge $e_i=(w_i, w_{i+1},t_i)$ on the path for $1 \le i < p$, we have $\sigma(w_i) \le t_i \le \sigma(w_{i+1})$ by Steps 2(a) and 2(b) of Definitions \ref{de:dfs2} and \ref{de:dfs3} since $e_i$ is a tree edge. The proof follows as $\sigma(w_i) \le t_i \le \sigma(w_{i+1})$, for $1 \le i < p$, implies that $\sigma(u=w_1) \le \sigma(v=w_p)$.
%If $u$ is an ancestor of $v$, then by Definition \ref{de:AD} there exists a path $P=\langle (u=w_1,w_2,t_1), (w_2, w_3, t_2), \ldots, (w_p,w_{p+1}=v,t_p) \rangle$ such that for $1 \le i \le p$, $\sigma(w_i) \le t_i \le \sigma(w_{i+1})$ by Steps 2(a) and 2(b) of Definitions \ref{de:dfs1}-\ref{de:dfs3} since all edges on $P$ are tree edges. The proof follows as $\sigma(w_i) \le t_i \le \sigma(w_{i+1})$, for $1 \le i \le p$, implies that $\sigma(u=w_1) \le \sigma(v=w_{p+1})$.
\end{proof}

\begin{lemma}   \label{le:ADA}
Given a DFS tree $T$ of $G=(V,E)$, a vertex $u$ cannot be both an ancestor and a descendant of another vertex $v$ along the same root-to-leaf path.
% here "vertex" means itself. multiple occurrence
\end{lemma}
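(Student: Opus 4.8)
The plan is to prove Lemma~\ref{le:ADA} by contradiction, leveraging the strict-inequality structure built into the DFS definitions. Suppose, for contradiction, that along a single root-to-leaf path in $T$ some vertex $u$ occurs both as an ancestor and as a descendant of another vertex $v$. Since $T$ is a tree and both occurrences lie on the same root-to-leaf path, the path contains a segment starting at the ancestor occurrence of $u$, passing through $v$, and reaching the descendant occurrence of $u$. All edges along this segment are tree edges, so Lemma~\ref{le:AD} applies to each ancestor-descendant pair along it.

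First I would set up the notation carefully, distinguishing occurrences: let $u_1$ denote the ancestor occurrence of $u$ and $u_2$ the descendant occurrence, with $v$ lying strictly between them on the path. Applying Lemma~\ref{le:AD} to the pair $(u_1, v)$ gives $\sigma(u_1) \le \sigma(v)$, and applying it to $(v, u_2)$ gives $\sigma(v) \le \sigma(u_2)$, hence $\sigma(u_1) \le \sigma(u_2)$. This chain of weak inequalities by itself is not a contradiction, so the crux is to extract a \emph{strict} inequality somewhere.

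The strictness comes from Step~2(b) of Definitions~\ref{de:dfs2} and~\ref{de:dfs3} together with Definition~\ref{de:treeEdge}: a tree edge $e=(w,w',t)$ is traversed only when $\sigma(w') > t$ at the moment of traversal, after which we set $\sigma(w')=t$. Combined with the time constraint $\sigma(w) \le t$, the incoming tree edge into the descendant occurrence $u_2$, say $(w, u, t)$, forces $\sigma(u_2) = t \le \sigma(v) - \epsilon$-type reasoning; more precisely, I would argue that because $v$ lies between the two occurrences and $u_1$ is a \emph{proper} ancestor of $v$ which is in turn a \emph{proper} ancestor of $u_2$, at least one of the edges on the segment is a genuine downward tree edge that strictly decreased the $\sigma$-value relative to the initial $\sigma=\infty$ setting. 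The cleanest route is to observe that the ancestor occurrence $u_1$ must have been assigned $\sigma(u_1)$ as a finite value (or $t_s$ if $u_1$ is the root), and the descendant occurrence $u_2$ was created by traversing an edge $(w,u,t)$ only because the then-current value $\sigma(u)$ satisfied $\sigma(u) > t$. But by Lemma~\ref{le:AD} applied down the path, $\sigma(u_1) \le \sigma(w) \le t = \sigma(u_2)$, so at the moment $u_2$ was visited we had $\sigma(u) = \sigma(u_1) \le t$, contradicting the requirement $\sigma(u) > t$ needed to traverse the tree edge into $u_2$. This directly contradicts the possibility of creating the descendant occurrence $u_2$.

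The main obstacle will be handling the bookkeeping of \emph{which} value $\sigma(u)$ holds at the instant the descendant occurrence is created, since $\sigma(u)$ is a single variable that gets overwritten across the multiple occurrences of $u$; I must argue that between visiting $u_1$ and attempting to visit $u_2$, the value $\sigma(u)$ cannot have been reset to $\infty$ or to anything larger than $\sigma(u_1)$, because $\sigma$ is only ever decreased (never increased) once $u$ is first visited. Establishing this monotone-decrease invariant for $\sigma(u)$ along the DFS execution is the technical heart of the argument, after which the strict-versus-weak inequality clash yields the contradiction immediately.
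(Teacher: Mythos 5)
Your proof is correct and takes essentially the same route as the paper's: both arguments focus on the last tree edge $(w,u,t)$ entering the descendant occurrence of $u$, chain the inequalities $\sigma(u) \le \sigma(v) \le \sigma(w) \le t$ along the tree edges (Lemma~\ref{le:AD}), and contradict the tree-edge requirement $\sigma(u) > t$ from Step~2(b). The monotone-decrease invariant for $\sigma(u)$ that you flag as the technical heart is a bookkeeping detail the paper's proof leaves implicit, but the core contradiction is identical.
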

\begin{proof}
%then by Lemma \ref{le:AD} we have $\sigma(u) \le \sigma(v)$ and $\sigma(v) \le \sigma(u)$
%If $u$ is both an ancestor and a descendant of $v$ along the same root-to-leaf path, there exists a path $\langle u, \ldots, v, \ldots, w, u \rangle$ such that each edge on the path is a tree edge. Consider the last edge on the path, i.e., $e=(w,u,t)$. At the time when we traverse $e$ from $w$ to $u$, we have $\sigma(u) \le \sigma(v) \le \sigma(w) \le t$ by Steps 2(a) of Definitions \ref{de:dfs1}-\ref{de:dfs3}. Since $e$ is a tree edge, we require $\sigma(u) > t$ when we traverse $e$ (right before we visit $u$ via $e$). However, $\sigma(u) > t$ implies that $u$ cannot be visited via $e$ according to Steps 2(b) of Definitions \ref{de:dfs1}-\ref{de:dfs3}, which is a contradiction to the definition of tree edges.

If $u$ is both an ancestor and a descendant of $v$ along the same root-to-leaf path, there exists a path $\langle u, \ldots, v, \ldots, w, u \rangle$ such that each edge on the path is a tree edge. Consider the last edge on the path, i.e., $e=(w,u,t)$. At the time when we traverse $e$ from $w$ to $u$, we have $\sigma(u) \le \sigma(v) \le \sigma(w) \le t$ by Steps 2(a) of Definitions \ref{de:dfs2} and \ref{de:dfs3}. Since $e$ is a tree edge, we require $\sigma(u) > t$ when we traverse $e$ (right before we visit $u$ via $e$), which contradicts to  $\sigma(u) \le t$.
\end{proof}

\begin{corollary}   \label{co:5types}
Given a temporal graph $G=(V,E)$, a DFS of $G$ partitions $E$ into five disjoint subsets.
\end{corollary}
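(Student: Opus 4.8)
The plan is to verify the two defining requirements of a partition — that the five classes of Definition \ref{de:nonTreeEdge} are collectively exhaustive (every edge receives at least one label) and pairwise disjoint (no edge receives two). I would organize the whole argument around a single decision procedure applied to an arbitrary edge $e=(u,v,t)\in E$, branching only on mutually exclusive conditions, so that both exhaustiveness and disjointness fall out of the structure of that procedure.

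For exhaustiveness, I would first split on whether $e$ is traversed during the DFS. If $e$ is never traversed it is, by definition, a non-DFS edge; by Lemma \ref{le:dOnce} this class in particular absorbs every edge not in $E(t_s)$, together with any edge in $E(t_s)$ that the search simply never reaches. If $e$ is traversed, I would split on the value of $\sigma(v)$ at the moment of traversal: if $\sigma(v)>t$ then $e$ is a tree edge by Definition \ref{de:treeEdge}; otherwise $\sigma(v)\le t$ and $e$ is a traversed non-tree edge. For such an edge I would appeal to the trichotomy ``$u$ is an ancestor of $v$'', ``$u$ is a descendant of $v$'', ``$u$ is neither'' in the current tree $T$, which is exhaustive by construction and labels $e$ as a forward, backward, or cross edge respectively. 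Since every branch terminates in exactly one of the five labels, every edge is labelled.

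Disjointness then follows because the branch conditions are mutually exclusive. The non-DFS class is separated from the other four since the latter are all traversed. The tree class is separated from the forward/backward/cross classes because a tree edge satisfies $\sigma(v)>t$ at traversal time whereas a traversed non-tree edge satisfies $\sigma(v)\le t$. It remains to separate forward, backward, and cross among themselves; the cross class is defined as the complement of ``ancestor or descendant'', so the only genuine content is that the forward and backward classes cannot coincide, i.e. that $u$ cannot be simultaneously an ancestor and a descendant of $v$ in $T$. This is exactly where Lemma \ref{le:ADA} enters.

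The main obstacle is this last point, made delicate by the fact that $v$ may have several occurrences in $T$. Since $u$ is the current vertex it lies on the active root-to-leaf path, so if one occurrence of $v$ were an ancestor of $u$ while another occurrence were a descendant of $u$, then extending the latter occurrence down to a leaf would place both occurrences of $v$ and the vertex $u$ on one common root-to-leaf path, with $u$ sitting strictly between them. That configuration makes $u$ both an ancestor and a descendant of $v$ along a single root-to-leaf path, which Lemma \ref{le:ADA} forbids. Hence forward and backward are disjoint, and the five classes form a partition of $E$ into disjoint subsets.
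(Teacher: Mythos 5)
Your proposal is correct and follows essentially the same route as the paper, which simply states that the result ``follows directly from Definitions \ref{de:treeEdge} and \ref{de:nonTreeEdge}, and Lemma \ref{le:ADA}''; your case analysis on traversed/non-traversed and $\sigma(v)>t$ versus $\sigma(v)\le t$ is exactly what those definitions encode, and your use of Lemma \ref{le:ADA} to rule out an edge being simultaneously forward and backward (via the two occurrences of $v$ lying on one root-to-leaf path through $u$) is precisely the role that lemma plays in the paper's argument. Your write-up just makes explicit the details the paper leaves implicit.
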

\begin{proof}
The proof follows directly from Definitions \ref{de:treeEdge} and \ref{de:nonTreeEdge}, and Lemma \ref{le:ADA}.
\end{proof}

%According to Corollary \ref{co:5types}, given a temporal graph $G=(V,E)$, a DFS of $G$ classifies the edges in $E$ into five categories and each edge belongs to exactly one category.
%
%
%Using example to illustrate, solid lines in Figure \ref{fig:dfs} are all tree edges; $(a,b,6)$, $(b,c,7)$ and $(a,f,7)$ are forward edges in Figure \ref{fig:dfs}(a); $(b,a,8)$ is backward edge and $(c,b,6)$ is cross edge in Figure \ref{fig:dfs}(b)(c); while $(a,b,1)$ $(b,c,4)$ and $(f,h,2)$ are non-DFS edges in Figure \ref{fig:dfs}(b)(c). Some temporal edges can not be traversed due to time sequential constraint, e.g., $(b,c,4)$ is non-DFS edge even if time instance is bigger than starting time, i.e.,$4>2$. It is reasonable to not include such temporal edges when doing graph traversal from source vertex $s$ since we keep all useful information concerning $s$ and give a more detailed picture on information fusion, journey forming, etc. Also noted that each temporal edge may belongs to different categories in different versions of DFS, e.g.,$(c,b,6)$ is backward edge in Figure \ref{fig:dfs}(a) while it is a cross edge in Figure \ref{fig:dfs}(b)(c).

%According to Corollary \ref{co:5types}, given a temporal graph $G=(V,E)$, a DFS of $G$ classifies the edges in $E$ into five categories and each edge belongs to exactly one category.

We illustrate the concepts by an example. The solid lines in Figure \ref{fig:dfs} are all tree edges; $(a,f,7)$  in Figure \ref{fig:dfs}(a) is a forward edge; $(b,a,8)$ is a backward edge and $(c,b,6)$ is a cross edge in Figures \ref{fig:dfs}(a)-(b); while $(a,b,1)$, $(b,c,4)$ and $(f,h,2)$ are non-DFS edges in Figures \ref{fig:dfs}(a)-(b). Some temporal edges cannot be traversed due to time sequential constraint, e.g., $(b,c,4)$ is non-DFS edge even if it is active after $t_s=2$. It is reasonable not to include such temporal edges since we only keep all useful information concerning DFS starting from $s$. Also note that a temporal edge may belong to different categories for different versions of DFS, e.g., $(a,f,7)$ is a forward edge in Figure \ref{fig:dfs}(a) but a tree edge in Figure \ref{fig:dfs}(b).

We next define the notion of cycle in a temporal graph. Similar to the time constraint on temporal graph traversal given in Definition \ref{de:TC}, cycles in a temporal graph also follow sequential time constraint as defined below. For simplicity, in the following discussion on cycles, whenever we mention a vertex $v$ in a DFS tree $T$, we refer to a specific occurrence of $v$ in $T$.

\begin{definition} [Temporal Cycle]  \label{de:cycle}
Given a temporal graph $G$, a \emph{cycle} $C$ in $G$ is given by a sequence of temporal edges $C=\langle (s=w_1, w_2, t_1), (w_2, w_3, t_2), \ldots, (w_c, w_{c+1}=s, t_c) \rangle$, where $s$ is the start and end vertex of $C$, and $t_1 \leq t_2 \leq \dots \leq t_c$.
\end{definition}

Different from cycles in a non-temporal graph, cycles in a temporal graph have a start vertex in order to satisfy the sequential time constraint. Note that we cannot pick another vertex in the cycle without violating the sequential time constraint. For example, if we choose $v_2$ in $C$ to be the start vertex, without considering the temporal information $C$ is still a cycle, but with the temporal information we have $t_2 \leq \dots \leq t_c \ge t_1$.% However, given a sequence with $t_2 \leq \dots \leq t_c \ge t_1$, we can easily find the start vertex with one scan of the sequence.

%A circle, e.g., $u=u_0 \stackrel{t_1}{\longrightarrow} u_1 \stackrel{t_2}{\longrightarrow} u_2 \stackrel{t_3}{\longrightarrow}u_0=u$ in the $\mathbf{example}$, indicates $u_0$ delivers information  at $t_1$ while it can get information feedback at $t_3$ w.r.t. information fusion; while it indicates a person leaves $u$ at $t_1$ and goes back to $u$ at $t_3$, the difference of which may be referred as \emph{round trip time} w.r.t. flight scheduling. Noted that  start vertex $u$ and source vertex $s$ can be the same vertex. However, we must clarify that not all circles can be enumerated using backward edge.

A temporal cycle, e.g., $\langle (s=w_1, w_2, t_1), (w_2, w_3, t_2), (w_3, w_4=s, t_3) \rangle$, corresponding to $(\langle (s=a, f, 7), (f, g, 8), (g, a=s, 9) \rangle$ in Figure \ref{fig:dfs}(b), indicates that $s$ delivers information at $t_1$ while it gets information feedback at $t_3$ w.r.t. information fusion. In another application such as flight scheduling, the temporal cycle indicates a person leaving $s$ at $t_1$ and returning $s$ at $t_3$, where the difference between $t_1$ and $t_3$ is referred as \emph{round trip time}.

%The following lemma states a relationship between some edge types and cycles.

The following lemma is useful for detecting temporal cycles.

\begin{lemma}   \label{le:cycle}
In a DFS of a temporal graph $G=(V,E)$, if a temporal edge $(u,v,t)$ is a backward edge, then $C=\langle (v=w_1, w_2, t_1), (w_2, w_3, t_2), \ldots, (w_{c-1}, w_c, t_{c-1}), (w_c, w_{c+1}, t_c)=(u,v,t) \rangle$ is a cycle in $G$, where $v$ is the start and end vertex of $C$.
\end{lemma}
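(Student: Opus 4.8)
The plan is to verify that the edge sequence $C$ is genuinely a temporal cycle in the sense of Definition~\ref{de:cycle}, which requires establishing two things: that $C$ forms a closed walk starting and ending at $v$, and that its time labels are non-decreasing, i.e. $t_1 \le t_2 \le \cdots \le t_{c-1} \le t_c = t$. The closure is essentially structural and follows from the definition of a backward edge. First I would note that, by Definition~\ref{de:nonTreeEdge}, if $(u,v,t)$ is a backward edge then at the time the DFS traverses it, $u$ is already a descendant of $v$ in the DFS tree $T$. Hence there is a root-to-leaf tree path segment running from $v$ down to $u$, namely $\langle v=w_1, w_2, \ldots, w_c=u\rangle$, all of whose edges $(w_i,w_{i+1},t_i)$ are tree edges. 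Appending the backward edge $(u,v,t)=(w_c,w_{c+1},t_c)$ with $w_{c+1}=v=w_1$ closes the walk back to the start vertex $v$, giving exactly the sequence $C$.

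The remaining task is the time-monotonicity, and I expect this to be the substantive step. For the tree-path portion I would invoke Lemma~\ref{le:AD} together with the traversal rules: since each $(w_i,w_{i+1},t_i)$ is a tree edge, Steps~2(a) and 2(b) of Definitions~\ref{de:dfs2} and~\ref{de:dfs3} guarantee $\sigma(w_i) \le t_i$ and $\sigma(w_{i+1}) = t_i$, so that $t_i = \sigma(w_{i+1}) \le t_{i+1}$ for consecutive tree edges; this yields $t_1 \le t_2 \le \cdots \le t_{c-1}$, exactly as in the proof of Lemma~\ref{le:AD}. The only genuinely new inequality is the last one, $t_{c-1} \le t_c = t$. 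Here I would argue from the time constraint on traversal (Definition~\ref{de:TC}): because the DFS actually traverses the backward edge $(u,v,t)$ from $u=w_c$, we must have $\sigma(u) \le t$. Since $u$ is visited via the tree edge $(w_{c-1},w_c,t_{c-1})$, Step~2(b) sets $\sigma(w_c) = t_{c-1}$, whence $t_{c-1} = \sigma(w_c) = \sigma(u) \le t = t_c$, completing the chain of inequalities.

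The main obstacle, and the point deserving care, is the last inequality: one must be sure that the time constraint $\sigma(u)\le t$ applies. This is not automatic from $(u,v,t)$ merely being an edge of $G$; it holds precisely because a backward edge is by definition \emph{traversed} by the DFS, and the traversal rule in Step~2(a) only ever selects edges $(u,v,t)$ with $\sigma(u)\le t$. Thus the classification of $(u,v,t)$ as a non-tree (backward) edge rather than a non-DFS edge is what supplies the needed bound. With $t_1 \le \cdots \le t_{c-1} \le t_c$ established and the walk closed at $v$, the sequence $C$ satisfies both requirements of Definition~\ref{de:cycle} and is therefore a temporal cycle with start and end vertex $v$, as claimed.
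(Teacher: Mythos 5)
Your proposal is correct and follows essentially the same route as the paper's own proof: both obtain the tree path from $v$ down to $u$ from the definition of a backward edge, derive $t_1 \le \cdots \le t_{c-1} \le \sigma(w_c)$ from the traversal rules in Steps 2(a) and 2(b), and close the chain with $\sigma(u) \le t$ because the backward edge is actually traversed. Your explicit emphasis that the final inequality comes from the edge being \emph{traversed} (rather than merely existing in $G$) is exactly the point the paper uses, just stated more carefully.
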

\begin{proof}
%When a backward edge $(u,v,t)$ is detected, it means $v$ is the ancestor of $u$ as defined above, indicating a journey $\mathcal{J}= v \stackrel{t_1}{\longrightarrow} u_1 \stackrel{t_2}{\longrightarrow} ...  \stackrel{t_{n-1}}{\longrightarrow} u_{n-1}  \stackrel{t_n}{\longrightarrow} u$ exists using only tree edges. Because of time sequential constraint, $t \geq t_n$, where $t$ refers to the time instance of backward edge. Hence we can form a circle $v \stackrel{t_1}{\longrightarrow} u_1 \stackrel{t_2}{\longrightarrow} ... \stackrel{t_n}{\longrightarrow} u\stackrel{t}{\longrightarrow} v$.
If $(w_c, w_{c+1}, t_c)=(u,v,t)$ is a backward edge, by Definition \ref{de:nonTreeEdge} $u$ is a descendant of $v$, which means that there is a path $\langle (v=w_1, w_2, t_1), (w_2, w_3, t_2), \ldots, (w_{c-1}, w_c=u, t_{c-1}) \rangle$ and each $(w_i, w_{i+1}, t_i)$ is a tree edge for $1 \le i < c$. Thus, by Steps 2(a) and 2(b) of Definitions \ref{de:dfs2} and \ref{de:dfs3}, we have $\sigma(w_i) \le t_i \le \sigma(w_{i+1})$ and hence $t_1 \leq t_2 \leq \dots \le \sigma(w_c)$. Since $(w_c, w_{c+1}, t_c)=(u,v,t)$ is traversed in the DFS, we have $\sigma(w_c) \le t_c$. Thus, $t_1 \leq t_2 \leq \dots \le t_c$ and $C$ is a cycle in $G$.
\end{proof}

\if 0

Apart from cycle, we also define the notion of semi-cycle in a temporal graph, given as follows.

\begin{definition} [Semi-Cycle in Temporal Graph]  \label{de:semiCycle}
Let $G$ be a temporal graph, and $C_1=\langle (u=w_1, w_2, t_1), (w_2, w_3, t_2), \ldots, (w_k, w_{k+1}=v, t_k) \rangle$ and $C_2=\langle (u=w'_1, w'_2, t'_1), (w'_2, w'_3, t'_2), \ldots, (w'_l, w'_{l+1}=v, t'_l) \rangle$ be two sequences of temporal edges in $G$, where $w_i \ne w'_j$ for $1 < i \le k$ and $1 < j \le l$, $u \ne v$, and $u$ is the common start vertex and $v$ is the common end vertex of $C_1$ and $C_2$. If $t_1 \leq t_2 \leq \dots \leq t_k$ and $t'_1 \leq t'_2 \leq \dots \leq t'_l$, then joining $C_1$ and $C_2$ by the common vertices $u$ and $v$ forms a \emph{semi-cycle} in $G$.
\end{definition}

Different from cycles, semi-cycles in a temporal graph have a start vertex and an end vertex. A semi-cycle implies that there are two different paths by which information can be delivered (or two travel routes, etc.) from the start vertex to the end vertex.

The following lemma states a relationship between some edge types and semi-cycles.

\begin{lemma}   \label{le:semiCycleF}
In a DFS of a temporal graph $G=(V,E)$, if a temporal edge $(u,v,t)$ is a forward edge, then $C_1=\langle (u=w_1, w_2, t_1), (w_2, w_3, t_2), \ldots, (w_k, w_{k+1}=v, t_k) \rangle$ and $C_2=\langle (u,v,t) \rangle$ form a semi-cycle in $G$, where $u$ is the start vertex and $v$ is the end vertex of the semi-cycle.
\end{lemma}
\begin{proof}
Since $(u,v,t)$ is a forward edge, by Definition \ref{de:nonTreeEdge} $u$ is an ancestor of $v$ and hence all edges in $C_1$ are tree edges according to Definition \ref{de:treeEdge}, which implies $t_1 \leq t_2 \leq \dots \leq t_k$. Thus, both $C_1$ and $C_2$ satisfy the definition of a semi-cycle given by Definition \ref{de:semiCycle} (for $C_2$ we have $l=1$).
\end{proof}

\begin{lemma}   \label{le:semiCycleC}
In a DFS of a temporal graph $G=(V,E)$, if a temporal edge $(u,v,t)$ is a cross edge, let $s$ be the first common ancestor of $u$ and $v$, then $C_1=\langle (s=w_1, w_2, t_1), (w_2, w_3, t_2), \ldots, (w_k, w_{k+1}=v, t_k) \rangle$ and $C_2=\langle (s=w'_1, w'_2, t'_1), (w'_2, w'_3, t'_2), \ldots, (w'_{l-1}, w'_l=u, t'_{l-1}), (w'_l, w'_{l+1}, t'_l)=(u,v,t) \rangle$ form a semi-cycle in $G$, where $s$ is the start vertex and $v$ is the end vertex of the semi-cycle.
\end{lemma}
\begin{proof}
If $(u,v,t)$ is a cross edge, then both $u$ and $v$ must be visited in the DFS. Also, $\sigma(v) \le \sigma(u)$, i.e., $v$ is visited before $u$, otherwise $(u,v,t)$ would be a tree edge. Since $s$ is the ancestor of $u$ and $v$, all edges in $C_1$ and in $C_2 \setminus (u,v,t)$ are tree edges (note that $w'_l=u$), which implies $t_1 \leq t_2 \leq \dots \leq t_k$ and $t'_1 \leq t'_2 \leq \dots \leq t'_{l-1}$. Since $(u,v,t)$ is traversed in the DFS, we have $\sigma(u) \le t=t'_l$. Since $t'_{l-1} \le \sigma(w'_l=u) \le t'_l$, we have $t'_1 \leq t'_2 \leq \dots \leq t'_{l-1} \le t'_l$. Thus, both $C_1$ and $C_2$ satisfy the definition of a semi-cycle given by Definition \ref{de:semiCycle}.
\end{proof}

\fi

The following definition and lemma are related to reachability in a temporal graph.

\begin{definition} [Temporal Graph Reachability]  \label{de:reachability}
Let $G$ be a temporal graph. A vertex $u$ can \emph{reach} another vertex $v$ (or $v$ is \emph{reachable} from $u$) in $G$ if there is a path from $u$ to $v$ in $G$ such that traversing the path starting from $u$ to $v$ follows the time constraint defined in Definition \ref{de:TC}.
\end{definition}

\begin{lemma}   \label{le:dReachability}
%Let $V_T$ be the set of distinct vertices (i.e., multiple occurrences of a vertex $v$ are considered as a single $v$) in the DFS tree $T$ constructed by processing either DFS-v1 or DFS-v2 starting from a vertex $s$ in a temporal graph $G=(V,E)$. Let $V_R$ be the set of vertices in $G$ that are reachable from $s$. Then, $V_T = V_R$.
%%all vertices in $T$ are reachable from $s$ in $G$ and the simple path from $s$ to any vertex $v$ in $T$ shows a path in $G$ by which $s$ can reach $v$.

Let $V_T$ be the set of distinct vertices (i.e., multiple occurrences of a vertex $v$ are considered as a single $v$) in the DFS tree $T$ of $G$ (by DFS-v1 or DFS-v2), rooted at a source $s$. Let $V_R$ be the set of vertices in $G$ that are reachable from $s$. Then, $V_T = V_R$.
%all vertices in $T$ are reachable from $s$ in $G$ and the simple path from $s$ to any vertex $v$ in $T$ shows a path in $G$ by which $s$ can reach $v$.
\end{lemma}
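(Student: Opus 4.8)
The plan is to prove the set equality $V_T = V_R$ by establishing the two inclusions $V_T \subseteq V_R$ and $V_R \subseteq V_T$ separately, phrasing the argument so that a single treatment covers both DFS-v1 and DFS-v2.

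The inclusion $V_T \subseteq V_R$ is the easy direction. I would take any distinct vertex $v$ occurring in $T$ together with the tree path $\langle s=w_1, w_2, \ldots, w_p=v \rangle$ from the root to that occurrence, with tree edges $(w_i, w_{i+1}, t_i)$. By Step~2(a) of Definitions~\ref{de:dfs2} and \ref{de:dfs3}, each tree edge is traversed under the time constraint, so $\sigma(w_i) \le t_i$; this is exactly the relation already exploited in Lemma~\ref{le:AD}. Hence this root-to-$v$ path is a temporal path from $s$ to $v$ satisfying Definition~\ref{de:TC}, and by Definition~\ref{de:reachability} we obtain $v \in V_R$.

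The inclusion $V_R \subseteq V_T$ is the substantive direction, and the main obstacle lies here: because DFS-v2 selects outgoing edges in reverse chronological order, a vertex may first be visited ``too late'' (with a large $\sigma$), so that an out-edge lying on a reaching path temporarily violates the time constraint and is skipped. To handle this I would first isolate the following backtracking property as the key step. For any visited vertex $u$, let $\delta = \sigma_{\min}(u)$ denote its smallest (hence final) $\sigma$-value over the whole run. At the occurrence where $\sigma(u)$ is set to $\delta$, the search enters Step~2(a) with threshold $\delta$ and, by definition of that step, keeps traversing untraversed out-edges $(u, w, t)$ with $t \ge \delta$ until none remain before backtracking; since the DFS terminates (Lemma~\ref{le:dOnce}), it follows that \emph{every} out-edge $(u, w, t)$ with $t \ge \delta$ is eventually traversed, and by Step~2(b) each such traversal leaves $w$ visited with $\sigma_{\min}(w) \le t$. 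The point is that it is the minimum-$\sigma$ occurrence of $u$, not the first and possibly late one, that guarantees all time-feasible out-edges are explored, which is precisely what the multiple-occurrence mechanism of Definitions~\ref{de:dfs2} and \ref{de:dfs3} buys us.

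With this property in hand, completeness follows by a short induction. Given $v \in V_R$, fix any witnessing temporal path $\langle s=u_0, u_1, \ldots, u_k=v \rangle$ with edges $(u_i, u_{i+1}, \tau_i)$ and $t_s \le \tau_0 \le \tau_1 \le \cdots \le \tau_{k-1}$. I would prove by induction on $i$ that $u_i$ is visited with $\sigma_{\min}(u_i) \le \tau_i$: the base case uses $\sigma_{\min}(s) = t_s \le \tau_0$, and the inductive step applies the backtracking property to the edge $(u_i, u_{i+1}, \tau_i)$, whose time satisfies $\tau_i \ge \sigma_{\min}(u_i)$ by the hypothesis and $\tau_i \le \tau_{i+1}$ by monotonicity of the path. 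One final application to the last edge shows that $u_k = v$ is visited, i.e.\ $v \in V_T$. Combining the two inclusions yields $V_T = V_R$. The only care needed is the treatment of the root itself---$s$ is never revisited, so $\sigma_{\min}(s) = t_s$---and the degenerate case $v = s$, both of which are immediate.
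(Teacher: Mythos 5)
Your proof is correct and takes essentially the same route as the paper's: the easy inclusion $V_T \subseteq V_R$ via root-to-$v$ tree paths, and the converse by induction along a witnessing temporal path, showing each $u_i$ ends up visited with final $\sigma$-value at most $\tau_{i-1} \le \tau_i$, so the next edge on the path must eventually be traversed. The one difference is that you isolate and justify, as an explicit ``backtracking property'' at the minimum-$\sigma$ occurrence of $u_i$, the step the paper merely asserts (``the edge must be traversed according to Step 2(a)''); this is a sharpening of the same argument, not a different one.
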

\begin{proof}
%First, $V_T \subseteq V_R$ because the simple path in $T$ from $s$ to each $v \in V_T$ is a path in $G$ that satisfies the definition of reachability from $s$ to $v$ as given in Definition \ref{de:reachability}. Next, we prove $V_R \subseteq V_T$, i.e., if there exists a path $\langle (s=w_1, w_2, t_1), (w_2, w_3, t_2), \ldots, (w_{c-1}, w_c=v, t_{c-1}) \rangle$ from $s$ to each $v \in V_R$, then $v \in V_T$. When visiting $s$, edge $(s=w_1, w_2, t_1)$ must be traversed according to Step 2(a) DFS-v1 or DFS-v2 since $t_1 \geq t_s$, so $w_2$ must occur in DFS tree with $\sigma(w_2) \leq t_1$. Then edge $(w_2, w_3, t_2)$ must be traversed since $t_2 \geq t_1 \geq \sigma(w_2)$, so $w_3$ must occur in DFS tree with $\sigma(w_3) \leq t_2$. Recursively, we can conclude that $w_c=v$ must occur in DFS tree with $\sigma(v) \leq t_{c-1}$, i.e., $v \in V_T$. Thus,$V_T = V_R$.

First, $V_T \subseteq V_R$ because the simple path in $T$ from $s$ to each $v \in V_T$ is a path in $G$ that satisfies the definition of reachability from $s$ to $v$ as given in Definition \ref{de:reachability}. Next, we prove $V_R \subseteq V_T$, i.e., if there exists a path $\langle (s=w_1, w_2, t_1), (w_2, w_3, t_2), \ldots,$ $(w_k, w_{k+1}=v, t_k) \rangle$ in $G$ from $s$ to each $v \in V_R$, then $v \in V_T$. When visiting $s$ in the DFS, the edge $(s=w_1, w_2, t_1)$ must be traversed according to Step 2(a) of DFS-v1 or DFS-v2 since $t_1 \geq t_s$, which implies that $w_2$ must occur in $T$ with $\sigma(w_2) \leq t_1$ ($w_2$ could be visited via another edge $(s=w_1, w_2, t')$ where $t' < t_1$). Then, $(w_2, w_3, t_2)$ must be traversed since $t_2 \geq t_1 \geq \sigma(w_2)$, and thus $w_3$ must occur in $T$ with $\sigma(w_3) \leq t_2$. By recursive analysis we conclude that $w_{k+1}=v$ must occur in $T$ with $\sigma(v) \leq t_k$, i.e., $v \in V_T$. Thus, $V_T = V_R$.
\end{proof}

In Figure \ref{fig:tmpG}, if $t_s=2$, then the set of reachable vertices from $a$ is the whole vertex set except $h$, which is exactly the set of vertices in the DFS trees in Figures \ref{fig:dfs}(a)-(b).

\subsection{Algorithms and Complexity of DFS}   \label{ssec:dAlgorithms}

\if 0
Before describing the algorithms for DFS in a temporal graph, we first introduce a data representation for a temporal graph, namely the temporal adjacency list representation (which is similar to the adjacency list representation of a non-temporal graph \cite{CormenLRS09mit}).

\begin{definition}[Temporal Adjacency List]  \label{de:adjList}
Given a temporal graph $G=(V,E)$, the \emph{temporal adjacency list representation} of $G$ is a ordered set $O_V$ of list defined as follows.

\begin{itemize}
  \item Each list in $O_V$ is identified by a unique vertex $u \in V$, and the set $O_V$ is ordered in ascending order of the IDs of the identifying vertices.
  \item A list in $O_V$, identified by $u \in V$, is denoted by $L(u)$ and defined as $L(u)=\langle (v_1,[t_1(v_1),t_2(v_1), \ldots, t_{d(u,v_1)}(v_1)]), \ldots, (v_k,[t_1(v_k),t_2(v_k), \ldots, t_{d(u,v_k)}(v_k)]) \rangle$, where
      \begin{itemize}
        \item $k=d_{out}(u)$,
        \item $\{v_1, \ldots, v_k\}$ is the set of out-neighbors of $u$ (i.e., $N_{out}(u)$), and
        \item $[t_1(v_i),t_2(v_i), \ldots, t_{d(u,v_i)}(v_i)]$ represents the set of temporal edges from $u$ to $v_i \in N_{out}(u)$, i.e., $\{(u,v_i, t_1), (u,v_i, t_2), \ldots, (u,v_i, t_{d(u,v_i)})\}$.
      \end{itemize}
  \item Each $L(u)$ is first ordered in ascending order of the IDs of the vertices in $N_{out}(u)$, and we have $t_1(v_i) < t_2(v_i) < \ldots < t_{d(u,v_i)}(v_i)$ for each $v_i \in N_{out}(u)$.
\end{itemize}
\end{definition}

An example of the temporal adjacency list representation of the temporal in Figure \ref{fig:tmpG}(a) is shown in Figure \ref{fig:adjList}.

\fi

Before presenting the algorithms for DFS, we first describe the data format for an input temporal graph $G$. Assuming that edges in $G$ are active at time instances $t_1, t_2, \ldots, t_\tau$, where $t_i < t_{i+1}$ for $1 \le i < \tau$. Let $N_{out}(v,t_i)$ be the set of out-neighbors of a vertex $v$ at $t_i$. We assume that each $N_{out}(v,t_{i+1})$ is collected after $N_{out}(v,t_i)$ as time proceeds, and simply concatenated to $N_{out}(v,t_i)$. Thus, for each $v$, the set of temporal out-edges from $v$ in $G$ is stored as $[(N_{out}(v,t_1), t_1), (N_{out}(v,t_2), t_2), \ldots, (N_{out}(v,t_\tau), t_\tau)]$. For example, the out-edges of $a$ in Figure \ref{fig:tmpG}, i.e., $\{(a,b,1),(a,f,3),$ $(a,b,6),(a,f,7)\}$, are stored as $[(b,1),(f,3),(b,6),(f,7)]$, which is ordered chronologically.

In the discussions of all our algorithms, we assume the above-described data format for the input temporal graph.

We now present the algorithms for DFS in a temporal graph. The algorithm for DFS-v1 is in fact rather straightforward following the description of Definition \ref{de:dfs2}. To analyze the complexity and reduce the cost of some operations in the DFS, we first give the following analysis directly based on Definition \ref{de:dfs2}.

If every individual operation in DFS-v1 uses $O(1)$ time, then by Lemma \ref{le:dOnce} we only use $O(|E|+|V|)$ time. However, checking all out-neighbors $v$ of $u$ such that $E_{u,v} \ne \emptyset$ in Step 2(a) takes $O(d_{out}(u))$ on the set of temporal out-edges of $u$, for each time $u$ is visited. Let $n(u)$ be the number of times a vertex $u$ is visited by DFS-v1. Then, DFS-v1 takes $O(|E| + |V| +\sum_{u \in V} (n(u) * d_{out}(u)))=O(|E| * d_{out}(u) + |V|)$ time, since $O(\sum_{u \in V} n(u)) = O(|E|)$ by Lemma \ref{le:dOnce}. We can reduce the time complexity to $O(|E| * \log d_{out}(u) + |V|)$ by using priority queues to select neighbors to be traversed in Step 2(a). In addition, we can do a binary search to choose the edge $e=(u,v,t) \in E_{u,v}$, where $t=\min\{t': (u,v,t') \in E_{u,v}\}$. Since each temporal edge is traversed at most once, the whole binary search costs at most $\sum_{u \in V} \sum_{v \in N_{out}(u)}\sum_{d=1}^{d(u,v)}log (d)=O(|E|*log(d(u,v)))$. Hence the total time complexity is $O(|E| * (\log d_{out}(u)+ \log d(u,v)) + |V|).$

%We can reduce the time complexity to $O(|E(t_s)| * (\log d_{out}(u) + \log d(u,v))+ |V|)$ by using priority queues to select the next temporal edge to be traversed in Step 2(a). However, we omit the details here because we can show that the algorithm for DFS-v2 uses only $O(|E(t_s)| + |V|)$ time, which is asymptotically optimal, and the same algorithm can also be applied to solve DFS-v1 to achieve the same linear time complexity.

Next, we propose a linear-time algorithm for DFS-v2, which is in fact optimal. Importantly, we find that the same algorithm can also be applied to solve DFS-v1 to achieve the same linear time complexity.

To begin with, we first present the following important lemma.

\begin{lemma}   \label{le:dTraversalOrder}
In a DFS of $G=(V,E)$ by Definition \ref{de:dfs3}, for any vertex $u \in V$ and for any two temporal edges $e_i=(u,v_i,t_i)$ and $e_j=(u,v_j,t_j)$, where $v_i, v_j \in N_{out}(u)$, if $e_i$ is traversed before $e_j$ in the DFS, then we have $t_i \geq t_j$.
%In a DFS of $G=(V,E)$ by Definition \ref{de:dfs3}, for any vertex $u \in V$ and any $v \in N_{out}(u)$, and for any two temporal edges $e_i=(u,v,t_i)$ and $e_j=(u,v,t_j)$, if $e_i$ is traversed before $e_j$ in the DFS, then we have $t_i > t_j$.
\end{lemma}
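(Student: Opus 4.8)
The plan is to exploit the greedy maximum-time selection rule in Step 2(a) of Definition \ref{de:dfs3} together with the fact that $\sigma(u)$ can only decrease over the course of the DFS. I would organise the traversals of $u$'s outgoing edges according to the successive \emph{visits} to $u$, where a single visit means one uninterrupted execution of the Step 2(a) loop for $u$ with a fixed value of $\sigma(u)$.

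First I would record two monotonicity facts. (i) Throughout the DFS, $\sigma(u)$ is non-increasing, since it is initialised to $\infty$ (or $t_s$ when $u=s$) and is only ever lowered in Step 2(b). (ii) Within one visit to $u$ the value $\sigma(u)$ stays constant, and distinct visits have strictly decreasing $\sigma$-values. To establish (ii) I would argue that $u$ cannot be re-visited while an invocation of Step 2(a) for $u$ is still active: a re-visit requires traversing an edge $(w,u,t')$ with $\sigma(u) > t'$, but any such $w$ is then a descendant of the active occurrence of $u$, so Lemma \ref{le:AD} gives $\sigma(u) \le \sigma(w) \le t'$, contradicting $\sigma(u) > t'$. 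This is precisely the phenomenon ruled out by Lemma \ref{le:ADA}; it guarantees the recursion into a subtree never disturbs $\sigma(u)$.

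Next I would analyse a single visit with $\sigma(u)=\sigma$. By the rule in Step 2(a) the DFS always picks the untraversed out-edge of maximum time among those with $t \ge \sigma$, and by Lemma \ref{le:dOnce} each edge is removed once traversed; hence the edges traversed during the visit appear in non-increasing order of time, and the visit loops until $E_u=\emptyset$, i.e.\ until every out-edge of $u$ with time $\ge \sigma$ has been traversed. The consequence I want to extract is: any out-edge of $u$ that is still untraversed when the visit ends must have time strictly less than $\sigma$.

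Finally I would combine these to settle the statement. Suppose $e_i=(u,v_i,t_i)$ is traversed before $e_j=(u,v_j,t_j)$, and let $\sigma_i,\sigma_j$ denote $\sigma(u)$ at the two traversal moments; by (i) we have $\sigma_i \ge \sigma_j$. If both edges are handled in the same visit, then at the instant $e_i$ is chosen $e_j$ is still untraversed with $t_j \ge \sigma_i=\sigma_j$, so the maximum-time rule forces $t_i \ge t_j$. Otherwise $e_j$ is traversed in a strictly later visit, so it survived $e_i$'s visit untraversed; by the consequence above $t_j < \sigma_i \le t_i$, the last inequality being the precondition under which $e_i$ was traversed. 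In both cases $t_i \ge t_j$, as required. I expect step (ii) — pinning down that $\sigma(u)$ is constant throughout a visit so that a nested re-visit of $u$ cannot corrupt the greedy ordering — to be the main obstacle; the remaining steps follow directly from the maximum-time rule and Lemma \ref{le:dOnce}.
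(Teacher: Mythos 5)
Your proof is correct and follows essentially the same route as the paper's: the same case split between $e_i,e_j$ being traversed at the same occurrence of $u$ (settled by the reverse-chronological selection rule of Step 2(a)) and at distinct occurrences (settled by the inequality $t_j < \sigma(u_i) \le t_i$, since $e_j$ would otherwise have been traversed at the earlier occurrence). The only difference is that your step (ii) --- invoking Lemma \ref{le:AD} to show a visit to $u$ cannot be interrupted by a nested re-visit, so occurrences of $u$ are processed sequentially with strictly decreasing $\sigma$-values --- spells out a fact the paper's proof uses implicitly when it asserts that $u_i$ must occur before $u_j$ and that $e_j$ would have been traversed when visiting $u_i$ if $t_j \ge \sigma(u_i)$.
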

\begin{proof}
%Give the proof here.
%Since vertex $u$ may have multiple occurrences in DFS tree, we have the following two situations when $e_i$ is traversed before $e_j$ in the DFS. First if $e_i$ and $e_j$ are traversed when visiting the same specified occurrence of $u$, then $t_i > t_j$ because Definition \ref{de:dfs3} chooses the edges to traverse in reverse chronological order in Step 2(a). Else, say $e_i$ is traversed when visiting $u_i$ while $e_j$ is traversed when visiting $u_j$, where $u_i$ and $u_j$ are two different occurrence of $u$ in DFS tree, then $u_i$ must occur before $u_j$ such that $e_i$ is traversed before $e_j$ in the DFS. Hence, $t_j < \sigma(u_i)$ otherwise $e_j$ is traversed when visiting $u_i$ while $t_i \geq \sigma(u_i)$ due to time constraint in temporal graph. Thus we have $t_i > t_j$.

%Let $T$ be the DFS tree. Since $u$ may have multiple occurrences in $T$, we have the following two cases when $e_i$ is traversed before $e_j$ in the DFS. If $e_i$ and $e_j$ are traversed when visiting the same occurrence of $u$, then $t_i > t_j$ because Definition \ref{de:dfs3} chooses the edges to traverse in reverse chronological order in Step 2(a). Else, let $u_i$ and $u_j$ be two occurrences of $u$, and assume that $e_i$ and $e_j$ are traversed when visiting $u_i$ and $u_j$, respectively. Since $e_i$ is traversed before $e_j$, $u_i$ must occur before $u_j$ in $T$. Thus, $t_j < \sigma(u_i) \le t_i$, because $e_j$ should be traversed when visiting $u_i$ if $t_j \ge \sigma(u_i)$. In both cases, we have $t_i > t_j$.

Let $T$ be the DFS tree. Since $u$ may have multiple occurrences in $T$, we have the following two cases when $e_i$ is traversed before $e_j$ in the DFS. If $e_i$ and $e_j$ are traversed when visiting the same occurrence of $u$, then $t_i \geq t_j$ because Definition \ref{de:dfs3} chooses the edges to traverse in reverse chronological order in Step 2(a). Else, let $u_i$ and $u_j$ be two occurrences of $u$, and assume that $e_i$ and $e_j$ are traversed when visiting $u_i$ and $u_j$, respectively. Since $e_i$ is traversed before $e_j$, $u_i$ must occur before $u_j$ in $T$. Thus, $t_j < \sigma(u_i) \le t_i$, because $e_j$ should be traversed when visiting $u_i$ if $t_j \ge \sigma(u_i)$. In both cases, we have $t_i \geq t_j$.
\end{proof}

Let $E(u)$ be the set of all temporal edges going out from $u$. Lemma \ref{le:dTraversalOrder} essentially implies that we can first order $E(u)$ in descending order of the time at which edges in $E(u)$ are active, and then scan the edges in that order to traverse them during an execution of DFS-v2. This descending order of $E(u)$ is simply the reverse order how the set of temporal out-edges from each $u$ in $G$ is stored, as described at the beginning of Section \ref{ssec:dAlgorithms}. Apparently, since each edge is traversed at most once during a DFS by Lemma \ref{le:dOnce} and now we do not need to search the out-neighbors of $u$ in Step 2(a), the time complexity of DFS-v2 is $O(|E| + |V|)$.

%However, we still need to show that $E(u)$ can be sorted without increasing the overall time complexity. Since in the temporal adjacency list representation we have $t_1(v_i) < t_2(v_i) < \ldots < t_{d(u,v_i)}(v_i)$ for each $v_i \in N_{out}(u)$, sorting $E(u)=\bigcup_{v_i \in N_{out}(u)} E(u,v_i)$ is essentially merging a list of ordered sets (i.e., $E(u,v_i)$ for all $v_i \in N_{out}(u)$), which can be done in time linear in the size of $E(u)$. Thus, sorting $E(u)$ does not increase the time complexity of DFS-v2.

%However, we still need to show that $E(u)$ can be sorted without increasing the overall time complexity. Since in the temporal adjacency list representation we have $t_1(v_i) < t_2(v_i) < \ldots < t_{d(u,v_i)}(v_i)$ for each $v_i \in N_{out}(u)$, sorting $E(u)=\bigcup_{v_i \in N_{out}(u)} E(u,v_i)$ is essentially merging a list of ordered sets (i.e., $E(u,v_i)$ for all $v_i \in N_{out}(u)$), which can be done in time linear in the size of $E(u)$. Thus, sorting $E(u)$ does not increase the time complexity of DFS-v2.

%Finally, we remark that DFS-v1 can also be processed by sorting $E(u)$ in reverse order as for DFS-v2, without violating Definition \ref{de:dfs2} (note that Definition \ref{de:dfs2} does not specify an order that the edge from $u$ to which out-neighbor of $u$ should be traversed first)  and hence any related properties/notions presented in Section \ref{ssec:dProperties}.

Finally, we remark that DFS-v1 can also be processed by scanning $E(u)$ in reverse order as for DFS-v2, and the resultant DFS tree does not violate Definition \ref{de:dfs2} and hence any related properties/notions presented in Section \ref{ssec:dProperties}.

The following theorem states the complexity of DFS in a temporal graph (the proof follows directly from the discussion above).

\begin{theorem}   \label{th:dComplexity}
Given a temporal graph $G=(V,E)$, DFS-v2 (or DFS-v1) in $G$ uses $O(|E| + |V|)$ time and $O(|E|+|V|)$ space.
\end{theorem}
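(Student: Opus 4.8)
The plan is to establish the theorem by combining the structural lemmas already proved with the observation about the data layout. The statement claims two bounds, time $O(|E|+|V|)$ and space $O(|E|+|V|)$, and I would treat them separately, handling DFS-v2 first and then remarking that DFS-v1 reduces to the same argument.

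For the time bound, I would argue as follows. By Lemma~\ref{le:dOnce}, only edges in $E(t_s)\subseteq E$ are ever traversed, and each is traversed at most once; likewise each vertex is visited only upon traversing a tree edge, so the total number of visit-events is at most $|E|+1$ by Lemma~\ref{le:dTreeSize}. The only nontrivial per-step cost in Definition~\ref{de:dfs3} is selecting, in Step~2(a), the next outgoing edge to traverse, since a vertex $u$ may be visited many times. Here Lemma~\ref{le:dTraversalOrder} does the real work: it guarantees that the edges out of any fixed $u$ are always traversed in globally descending order of their timestamps, regardless of which occurrence of $u$ we are currently visiting. Consequently I can maintain, for each vertex $u$, a single pointer into its list $E(u)$ pre-sorted in descending time order (which, as the text notes, is just the reverse of the stored input layout), and each selection in Step~2(a) amounts to advancing that pointer past already-traversed edges and checking the constraint $\sigma(u)\le t$. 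Because the pointer for $u$ never moves backward across the whole DFS, the amortized total work spent scanning out-edges is $O(\sum_{u\in V} d_{out}(u))=O(|E|)$. Adding $O(|V|)$ for initializing $\sigma(v)=\infty$ and the per-vertex pointers, and $O(|E|)$ for the at-most-$|E|$ traversal and backtracking events each costing $O(1)$, yields the claimed $O(|E|+|V|)$ total time.

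For the space bound, I would account for the stack of the recursive DFS (or the explicit predecessor chain used for backtracking), the array of $\sigma$-values indexed by $V$, and the per-vertex pointers. The $\sigma$-array and pointer array are $O(|V|)$. The DFS tree $T$ has at most $|E(t_s)|+1\le |E|+1$ vertices and edges by Lemma~\ref{le:dTreeSize}, so the active root-to-current path held on the recursion stack at any moment, as well as the tree itself if we store it, is $O(|E|)$. Together with the $O(|E|+|V|)$ input representation this gives $O(|E|+|V|)$ space.

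Finally I would dispatch the DFS-v1 clause by invoking the closing remark of the section: DFS-v1 can be executed by scanning each $E(u)$ in the same reverse (descending-time) order, and the resulting tree still conforms to Definition~\ref{de:dfs2} and all properties in Section~\ref{ssec:dProperties}; hence the identical pointer-scanning argument applies and gives the same bounds. The main obstacle, and the crux of the whole proof, is precisely the amortization in the time bound: naively re-scanning the out-neighbors of $u$ at every one of its (possibly many) visits would cost $O(n(u)\,d_{out}(u))$ and blow up to the $O(|E|\cdot d_{out}(u))$ figure of the DFS-v1 discussion. The nontrivial input is Lemma~\ref{le:dTraversalOrder}, which certifies that a single monotone pointer per vertex suffices, turning that product into a clean $O(|E|)$ amortized cost; everything else is bookkeeping.
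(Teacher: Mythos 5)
Your proposal is correct and follows essentially the same route as the paper: it rests on Lemma~\ref{le:dTraversalOrder} to justify scanning each pre-sorted list $E(u)$ with a single monotone pointer, combines this with Lemmas~\ref{le:dOnce} and~\ref{le:dTreeSize} to bound the total work, and handles DFS-v1 via the same reverse-order remark the paper uses. You merely spell out the amortization argument and the space accounting that the paper leaves implicit in its ``follows directly from the discussion above.''
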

%\begin{proof}
%%Give the proof here.
%After sorting $E(u)$ in descending order, DFS-v2 only need to scan $E(u)$ to traverse temporal edges when visiting $u$ according to Lemma \ref{le:dTraversalOrder} and noted that each temporal edge can be traversed at most once.
%Similar to situation 2 in proof of Lemma \ref{le:dTraversalOrder},if $u_i$ is visited before $u_j$, where $u_i$ and $u_j$ are two different occurrence of $u$ in DFS-v1 tree and $E(u_i)$($or E(u_j)$) is the set of temporal edges traversed when visiting $u_i$(or $u_j$), then $t_i > t_j$ for each $e_i \in E(u_i)$,$e_i=(u,v,t_i)$ and $e_j \in E(u_j)$,$e_j=(u,v',t_j)$. Hence we can also scan $E(u)$ to traverse edges when visiting $u$ in DFS-v1 while the only difference between DFS-v1 and DFS-v2 is the reverse order among available outgoing temporal edges when visiting a specified occurrence of u.
%Hence, DFS-v2 (or DFS-v1) in $G$ uses $O(|E(t_s)| + |V|)$ time .%and $O(\sum_{u \in V} n(u)) = |E(t_s)|$, where $n(u)$ is the number of times a vertex $u$ is visited.
%While $O(|E|+|V|)$ space is needed to store the temporal graph in DFS-v2 (or DFS-v1) and $T$ has at most $|E(t_s)|+1$ vertices and edges according to \ref{le:dTreeSize}. Hence, DFS-v2 (or DFS-v1) in $G$ uses $O(|E(t_s)| + |V|)$ time and $O(|E|+|V|)$ space.
%\end{proof}

% (e.g., the graph in Figure \ref{fig:optG})

Note that both the time and space complexity given in Theorem \ref{th:dComplexity} are the lower bound because it is easy to give a temporal graph for which $|E|$ edges are traversed and $|V|$ vertices are visited, and we need $O(|E|+|V|)$ space to keep the graph in memory for random vertex/edge access.

\section{Breadth-First Search}  \label{sec:bfs}

In this section, we define \emph{breadth-first search} (\emph{BFS}) for temporal graphs. We also discuss properties of BFS in temporal graphs and present an efficient algorithm for temporal BFS.

\subsection{Challenges of BFS}   \label{ssec:bChallenges}

%BFS in a non-temporal graph starts from a chosen source vertex $s$ and visits all $s$'s neighbors, and then from each neighbor $v$ visits the un-visited neighbors of $v$, and so on until all vertices are visited. In a temporal graph, however, the presence of temporal information complicates the problem.
%
%Similar to DFS, BFS in a temporal graph should follow the time constraint in Definition \ref{de:TC}. In addition, among the set of temporal edges that satisfy the time constraint, their order to be traversed is crucial for certain BFS applications. For example, BFS is commonly used to find the smallest number of steps or hops one object is from another in a network, and suppose that we start from $A$ in Figure \ref{fig:tmpG2}, among the four temporal edges going from $A$ to $B$, if we choose $(A, B, 10)$ to traverse, then we are not able to go further from $B$ to $C$ following the time constraint, thus mistaking that the smallest number of hops from $A$ to $C$ is $\infty$ instead of 2.
%
%
%Another issue is that the path that gives the smallest number of hops may not be the path that reaches from the source to the target at the earliest time, as a short path may take longer time to traverse. To define meaningful and useful BFS for temporal graphs, we need to consider these issues in the definitions.

BFS in a non-temporal graph starts from a chosen source vertex $s$ and visits all $s$'s neighbors, and then from each neighbor $v$ visits the un-visited neighbors of $v$, and so on until all reachable vertices are visited. However, BFS in a temporal graph is much more complicated due to the presence of temporal information.

BFS and DFS in a temporal graph share many similar challenges that are discussed in Section \ref{ssec:dChallenges}. They both need to follow the time constraint stated in Definition \ref{de:TC} and require multiple occurrences of a vertex in the traversal tree in order to retain critical information. In addition, temporal BFS also poses its own challenges.

%However, for BFS it is not necessary to traverse multiple edges between two vertices, but their order to be traversed is crucial. For example, BFS is commonly used to find the smallest number of steps or hops one object is from another in a network, and suppose that we start from $a$ in Figure \ref{fig:tmpG2}, if we first choose $(A, B, 4)$ to traverse, then we are not able to go further from $B$ to $C$ following the time constraint, thus mistaking that the smallest number of hops from $A$ to $C$ is $\infty$ instead of 2.

%Another
One issue unique to temporal BFS is that the path that gives the smallest number of hops may not be the path that reaches from the source to the target at the earliest time, i.e., a short path may take longer time to traverse. For example, in Figure \ref{fig:bfs}(a), the shortest path from $a$ to $f$ has only 1 hop but $a$ reaches $f$ at time $t=7$, while a longer path $\langle a, b, f \rangle$ has 2 hops but $a$ reaches $f$ at an earlier time $t=3$ (such information is important for travel planning).

\begin{figure}[!tbp]
\begin{center}
%\hspace{-0mm}
\includegraphics[width = 2.5in]{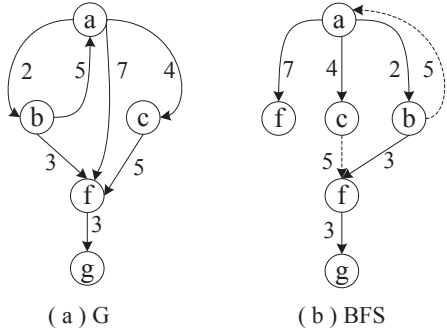}
\caption{BFS in a temporal graph $G$} \label{fig:bfs}
\end{center}
%\vspace{-5mm}
\end{figure}

Figure \ref{fig:bfs}(a) also reveals another problem that has been significantly complicated with the addition of temporal information. When we start from $a$ and finish the first level of BFS (i.e., we have visited $b$, $c$ and $f$), we have $f$ at the second level that can be visited from $b$ and $c$. If we do not visit $f$ again because $f$ has been visited at the first level, then we cannot reach $g$ since $a$ visits $f$ at time $t=7$ and so it cannot go from $f$ to $g$ after $t=7$ since the edge $(f,g)$ is active at $t=3$. If we visit $f$ again, then we can only consider to visit $f$ from $b$ because the time to reach $f$ from $c$ is at time $t=5$, which is also after $t=3$ when the edge $(f,g)$ is active.

\subsection{Definitions of BFS}   \label{ssec:bDefinitions}

To define meaningful and useful BFS for temporal graphs, we need to consider all the issues identified in Section \ref{ssec:bChallenges}.

\if 0

We first define a basic BFS as follows.

\begin{definition}[Basic-BFS and BFS Tree]  \label{de:bfs1}
Given a temporal graph $G=(V,E)$ and a starting time $t_s$, a basic BFS in $G$ starting at $t_s$ is defined as follows:
\begin{enumerate}
    \item Initialize $\sigma(v)=\infty$, $dist(v)=\infty$, and $p(v)=\phi$ for all $v \in V$, where $dist(v)$ denotes the number of hops (or the path distance for un-weighted graphs) from $s$ to $v$ and $p(v)$ denotes the predecessor of $v$ during the BFS; and initialize an empty queue $Q$.
    \item Select a source vertex $s$, set $\sigma(s)=t_s$, $dist(s)=0$, $p(v)=\phi$, and push $(s,dist(s),\sigma(s),p(v))$ into $Q$.
    \item While $Q$ is not empty, do:
        \begin{enumerate}
            \item Pop $(u,dist(u),\sigma(u),p(u))$ from $Q$;
            \item Let $E_{u,v}$ be the set of temporal edges going from $u$ to $v$, where each edge $(u,v,t) \in E_{u,v}$ has not been traversed before and $\sigma(u) \le t$.

                For each out-neighbor $v$ of $u$, where $E_{u,v} \ne \emptyset$, do:
                \begin{enumerate}
                  \item Let $e=(u,v,t)$ be the edge in $E_{u,v}$ where $t=\min\{t': (u,v,t') \in E_{u,v}\}$.
                  \item Traverse $e$, and if $\sigma(v) > t$, then visit $v$ and push $(v,dist(v)=dist(u)+1,\sigma(v)=t,p(v)=u)$ into $Q$.
               \end{enumerate}
        \end{enumerate}
%    \item If any vertex $v$ has not yet been visited, set $s=v$ and go to Step 2; otherwise, we terminate the BFS.
\end{enumerate}

The BFS constructs a \emph{BFS tree}, $T$, which is rooted at $s$, where the set of vertices in $T$ is defined as $(\{s\} \cup \{v: v$ is visited in the BFS and $p(v) \ne \phi\}$, and the set of edges in $T$ is defined as $\{(p(v),v): v$ in $T$ and $p(v) \ne \phi\}$.
\end{definition}

Definition \ref{de:bfs1} addresses the issues identified in Section \ref{ssec:bChallenges}. However, it suffers from one problem, that is, there can be many occurrences of a vertex at each level in the BFS tree, which is of little practical use and may keep much duplicated information. For example, as discussed in Section \ref{ssec:bChallenges}, $f$ at the second level can be visited from $b$ and $c$, and thus two occurrences of $f$ will be created at the second level. This problem is addressed as follows.

\begin{definition}[Temporal BFS and BFS Tree]  \label{de:bfs2}
Replace Step 3(b) of Definition \ref{de:bfs1} with the following:
        \begin{enumerate}
            \item[3(b)] Let $E_{u,v}$ be the set of temporal edges going from $u$ to $v$, where each edge $(u,v,t) \in E_{u,v}$ has not been traversed before and $\sigma(u) \le t$.

                For each out-neighbor $v$ of $u$, where $E_{u,v} \ne \emptyset$, do:
                \begin{enumerate}
                  \item[i.] Let $e=(u,v,t)$ be the edge in $E_{u,v}$ where $t=\min\{t': (u,v,t') \in E_{u,v}\}$.
                  \item[ii.] If $v$ is not in $Q$ (whether $v$ has been visited or not): traverse $e$, and if $\sigma(v) > t$, then visit $v$ and push $(v,dist(v)=dist(u)+1,\sigma(v)=t,p(v)=u)$ into $Q$.
                  \item[iii.] Else:
                  \begin{enumerate}
                    \item[A.] If there exists $(v,dist(v),\sigma(v),p(v))$ in $Q$ such that $dist(v)=dist(u)+1$: traverse $e$, and if $\sigma(v) > t$, then visit $v$ and update $\sigma(v)=t$ and $p(v)=u$ in $Q$.
                    \item[B.] Else (i.e., $dist(v)=dist(u)$): traverse $e$, and if $\sigma(v) > t$, then visit $v$ and push $(v,dist(v)=dist(u)+1,\sigma(v)=t,p(v)=u)$ into $Q$.
                  \end{enumerate}
               \end{enumerate}
        \end{enumerate}

The BFS tree is defined in the same way as in Definition \ref{de:bfs1}.
%
%The BFS constructs a \emph{BFS tree}, $T$, which is rooted at $s$, where the set of vertices in $T$ is defined as $(\{s\} \cup \{v: v$ is visited in the BFS and $p(v) \ne \phi\}$,  and the set of edges in $T$ is defined as $\{(p(v),v): v$ in $T$ and $p(v) \ne \phi\}$.
\end{definition}

\fi

\begin{definition}[Temporal BFS and BFS Tree]  \label{de:bfs2}
Given a temporal graph $G=(V,E)$ and a starting time $t_s$, a BFS in $G$ starting at $t_s$ is defined as follows:
\begin{enumerate}
    \item Initialize $\sigma(v)=\infty$, $dist(v)=\infty$, and $p(v)=\phi$ for all $v \in V$, where $dist(v)$ denotes the number of hops (or the path distance for un-weighted graphs) from $s$ to $v$ and $p(v)$ denotes the predecessor of $v$ during the BFS; and initialize an empty queue $Q$.
    \item Select a source vertex $s$, set $\sigma(s)=t_s$, $dist(s)=0$, $p(v)=\phi$, and push $(s,dist(s),\sigma(s),p(v))$ into $Q$.
    \item While $Q$ is not empty, do:
        \begin{enumerate}
            \item Pop $(u,dist(u),\sigma(u),p(u))$ from $Q$;
            \item Let $E_{u,v}$ be the set of temporal edges going from $u$ to $v$, where each edge $(u,v,t) \in E_{u,v}$ has not been traversed before and $\sigma(u) \le t$.

                For each out-neighbor $v$ of $u$, where $E_{u,v} \ne \emptyset$, do:
                \begin{enumerate}
                  \item[i.] Let $e=(u,v,t)$ be the edge in $E_{u,v}$ where $t=\min\{t': (u,v,t') \in E_{u,v}\}$.
                  \item[ii.] If $v$ is not in $Q$ (whether $v$ has been visited or not): traverse $e$, and if $\sigma(v) > t$, then visit $v$ and push $(v,dist(v)=dist(u)+1,\sigma(v)=t,p(v)=u)$ into $Q$.
                  \item[iii.] Else:
                  \begin{enumerate}
                    \item[A.] If there exists $(v,dist(v),\sigma(v),p(v))$ in $Q$ such that $dist(v)=dist(u)+1$: traverse $e$, and if $\sigma(v) > t$, then visit $v$ and update $\sigma(v)=t$ and $p(v)=u$ in $Q$.
                    \item[B.] Else (i.e., $dist(v)=dist(u)$): traverse $e$, and if $\sigma(v) > t$, then visit $v$ and push $(v,dist(v)=dist(u)+1,\sigma(v)=t,p(v)=u)$ into $Q$.
                  \end{enumerate}
                \end{enumerate}
        \end{enumerate}
\end{enumerate}

The BFS constructs a \emph{BFS tree}, $T$, which is rooted at $s$, where the set of vertices in $T$ is defined as $(\{s\} \cup \{v: v$ is visited in the BFS and $p(v) \ne \phi\}$, and the set of edges in $T$ is defined as $\{(p(v),v): v$ in $T$ and $p(v) \ne \phi\}$.
\end{definition}

Definition \ref{de:bfs2} addresses the issues identified in Section \ref{ssec:bChallenges}. In addition, we also address the problem of multiple occurrences of a vertex at the same level of the BFS tree; for example, as discussed in Section \ref{ssec:bChallenges}, $f$ at the second level can be visited from $b$ and $c$, and thus two occurrences of $f$ will be created at the second level. For BFS, such multiple occurrences at the same level are of little practical use and may keep much duplicated information.

Figure \ref{fig:bfs}(b) shows the BFS tree of the graph in Figure \ref{fig:bfs}(a). Note that there are still multiple occurrences of a vertex $v \in V$ in the BFS tree, but these occurrences are necessary to keep essential information. For example, $f$ at level 1 in the BFS tree in Figure \ref{fig:bfs}(b) keeps the shortest path from $a$ to $f$, while $f$ at level 2 is necessary to report the shortest path from $a$ to $g$ and also keep the earliest time $a$ can reach $f$ (i.e., at time $t=3$).

Thus, in Definition \ref{de:bfs2}, we keep $\sigma(v)$ updated during the BFS process so that its final value records the earliest time from $s$ to reach this particular occurrence of $v$ in $dist(v)$ hops. It is important to record this earliest time because a later time may miss some paths that start at an earlier time, and hence we may miss the shortest path. For example, $a$ goes to $g$ through $f$ at time $\sigma(f)=3$ instead of $\sigma(f)=7$ via $f$ at level 1.

When traversing an edge $e=(u,v,t)$, if $v$ is in $Q$, then either $v$ was visited at the same level as $u$ but later than $u$ (i.e., $dist(v)=dist(u)$) or $v$ was visited at the current level from another vertex at the same level as $u$ (i.e., $dist(v)=dist(u)+1$). If $v$ is not in $Q$, then either $v$ was visited at a level earlier than $u$ or $v$ was visited at the same level as $u$ but earlier than $u$ or $v$ has never been visited, in either case we need to create another occurrence of $v$ in the BFS tree if $\sigma(v) > t$ since it may be crucial to some paths.

We can verify from the BFS tree that the distance of all the vertices from the source vertex is correctly recorded and the temporal time constraint is also followed along all the paths.

%Figure \ref{fig:bfs} shows the BFS tree obtained by executing Basic-BFS on the graph in Figure \ref{fig:tmpG2}. We first start from $A$ and visit the vertices that are reachable from $A$ following the time constraint set by Definition \ref{de:TC}. Then we start from $X$ that has not been visited, and we traverse the edge $(X,A,0)$ only since the rest of the edg

%Figure \ref{fig:bfs} shows the BFS tree obtained by executing Basic-BFS on the graph in Figure \ref{fig:tmpG2}. Note that for BFS, we do not restart from another unvisited vertex as the source vertex, because starting from that source vertex may need to visit a large part of the BFS tree constructed by a prior source vertex in order to answer shortest path distance. For example, if we add the edge $(X,A,0)$ to the graph in Figure \ref{fig:tmpG2}, then we need to visit the whole BFS tree in Figure \ref{fig:bfs} again in order to answer distance from $X$ to other vertices.

%We only define one version of BFS, but remark that other versions of BFS may also be possible for temporal graphs.

\subsection{Notions and Properties Related to BFS}   \label{ssec:bProperties}

Next we present some important properties and notions related to temporal BFS. %In the rest of the paper, BFS in a temporal graph refers to BFS defined in Definition \ref{de:bfs2}.

Let $E(t_s)=\{e=(u,v,t): e \in E, t \ge t_s\}$ be the number of temporal edges in  a temporal graph $G=(V,E)$ that are active at or after $t_s$. The following lemmas analyze the bound on the size of the BFS tree.

\begin{lemma}   \label{le:bOnce}
A BFS of $G$ traverses only edges in $E(t_s)$ and each edge in $E(t_s)$ is traversed at most once.
\end{lemma}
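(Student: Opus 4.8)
The plan is to prove Lemma~\ref{le:bOnce} by directly tracing the two guards that govern every edge traversal in Definition~\ref{de:bfs2}, mirroring the structure used for the DFS analogue in Lemma~\ref{le:dOnce}. There are two claims to verify: (i) only edges in $E(t_s)$ are ever traversed, and (ii) no edge is traversed twice. I would handle them separately.

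For claim (i), the key observation is that an edge $e=(u,v,t)$ is traversed only when it belongs to some set $E_{u,v}$ formed in Step~3(b), and by definition this requires $\sigma(u) \le t$. First I would establish that $\sigma(u) \ge t_s$ for every vertex $u$ that is ever popped from $Q$. This follows by induction on the order of visits: the source $s$ has $\sigma(s)=t_s$, and any other occurrence of a vertex $v$ receives $\sigma(v)=t$ for some edge $(u',v,t)$ traversed with $\sigma(u') \le t$, so $\sigma(v)=t \ge \sigma(u') \ge t_s$ by the induction hypothesis. Hence whenever $e=(u,v,t)$ is a candidate for traversal we have $t \ge \sigma(u) \ge t_s$, so $e \in E(t_s)$. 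Edges with $t < t_s$ are therefore never considered.

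For claim (ii), I would rely on the explicit ``has not been traversed before'' condition built into the definition of $E_{u,v}$ in Step~3(b): an edge $(u,v,t)$ is only placed in $E_{u,v}$ if it has not yet been traversed. Since each edge has a fixed start vertex $u$, and each candidate set $E_{u,v}$ excludes already-traversed edges, once $e$ is traversed it can never re-enter any future $E_{u,v}$, so it is traversed at most once. Combining the two claims gives the lemma.

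The main obstacle — really the only subtle point — is claim (i), specifically making the inductive argument that $\sigma(u) \ge t_s$ for every visited vertex airtight, since the BFS in Definition~\ref{de:bfs2} updates $\sigma(v)$ in place in case~3(b)iii.A and may create several occurrences of the same vertex. I would address this by arguing about occurrences rather than vertices: every $\sigma$-value assigned (whether on a push or on an in-place update) equals the timestamp $t$ of some traversed edge whose source already satisfied $\sigma(u)\le t$, and each such update only ever decreases $\sigma(v)$ toward a value still bounded below by $t_s$. The ``has not been traversed before'' bookkeeping for claim (ii) is then immediate from the definition and needs no further work.
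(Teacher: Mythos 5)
Your proposal is correct and follows essentially the same route as the paper, which simply observes that the lemma ``follows directly from Definition~\ref{de:bfs2}'': the ``has not been traversed before'' guard gives the at-most-once claim, and the $\sigma(u) \le t$ guard together with $\sigma(s)=t_s$ gives the restriction to $E(t_s)$. Your explicit induction on occurrences showing $\sigma(u) \ge t_s$ is just a careful spelling-out of what the paper leaves implicit, and it is sound.
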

%\begin{proof}
%The proof follows directly from Definition \ref{de:bfs2} as Step 3(b) only selects edges that have not been traversed to be traversed next.
%\end{proof}

\begin{lemma}   \label{le:bTreeSize}
Let $T$ be the BFS tree of $G$. Then, $T$ has at most $|E(t_s)|+1$ vertices and edges.
\end{lemma}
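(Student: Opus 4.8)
The plan is to bound the number of vertices and edges in the BFS tree $T$ by relating tree edges to the temporal edges in $E(t_s)$, using Lemma \ref{le:bOnce} as the main tool. First I would recall from Definition \ref{de:bfs2} that every edge of $T$ is of the form $(p(v),v)$ for some occurrence of a vertex $v$ with $p(v) \ne \phi$, and each such tree edge corresponds to a traversal of a specific temporal edge $e=(u,v,t) \in E$ in Step 3(b). By Lemma \ref{le:bOnce}, only edges in $E(t_s)$ are ever traversed, and each such edge is traversed at most once; hence each tree edge can be injectively mapped to a distinct edge of $E(t_s)$. This immediately gives at most $|E(t_s)|$ edges in $T$.

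Next I would count the vertices. Since $T$ is a tree rooted at $s$, the number of vertex-occurrences equals the number of tree edges plus one (the root $s$ itself contributes the ``$+1$'', as $s$ has no incoming tree edge because $p(s)=\phi$). Every other occurrence of a vertex in $T$ is created precisely when some temporal edge is traversed and causes $v$ to be visited with $\sigma(v) > t$ — i.e., each non-root occurrence is in bijection with exactly one tree edge. Combining this with the edge bound yields at most $|E(t_s)|+1$ vertex-occurrences and at most $|E(t_s)|$ edges, so both quantities are bounded by $|E(t_s)|+1$ as claimed.

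The argument is essentially the BFS analogue of Lemma \ref{le:dTreeSize} for DFS, so the structure of the proof mirrors that case: it follows directly from Definition \ref{de:bfs2} (which defines the tree's vertices and edges), Lemma \ref{le:bOnce} (the traversal bound), and the fact that a vertex-occurrence is created in $T$ only when a temporal edge is traversed. The one point requiring slight care — and the main obstacle I would watch for — is the subtle behavior in Step 3(b)iii.A, where an existing queue entry for $v$ at the same level is \emph{updated} (its $\sigma(v)$ and $p(v)$ are overwritten) rather than pushed as a new occurrence. I would confirm that such an update does not create an additional vertex-occurrence or tree edge, so that the injection from tree edges into $E(t_s)$ remains valid and no occurrence is double-counted. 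Once that is verified, the counting is routine and the bounds follow.
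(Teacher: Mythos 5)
Your proof is correct and follows essentially the same route as the paper, which argues that the bound follows from Definition \ref{de:bfs2}, Lemma \ref{le:bOnce}, and the fact that $p(v)$ is assigned for an occurrence only when an edge is traversed. You have simply fleshed out that terse argument (including the correct observation that updates in Step 3(b)iii.A overwrite rather than add occurrences), so no further changes are needed.
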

%\begin{proof}
%The proof follows directly from Definition \ref{de:bfs2} and Lemma \ref{le:bOnce} and the fact that we assign $p(v)$ for a vertex only when we traverse an edge.
%\end{proof}

The proof of Lemma \ref{le:bOnce} follows directly from Definition \ref{de:bfs2}, while the proof of Lemma \ref{le:bTreeSize} follows directly from Definition \ref{de:bfs2} and Lemma \ref{le:bOnce} and the fact that we assign $p(v)$ for a vertex only when we traverse an edge.

The following lemma states that for each $v$ in $G$, there is at most one occurrence of $v$ at each level of the BFS tree. This property will be used to prove some other lemmas.

\begin{lemma}   \label{le:oneOccurrence}
Let $T$ be the BFS tree of $G$, and $O(v)$ be the set of all occurrences of $v$ in $T$. For any $v_1,v_2 \in O(v)$, $dist(v_1) \ne dist(v_2)$.
\end{lemma}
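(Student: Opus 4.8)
The plan is to prove Lemma \ref{le:oneOccurrence} by contradiction, leveraging the queue-based structure of Definition \ref{de:bfs2}. Suppose for contradiction that some vertex $v$ has two distinct occurrences $v_1, v_2 \in O(v)$ with $dist(v_1) = dist(v_2) = \ell$. Each occurrence of $v$ in the BFS tree is created precisely when a tuple $(v, dist(v), \sigma(v), p(v))$ is pushed into $Q$ during some Step 3(b), triggered by traversing an incoming edge $(u, v, t)$ from some parent $u$. So two occurrences at the same level $\ell$ would correspond to two separate push operations creating a tuple with $dist(v) = \ell$.

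First I would examine exactly when a new push (as opposed to an update) happens. Inspecting Step 3(b), a fresh tuple with $dist(v) = dist(u) + 1$ is pushed only in cases (ii) and (iii)B, and both are guarded by the condition that $v$ is \emph{not} currently in $Q$ with distance $dist(u)+1$: in case (ii) $v$ is not in $Q$ at all, and in case (iii)B we are in the subcase $dist(v) = dist(u)$ (the tuple in $Q$ sits one level \emph{above} the one about to be created). In case (iii)A, where a tuple with $dist(v) = dist(u)+1$ already resides in $Q$, the algorithm performs an \emph{update} of $\sigma(v)$ and $p(v)$ rather than a push, so no new occurrence is created. The key observation is therefore that a new occurrence at level $\ell = dist(u)+1$ is created only when no occurrence of $v$ at level $\ell$ is simultaneously present in $Q$.

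The heart of the argument is then to show that once an occurrence of $v$ at level $\ell$ is created, it persists in $Q$ long enough to block any second push at that same level. The main obstacle will be making this persistence precise: I would argue that all tuples at level $\ell-1$ are popped (and processed in Step 3) before any tuple at level $\ell$ is popped, which is the standard FIFO invariant of BFS—vertices are pushed in nondecreasing order of $dist$, so the queue always contains tuples of at most two consecutive levels and all level-$(\ell-1)$ tuples precede all level-$\ell$ tuples. Consequently, once $v$ is pushed at level $\ell$ by processing some $u$ at level $\ell-1$, that tuple remains in $Q$ throughout the processing of \emph{every} remaining vertex at level $\ell-1$ (the only vertices that could attempt to create a level-$\ell$ occurrence of $v$). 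Any later attempt to insert $v$ at level $\ell$ therefore encounters the existing tuple and, by case (iii)A, results in an update rather than a new push.

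Thus no second occurrence of $v$ at level $\ell$ can ever be created, contradicting the existence of distinct $v_1, v_2 \in O(v)$ with $dist(v_1) = dist(v_2)$. The argument hinges entirely on the FIFO queue invariant that tuples are enqueued in nondecreasing distance order and that level $\ell-1$ is fully drained before level $\ell$, together with the case analysis distinguishing push (cases ii, iiiB) from update (case iiiA); I expect verifying the queue invariant and confirming that case iiiA indeed guards against a duplicate same-level push to be the most delicate part.
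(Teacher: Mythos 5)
Your proof is correct and takes the same route the paper intends: the paper simply asserts that Lemma \ref{le:oneOccurrence} ``follows directly from Definition \ref{de:bfs2}'', and your argument is exactly the fleshed-out version of that claim --- the case analysis of Step 3(b) (pushes only in cases ii and iiiB, update in case iiiA) combined with the FIFO invariant that all level-$(\ell-1)$ records are drained before any level-$\ell$ record, so an existing level-$\ell$ record of $v$ blocks any duplicate push. The same queue-invariant reasoning is what the paper itself uses when proving the companion Lemma \ref{le:qOccurrence}, so your writeup supplies precisely the details the paper leaves implicit.
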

%\begin{proof}
%%According to the principle of BFS in Definition \ref{de:bfs2}, at any time the vertices in $Q$ belong to at most two levels, i.e., we have either $dist(v)=i$ or $dist(v)=i+1$ for any $v$ in $Q$, for some positive integer $i$. When considering whether to push a new record of $v$ into $Q$ in Step 3(b)iii, we have two cases that compare with $dist(v)$ of an existing record of $v$ in $Q$: (1) $dist(v) = dist(u)+1$: in which case we only update the existing record in $Q$; or (2) $dist(v) = dist(u)$, in which case we push a new record for $v$ into $Q$ with $dist(v)=dist(u)+1$. If Case (1) is executed, no new record is created. If Case (2) is executed, let the existing record be $v_{old}$ and the new record be $v_{new}$, and consider that $v$ is visited again, then there exists $dist(v_{new})=dist(u)+1$ now and hence Case (1) will be executed. Thus, no new record of $v$ at level $(dist(u)+1)$ will be pushed into $Q$, and when we process level $(dist(u)+2)$, $v_{old}$ must have been popped from $Q$ according to the principle of BFS. Thus, for any $i$, there can be at most one occurrence of $v$ created with $dist(v)=i$ in the BFS.
%
%The proof follows directly from Definition \ref{de:bfs2}.
%\end{proof}

The following lemma is related to temporal graph reachability.

\begin{lemma}   \label{le:bReachability}
Let $V_T$ be the set of distinct vertices (i.e., multiple occurrences of a vertex $v$ are considered as a single $v$) in the BFS tree, rooted at $s$, of a temporal graph $G=(V,E)$. Let $V_R$ be the set of vertices in $G$ that are reachable from $s$. Then, $V_T = V_R$.
%all vertices in $T$ are reachable from $s$ in $G$ and the simple path from $s$ to any vertex $v$ in $T$ shows a path in $G$ by which $s$ can reach $v$.
\end{lemma}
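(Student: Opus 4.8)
The plan is to prove the two inclusions $V_T \subseteq V_R$ and $V_R \subseteq V_T$ separately, mirroring the proof of Lemma \ref{le:dReachability} for DFS, but paying attention to the extra bookkeeping that the BFS queue introduces.

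The inclusion $V_T \subseteq V_R$ is the easy direction. First I would observe that any occurrence of a vertex $v$ placed in $T$ is reached from the root $s$ by the chain of recorded tree edges $(p(w),w,\cdot)$. By Steps 3(b)ii and 3(b)iii an edge $(u,v,t)$ is traversed only when $\sigma(u)\le t$, and whenever $v$ is visited we set $\sigma(v)=t$; moreover the predecessor $u$ has already been popped, so its $\sigma(u)$ is final and equals the time of the edge that entered $u$. Hence along the root-to-$v$ path in $T$ the edge times are non-decreasing and each satisfies the time constraint of Definition \ref{de:TC}, so this path witnesses reachability of $v$ from $s$ in the sense of Definition \ref{de:reachability}; therefore $v \in V_R$.

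The substantive direction is $V_R \subseteq V_T$. Given $v \in V_R$, fix a reachability path $\langle (s=w_1,w_2,t_1),\ldots,(w_k,w_{k+1}=v,t_k)\rangle$ with $t_s \le t_1 \le \cdots \le t_k$. I would prove by induction on $i$ the claim that $w_i$ has an occurrence in $T$ with $\sigma(w_i) \le t_{i-1}$, writing $t_0 := t_s$. The base case $i=1$ holds since $\sigma(s)=t_s$. For the step, the induction hypothesis together with $t_{i-1} \le t_i$ gives an occurrence of $w_i$ with $\sigma(w_i) \le t_i$, and since every pushed occurrence is eventually popped, this occurrence is processed in Step 3. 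If the edge $(w_i,w_{i+1},t_i)$ is still untraversed at that moment, then $t_i \ge \sigma(w_i)$ places it in $E_{w_i,w_{i+1}}$, so the minimum-time edge $(w_i,w_{i+1},t)$ with $t \le t_i$ is traversed; either $\sigma(w_{i+1})$ is updated to $t$, or it was already at most $t$, so in both cases $w_{i+1}$ acquires an occurrence with $\sigma(w_{i+1}) \le t_i$. If instead $(w_i,w_{i+1},t_i)$ had already been traversed, it must have been traversed from some occurrence of $w_i$ with $\sigma \le t_i$, and the same update/no-update dichotomy again yields an occurrence of $w_{i+1}$ with $\sigma(w_{i+1}) \le t_i$. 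This establishes the claim, and taking $i=k+1$ produces an occurrence of $v$ in $T$, i.e. $v \in V_T$. Combining the two inclusions gives $V_T=V_R$.

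The main obstacle will be this last direction, specifically the fact that BFS, unlike DFS, traverses only the single minimum-time edge per out-neighbor at each processing step and guards every visit by the test $\sigma(v) > t$; so I cannot simply assert that ``$(w_i,w_{i+1},t_i)$ is traversed and hence creates $w_{i+1}$.'' The care needed is to phrase the induction around the existence of \emph{some} occurrence of $w_{i+1}$ with a sufficiently small $\sigma$-value, rather than around a particular edge or level, and to dispose of the already-traversed case by noting that any earlier traversal of that edge must itself have come from an occurrence of $w_i$ with $\sigma \le t_i$. The level and queue distinctions of Step 3(b)iii (and Lemma \ref{le:oneOccurrence}) are not needed here, since reachability only concerns the set of distinct vertices.
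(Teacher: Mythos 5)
Your proof is correct and follows essentially the same route as the paper: the paper proves this lemma by noting it is ``similar to that of Lemma \ref{le:dReachability}'', whose proof is exactly your two-inclusion argument with the induction maintaining an occurrence of $w_i$ satisfying $\sigma(w_i) \le t_{i-1}$. Your additional care for the BFS-specific issues (only the minimum-time edge per out-neighbor is traversed, the already-traversed case, and the fact that a popped predecessor's $\sigma$-value is final) fills in details the paper leaves implicit, and does so correctly.
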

%\begin{proof}
%The proof is similar to that of Lemma \ref{le:dReachability}.
%%First, $V_T \subseteq V_R$ because the simple path from $s$ to each $v \in V_T$ is a path in $G$ that satisfies the definition of reachability from $s$ to $v$ as given in Definition \ref{de:reachability}.
%\end{proof}

The proof of Lemma \ref{le:oneOccurrence}  follows directly from Definition \ref{de:bfs2}, while the proof of Lemma  \ref{le:bReachability} is similar to that of Lemma \ref{le:dReachability}.

The following definition and lemma are related to the distance between two vertices in a temporal graph.

\begin{definition} [Temporal Graph Distance]  \label{de:distance}
Let $G$ be a temporal graph. The \emph{shortest temporal path distance} in $G$ from a vertex $u$ to another vertex $v$, denoted by $dist(u,v)$, is the minimum number of hops in a path from $u$ to $v$ in $G$ such that traversing the path starting from $u$ to $v$ follows the time constraint defined in Definition \ref{de:TC}.
\end{definition}

%\begin{lemma}   \label{le:distance}
%Let $T$ be the tree constructed by starting BFS from a vertex $s$ in a temporal graph $G=(V,E)$. Let $O(v)$ be the set of all occurrences of $v$ in $T$. Then, $dist(s,v)=\min\{dist(v_o): v_o \in O(v)\}$.
%\end{lemma}
\begin{lemma}   \label{le:distance}
Let $T$ be the BFS tree, rooted at $s$, of $G$. Let $v_o$ is the first occurrence of $v$ in $T$. Then, $dist(s,v)= dist(v_o)$.
%Let $S$ be the set of all the occurrences of $v$ in $T$ (i.e., the multiple occurrences of $v$ in $T$). Then, $dist(s,v)=\min\{dist(v): v \in S\}$.
\end{lemma}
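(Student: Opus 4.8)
The plan is to prove the equality by establishing the two inequalities $dist(s,v) \le dist(v_o)$ and $dist(v_o) \le dist(s,v)$ separately. Throughout I use that, by Lemma \ref{le:oneOccurrence}, the occurrences of $v$ lie on pairwise distinct levels, so the first occurrence $v_o$ is exactly the occurrence of $v$ of smallest $dist$-value.

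For the first inequality, I would examine the root-to-$v_o$ path in the BFS tree $T$, say $\langle s=w_1, w_2, \ldots, w_{k+1}=v_o\rangle$ with tree edges $(w_i,w_{i+1},t_i)$. Each such tree edge is traversed with $p(w_{i+1})=w_i$ and $\sigma(w_{i+1})=t_i$, and at the moment these values are fixed the eligibility condition $\sigma(u)\le t$ of Step 3(b) of Definition \ref{de:bfs2} gives $\sigma(w_i)\le t_i$; since $w_i$ has already been popped, $\sigma(w_i)$ is final. Hence $\sigma(w_i)=t_{i-1}\le t_i$ all along the path, so the path satisfies the time constraint of Definition \ref{de:TC} and is a genuine temporal path from $s$ to $v$. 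It has exactly $k=dist(v_o)$ hops, which yields $dist(s,v)\le dist(v_o)$.

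For the reverse inequality it suffices to exhibit \emph{some} occurrence of $v$ at level at most $d:=dist(s,v)$, because the minimum-level occurrence $v_o$ then satisfies $dist(v_o)\le d$. I would fix a shortest temporal path $\langle s=u_1,u_2,\ldots,u_{d+1}=v\rangle$ with edge times $\tau_1\le\tau_2\le\cdots\le\tau_d$ and $\tau_0:=t_s\le\tau_1$, and prove by induction on $i$ the claim $H(i)$: there is an occurrence of $u_i$ in $T$ with $dist\le i-1$ and $\sigma\le\tau_{i-1}$. The base case $H(1)$ is the root $s$ at level $0$ with $\sigma(s)=t_s=\tau_0$. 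For the inductive step, an occurrence $x$ of $u_i$ at level $\ell\le i-1$ with $\sigma(x)\le\tau_{i-1}\le\tau_i$ is eventually popped and processed; since $\sigma(x)\le\tau_i$, the edge $(u_i,u_{i+1},\tau_i)$ is eligible, and I would conclude that after $x$ is processed there is an occurrence of $u_{i+1}$ at level $\le\ell+1\le i$ with $\sigma\le\tau_i$. Instantiating at $i=d+1$ gives an occurrence of $v$ at level $\le d$, finishing the inequality.

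The main obstacle is precisely this inductive step, where the subtlety is that I cannot simply claim $(u_i,u_{i+1},\tau_i)$ is freshly traversed from $x$: the edge may already have been consumed, or the current earliest-arrival value $\sigma(u_{i+1})$ may already be $\le\tau_i$, so that Step 3(b) creates no new occurrence when $x$ is processed. The resolution I plan to use is the BFS level-ordering: whenever $\sigma(u_{i+1})\le\tau_i$ already holds (or the relevant edge was already used), the occurrence of $u_{i+1}$ witnessing this was created by processing a vertex popped no later than $x$, hence sitting at level $\le\ell$, so that witnessing occurrence is at level $\le\ell+1\le i$; otherwise the edge is still available when $x$ is processed, and Step 3(b) traverses the minimum eligible edge (of time $\le\tau_i$) and records an occurrence of $u_{i+1}$ at level $\ell+1$ directly. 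A second point I must handle with care is that $\sigma$ is \emph{not} monotone across levels (e.g., $f$ at level $1$ has a larger $\sigma$ than $f$ at level $2$ in Figure \ref{fig:bfs}), so I deliberately track the existence of some occurrence that is simultaneously low in level and early in time, rather than reasoning about the $\sigma$-value of $v_o$ itself.
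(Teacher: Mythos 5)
Your proof is correct and takes essentially the same route as the paper's: the inequality $dist(s,v)\le dist(v_o)$ comes from observing that any root-to-occurrence tree path is a genuine temporal path, and the reverse inequality comes from an induction along a fixed shortest temporal path that tracks an occurrence which is simultaneously low in level and early in arrival time (this is exactly the paper's ``recursive analysis''). Your explicit handling of the awkward cases in the inductive step --- the edge already consumed, or $\sigma(u_{i+1})\le\tau_i$ already holding, resolved via the BFS level ordering of pops --- makes rigorous what the paper's proof glosses over, but the underlying argument is the same.
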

\begin{proof}
%BFS is executed level by level, where $dist(v)$ is actually the level number of $v$ for each $v \in E$.
%Suppose $dist(s,v)=c-1$ with path $P(s,v)=\langle (s=w_1, w_2, t_1), (w_2, w_3, t_2), \ldots, (w_{c-1}, w_c=v, t_{c-1}) \rangle$, we need to prove the first occurrence of $v$ in T is in level $c-1$. First, $w_i$ can not occur before level $i-1$ with $\sigma(w_i) \leq t_{i}$, otherwise path $P(s,v)$ can not be the shortest temporal path. Naturally we have $v$ can not occur before level $c-1$, in the following we prove that $v$ occur in level $c-1$. According to  Definition \ref{de:bfs2}, $w_2$ must be in level 1 and $w_3$ can not be in level 1 with $\sigma(w_2) \leq t_{2}$ otherwise $dis(s,v)<c-1$, hence $w_3$ must be in level 2 after traversing $(w_2, w_3, t_2)$.  Recursively, we can conclude that the first occurrence of $v$ is in level $c-1$. Hence, we have $dist(s,v)= dist(v_o)$.

BFS is executed level by level, where $dist(v_o)$ is actually the level number of $v_o$ in $T$. Assume that $dist(s,v)=k$ and $P=\langle (s=w_1, w_2, t_1), (w_2, w_3, t_2), \ldots, (w_k, w_{k+1}=v, t_k) \rangle$ is the corresponding path in $G$, we want to prove that $v_o$ is at level $k$ of $T$. First, $w_i$ cannot occur before level $i-1$ with $\sigma(w_i) \leq t_{i}$, otherwise there is another path with a shorter distance than that of $P$ which contradicts to our assumption. Thus, $v$ cannot occur before level $k$ and we need to prove that $v$ occurs at level $k$, i.e., $v_o$ is at level $k$. According to Definition \ref{de:bfs2}, $w_2$ must be at level 1 and $w_3$ cannot be at level 1 with $\sigma(w_2) \leq t_{2}$ (otherwise $dist(s,v)<k$), hence $w_3$ must be at level 2 after traversing $(w_2, w_3, t_2)$. By recursive analysis we can conclude that the first occurrence of $v$ is at level $k$. Thus, we have $dist(s,v)= dist(v_o)$.
\end{proof}

%Lemmas \ref{le:bReachability} and \ref{le:distance} together imply that $T$ can answer queries about the shortest temporal path distance from a vertex $s$ to any other vertex in $G$.

The following lemma specifies the relationship between temporal information and distance information captured in a BFS of a temporal graph.

%\begin{lemma}   \label{le:distanceTime}
%Let $T$ be the tree constructed by starting BFS from a vertex $s$ in a temporal graph $G=(V,E)$. Let $O(v)$ be the set of all occurrences of $v$ in $T$ where $dist(v) \le i$. Then, for each occurrence of $v$ in $T$, $t=\min\{\sigma(v): v \in O(v)\}$ is the earliest time that $s$ can reach $v$ in $i$ hops starting from $t_s$.
%\end{lemma}
%\begin{proof}
%Suppose $t_{c-1}$ is the earliest time that $s$ can reach $v$ in $i$ hops starting from $t_s$ with path $P(s,v)=\langle (s=w_1, w_2, t_1), (w_2, w_3, t_2), \ldots, (w_{c-1}, w_{c}=v, t_{c-1}) \rangle$, where $c-1 \leq i$.  Since there may be many available paths from $s$ to $v$ with earliest time within $i$ hops, we further constraint $c-1$ to be smallest among all such paths. Let $v^*$ be the latest occurrence of $v$ in $i$ hops, we want to prove $\sigma(v^*)=t_{c-1}$. First $\sigma(v^*) \geq t_{c-1}$ otherwise $t_{c-1}$ is not the earliest time that $s$ can reach $v$ in $i$ hops. Then we need to prove $\sigma(v^*) \leq t_{c-1}$. Similar to proof in Lemma \ref{le:distance}, we have $w_2$ in level 1 while $w_3$ can not be in level 1 with $\sigma(w_3) \leq t_2$ otherwise $c-1$ is not the smallest hop number, hence $w_3$ is in level 2 with $\sigma(w_3) \leq t_2$. Recursively, we have $v$ is in level $c-1$ with $\sigma(v) \leq t_{c-1}$ and this is the latest occurrence of $v$ within $i$ hops otherwise $\sigma(v) < t_{c-1}$. Thus, we have $\sigma(v^*)$ is the earliest time that $s$ can reach $v$ in $i$ hops starting from $t_s$, where $v^*$ is the latest occurrence of v.
%\end{proof}

\begin{lemma}   \label{le:distanceTime}
Let $T$ be the BFS tree, rooted at $s$, of $G$. Let $O(v)$ be the set of all occurrences of $v$ in $T$. Then, for each $v_o \in O(v)$, $\sigma(v_o)$ is the earliest time that $s$ can reach $v$ in $dist(v_o)$ hops in $G$ starting from time $t_s$.
\end{lemma}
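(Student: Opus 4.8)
The plan is to argue by induction on the level $k=dist(v_o)$ of the occurrence $v_o$ in the BFS tree $T$, using that the queue $Q$ in Definition~\ref{de:bfs2} is processed in nondecreasing order of $dist$, so every level-$(k-1)$ occurrence is finalized before any level-$k$ occurrence is created or updated. For a vertex $v$ and an integer $k\ge 0$, let $\tau_k(v)$ denote the earliest time at which $s$ can reach $v$ via a valid temporal path (Definition~\ref{de:TC}) of \emph{exactly} $k$ hops starting at $t_s$, with $\tau_k(v)=\infty$ if no such path exists. By Lemma~\ref{le:oneOccurrence} there is at most one occurrence of $v$ per level, so the level-$k$ occurrence of $v$ is well defined whenever it exists, and the goal becomes $\sigma(v_o)=\tau_{dist(v_o)}(v)$. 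To make the induction go through I would carry the strengthened hypothesis that (a)~for every level-$k$ occurrence $v_o$, $\sigma(v_o)=\tau_k(v)$; (b)~a level-$k$ occurrence of $v$ exists iff $\tau_k(v)<\tau_j(v)$ for all $j<k$; and (c)~the global value $\sigma(v)$ after level $k-1$ has been processed equals $\min_{j\le k-1}\tau_j(v)$, i.e.\ the earliest arrival at $v$ in at most $k-1$ hops. The base case $k=0$ is immediate: the only level-$0$ occurrence is the root $s$ with $\sigma(s)=t_s=\tau_0(s)$.

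For the inductive step I would establish the two inequalities separately. \emph{Achievability} ($\tau_k(v)\le\sigma(v_o)$): the tree path $\langle s,\dots,v_o\rangle$ has length $k$, and each tree edge $(w_i,w_{i+1},\sigma(w_{i+1}))$ satisfies $\sigma(w_{i+1})\ge\sigma(w_i)$ by the traversal constraint in Step~3(b), so the times along it are nondecreasing and it is a valid temporal path reaching $v$ in $k$ hops at time $\sigma(v_o)$. \emph{Optimality} ($\sigma(v_o)\le\tau_k(v)$): take a valid $k$-hop path $P=\langle s,\dots,w_k,v\rangle$ attaining $\tau_k(v)$, whose last edge is $(w_k,v,t_k)$ with $t_k=\tau_k(v)$ and whose prefix reaches $w_k$ in $k-1$ hops at some time $\le t_k$, so $\tau_{k-1}(w_k)\le t_k$. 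If $w_k$ has a level-$(k-1)$ occurrence, then by hypothesis~(a) its $\sigma$ equals $\tau_{k-1}(w_k)\le t_k$, hence $(w_k,v,t_k)$ is traversable when BFS processes that occurrence, which (via case ii, iii.A, or iii.B, all of which target level $dist(w_k)+1=k$) forces the level-$k$ occurrence of $v$ to have $\sigma\le t_k$; by Lemma~\ref{le:oneOccurrence} this is $v_o$, so $\sigma(v_o)\le\tau_k(v)$.

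The main obstacle is the remaining case, where the optimal penultimate vertex $w_k$ has \emph{no} level-$(k-1)$ occurrence, so the step above does not apply. Here I would invoke a domination argument powered by hypothesis~(b): the absence of a level-$(k-1)$ occurrence means $\tau_{k-1}(w_k)\ge\tau_j(w_k)$ for some $j<k-1$, so $w_k$ is already reached in $j$ hops at a time $\tau_j(w_k)\le\tau_{k-1}(w_k)\le t_k$. Since $t_k\ge\tau_j(w_k)$, appending the same final edge $(w_k,v,t_k)$ yields a valid path reaching $v$ in $j+1<k$ hops at time $t_k$, whence $\tau_{j+1}(v)\le t_k=\tau_k(v)$. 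But then $\tau_k(v)$ fails to strictly improve on a fewer-hop arrival, contradicting hypothesis~(b) under which $v_o$ exists at all; equivalently, by hypothesis~(c) the global $\sigma(v)$ would already be $\le t_k$ before level $k$, so the creation test ``$\sigma(v)>t$'' in case ii/iii.B would have blocked $v_o$. Hence this case never arises for an actual occurrence, completing the optimality bound and the induction. The delicate point throughout is the bookkeeping of the two roles of $\sigma(v)$ in Definition~\ref{de:bfs2} --- the global, level-crossing value that gates the \emph{creation} of a new occurrence in cases ii and iii.B versus the per-level value refined in case iii.A --- and it is exactly this gating by the global value that makes hypotheses~(b) and~(c) hold and drives the domination step, so it must be tracked carefully rather than collapsed into a single quantity.
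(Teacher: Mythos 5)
Your proof is correct, but it takes a noticeably more structured route than the paper's. The paper reads ``in $dist(v_o)$ hops'' as ``within that many hops,'' picks $P$ to be the hop-minimal path among those attaining the earliest arrival, and inducts along that single path: hop-minimality of $P$ is what forbids any intermediate vertex $w_j$ from occurring at a level lower than $j-1$ with small enough $\sigma$, and the whole argument is compressed into ``a recursive analysis'' referring back to Lemma \ref{le:distance}. You instead read the statement as ``exactly $dist(v_o)$ hops'' and run a level-by-level induction carrying the strengthened invariants (a)--(c); your invariant (b) --- a level-$k$ occurrence exists iff the exact-$k$-hop optimum strictly beats every fewer-hop optimum --- plays precisely the role that hop-minimality of $P$ plays in the paper: where the paper's choice of $P$ makes the bad case (the penultimate vertex absent from level $k-1$) contradict minimality, your domination step makes it contradict (b). Invariant (b) also reconciles the two readings of the statement, since for an occurrence that actually exists the exact-$k$ and within-$k$ optima coincide; so you prove the paper's lemma and somewhat more, namely a characterization of which levels carry occurrences, which would also be reusable elsewhere (e.g., in Theorem \ref{th:foremost}). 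What the paper's version buys is brevity; what yours buys is rigor exactly where the paper waves its hands --- the two roles of $\sigma(v)$ (global creation gate versus per-record refinement) and the fact that level-$(k-1)$ records are final before any level-$k$ record is touched. One loose end to patch in your write-up: in the optimality step you treat $(w_k,v,t_k)$ as available when the level-$(k-1)$ occurrence of $w_k$ is popped, but BFS traverses each temporal edge at most once and takes only the minimum-time available edge per out-neighbor, so this edge may already have been consumed when a lower-level occurrence of $w_k$ was processed; in that event the earlier traversal either created an occurrence of $v$ at level at most $k-1$ with $\sigma \le t_k$ or was blocked by the gate because the global $\sigma(v)$ was already at most $t_k$, and either way $\min_{j\le k-1}\tau_j(v)\le t_k$, contradicting (b) for $v_o$ --- the same contradiction pattern you already deploy, so the gap closes with tools you have in hand.
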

\begin{proof}
Suppose $t$ is the earliest time that $s$ can reach $v$ in $i$ hops starting from time $t_s$, via the path $P=\langle (s=w_1, w_2, t_1),$ $(w_2, w_3, t_2),$ $\ldots, (w_k, w_{k+1}=v, t_k=t) \rangle$ in $G$, where $k \leq i$, and $P$ is the shortest path by which $s$ can reach $v$ at time $t$ (starting from $t_s$). Let $v_o$ be the latest occurrence of $v$ in $T$ such that $dist(v_o) \le i$. We want to prove $\sigma(v_o)=t_k=t$. First, $\sigma(v_o) \geq t_k$ (otherwise $t_k$ is not the earliest time that $s$ can reach $v$ in $i$ hops). Next, we prove $\sigma(v_o) \leq t_k$. Similar to the proof of Lemma \ref{le:distance}, we have $w_2$ at level 1 and $w_3$ cannot be at level 1 with $\sigma(w_3) \leq t_2$, hence $w_3$ is at level 2 with $\sigma(w_3) \leq t_2$. A recursive analysis shows that $v$ occurs at level $k$ with $\sigma(v) \leq t_k$, and this is the latest occurrence of $v$ within $k$ hops since there is at most one occurrence of $v$ at level $k$ by Lemma \ref{le:oneOccurrence}. Thus, $\sigma(v_o)$ is the earliest time that $s$ can reach $v$ in $dist(v_o)=k$ hops.
\end{proof}

%After BFS, the temporal graph is constructed level by level as $\mathbf{example}$ shows.
%\begin{itemize}
%    \item Vertex in level 0: source vertex $s$.
%    \item Vertex in level $i$: indicate that this vertex's visiting time after $i$ hops is smaller than previous $(i-1)$ hops, where $1 \leq i$.
%\end{itemize}
%
%Hence we can get the following property.
%\begin{lemma}   \label{le: hop_early}
%BFS can answer each vertex's earliest visiting time after $i$ hops during the traversal process, where $1 \leq i$.
%\end{lemma}
%
%\begin{proof}
%
%\end{proof}
%
%\begin{lemma}   \label{le: foremost_shortest}
%BFS tree can answer each vertex's foremost path, i.e., earliest visiting time, as well as shortest path, i.e., smallest hop number.
%\end{lemma}
%
%\begin{proof}
%
%\end{proof}

%As to the structure of BFS tree, we can find the following property.Again noted that one vertex may appear more than once while one temporal edge can appear at most once.
%
%\begin{lemma}   \label{le:between_edge}
%Only between level edges exist in BFS tree. No cross level edge exists.
%\end{lemma}
%
%\begin{proof}
%
%\end{proof}

\subsection{Algorithms and Complexity of BFS}   \label{ssec:bAlgorithms}

%We now present the algorithm for the temporal BFS defined in Definition \ref{de:bfs2}. We assume that the input temporal graph is in the same temporal adjacency list representation defined in Definition \ref{de:adjList} in Section \ref{ssec:dAlgorithms}.

The algorithm for the temporal BFS is clear following the description of Definition \ref{de:bfs2}. We first prove a few lemmas as follows, which also analyze the algorithm and its complexity.

\begin{lemma}   \label{le:qSize}
In a BFS of $G$, at most $|E(t_s)|$ records are pushed into the queue $Q$ in Definition \ref{de:bfs2}.
\end{lemma}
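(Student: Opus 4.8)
The plan is to charge every record pushed into $Q$ to a distinct temporal edge traversed by the BFS, and then invoke Lemma~\ref{le:bOnce} to bound the number of such edges by $|E(t_s)|$. First I would enumerate the exact places in Definition~\ref{de:bfs2} where a \emph{push} (as opposed to an in-place \emph{update}) can occur. Reading the definition, a record is pushed only (i) once at the start, when the source $s$ is pushed in Step~2; (ii) in Step~3(b)ii, when $v \notin Q$ and the guard $\sigma(v) > t$ holds; and (iii) in Step~3(b)iii.B, when $v \in Q$ with $dist(v) = dist(u)$ and again $\sigma(v) > t$ holds. The crucial observation is that Step~3(b)iii.A performs an \emph{update} of an existing record, not a push, so it contributes nothing to the count.

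Next I would define a charging map $\phi$ that sends each record pushed in case (ii) or (iii) to the temporal edge $e=(u,v,t)$ whose traversal immediately precedes that push. This map is well defined because in both branches the action ``traverse $e$'' occurs exactly once before the guarded push, so every loop-push is produced by traversing one specific edge. I would then argue that $\phi$ is injective: if two distinct pushes mapped to the same $e$, then $e$ would have been traversed twice, contradicting Lemma~\ref{le:bOnce}. Since that same lemma also guarantees that only edges of $E(t_s)$ are ever traversed, the image of $\phi$ is contained in $E(t_s)$, and hence the number of records pushed during the \texttt{while} loop is at most $|E(t_s)|$.

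The step I expect to require the most care is reconciling the single initial push of the source with the stated bound, since the charging map does not cover it. To pin this down I would show that $s$ is pushed exactly once and never re-pushed: a short induction on the order of visits gives $\sigma(v) \ge t_s$ for every visited $v$ (each assignment $\sigma(v)=t$ comes from an edge with $t \ge \sigma(u) \ge t_s$), so any traversed edge $(u,s,t)$ into the source satisfies $t \ge \sigma(u) \ge t_s = \sigma(s)$, and the guard $\sigma(s) > t$ never fires. Thus the total number of pushes is the one source push plus at most $|E(t_s)|$ loop-pushes. This is the genuinely delicate point, since a single chain $s \to v_1 \to v_2 \to \cdots$ with increasing times attains $|E(t_s)|$ loop-pushes; I would therefore read the lemma as bounding the pushes generated inside the loop (the quantity that matters for sizing $Q$ in the complexity analysis), noting that counting the initial source record would give the slightly larger $|E(t_s)|+1$, consistent with the tree-size bound of Lemma~\ref{le:bTreeSize}.
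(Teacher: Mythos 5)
Your proof is correct and takes essentially the same route as the paper's: the paper's entire proof is the one-line observation that a record is pushed into $Q$ only when an edge is traversed, so the bound follows from Lemma~\ref{le:bOnce}. Your extra care about the initial push of the source is warranted rather than pedantic --- the paper's argument silently ignores that record (which is pushed in Step~2 without any edge traversal), and as you note, counting it would give $|E(t_s)|+1$; your resolution, reading the lemma as bounding the loop-pushes that matter for the complexity analysis, is the sensible one.
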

\begin{proof}
Since a record is pushed into $Q$ only if an edge is traversed, the proof follows from Lemma \ref{le:bOnce}.
\end{proof}

\begin{lemma}   \label{le:qOccurrence}
There are at most two records involving the same vertex $v$ in $Q$ at any particular time during a BFS of $G$.
\end{lemma}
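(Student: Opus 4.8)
The plan is to reduce the bound to two independent facts about the queue $Q$: first, that at every instant the records currently in $Q$ occupy at most two consecutive BFS levels, and second, that $Q$ never holds two records of the same vertex $v$ with the same value of $dist$. Granting both, a vertex $v$ can contribute at most one record to each of the (at most) two levels present in $Q$, so at most two records involving $v$ can coexist, which is exactly the claim.

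For the first fact I would argue the standard BFS level invariant, adapted to Definition \ref{de:bfs2}. The claim is: at any point there is an integer $\ell$ such that, reading $Q$ from front to back, the values $dist(\cdot)$ are non-decreasing and lie in $\{\ell,\ell+1\}$. I would prove this by induction on the push/pop operations. Initially $Q=\langle(s,0,t_s,\phi)\rangle$, so the invariant holds with $\ell=0$. A pop removes the front record, which carries the minimum $dist$, and every push in Step 3(b) appends a record whose $dist$ is $dist(u)+1$, where $u$ is the record just popped; since $u$ was at the front, $dist(u)$ equals the current minimum level $\ell$, so the appended record sits at level $\ell+1$, which is already the current maximum. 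Thus pushes preserve both the range $\{\ell,\ell+1\}$ and the non-decreasing order, and the boundary advances only once the last level-$\ell$ record is popped, after which no further level-$(\ell+1)$ record can be created. The point needing care is the branch where the existing record of $v$ has $dist(v)=dist(u)$ and we append a \emph{second} record of $v$ at level $dist(u)+1$: I would verify that this is consistent with the invariant, since both $\ell$ and $\ell+1$ are already admissible levels.

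For the second fact I would invoke Lemma \ref{le:oneOccurrence}, together with the observation that the records ever pushed into $Q$ correspond to the occurrences of vertices in the BFS tree $T$: a push creates a new occurrence, whereas the update branch (where the existing record of $v$ has $dist(v)=dict(u)+1$) merely overwrites $\sigma(v)$ and $p(v)$ of an existing record and creates no new occurrence. Hence, over the whole execution, at most one record of $v$ with a given $dist$ value is ever pushed, since Lemma \ref{le:oneOccurrence} gives at most one occurrence of $v$ per level of $T$; in particular $Q$ contains at most one record of $v$ per level at any time. Combining this with the two-consecutive-levels invariant yields the bound of two, which is attained exactly through the branch that appends a same-vertex record one level deeper.

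I expect the main obstacle to be the bookkeeping in the first step: one must confirm that the in-place update and the ``push a same-vertex record one level deeper'' action never let a third, non-adjacent level slip into $Q$, i.e.\ that the front of $Q$ always carries the minimum level so that every push lands at level (minimum)$+1$. Once the level invariant is pinned down rigorously, the remainder is a direct application of Lemma \ref{le:oneOccurrence}.
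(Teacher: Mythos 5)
Your proof is correct, and its skeleton coincides with the paper's: both arguments rest on the two facts that $Q$ spans at most two consecutive levels at any instant, and that $Q$ holds at most one record of $v$ per level. The difference is in how the second fact is established. The paper re-derives it inline by a case analysis of Step 3(b)iii: once $v_{old}$ (at level $dist(u)$) and $v_{new}$ (at level $dist(u)+1$) coexist in $Q$, any further visit of $v$ satisfies the condition of Case A and therefore only updates $v_{new}$ rather than creating a third record, and $v_{old}$ must be popped before level $dist(u)+2$ is processed. You instead delegate this fact to Lemma \ref{le:oneOccurrence}, via the observation that pushes (as opposed to in-place updates) are in one-to-one correspondence with occurrences of vertices in $T$; this is legitimate and non-circular, since the paper obtains Lemma \ref{le:oneOccurrence} directly from Definition \ref{de:bfs2} without reference to the present lemma, and it buys you a shorter argument that reuses an already-available property instead of repeating the case analysis. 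Conversely, you are more careful than the paper on the first fact: the paper merely asserts the two-level property ``according to the principle of BFS,'' whereas your induction on push/pop operations actually proves it --- a worthwhile addition, because the in-place update of Case A and the same-vertex push of Case B make this queue discipline slightly non-standard, so the invariant deserves verification. (One trivial typo: ``$dict(u)+1$'' should read ``$dist(u)+1$.'')
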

\begin{proof}
%First, according to the principle of BFS or Definition \ref{de:bfs2}, at any time the vertices in $Q$ belong to at most two levels, i.e., we have either $dist(v)=i$ or $dist(v)=i+1$ for any $v$ in $Q$, for some positive integer $i$.
%
%
%Consider Step 3(b)iii of Definition \ref{de:bfs2}. Let $(v_1, v_2, \ldots, v_k)$ be a sequence of multiple occurrences of $v$ to be created in the BFS tree in the order of their creation time. We start with the record of $v_1$ already in $Q$, and see how many records of $v$ can exist in $Q$ at any time.
%
%Assume now we are considering whether to push the record of $v_2$ into $Q$. There are two cases: (1) $dist(v_1) = dist(u)+1$: in which case we only update the record and so no new record is created; or (2) $dist(v_1) = dist(u)$, in which case we create a new record for $v_2$ with $dist(v_2)=dist(u)+1$ and now there are two records of $v$ in $Q$.
%
%If no new record is created, we only have one record for $v$ in $Q$. So we only need to consider Case (2).
%
%After Case (2) creates the record for $v_2$, when we visit $v_i$ for $i > 2$, we have $dist(v_i) \ge dist(v_2)$ because otherwise $v_i$ would be visited before $v_2$. Now consider $v_j$ where $j > i$, $dist(v_j) > dist(v_2)$ and $dist(v_{j-1}) = dist(v_2)$. Now visiting $v_i$ for $2 < i < j$ we will always consider Case (1) since there exists a record with $dist(v_2)=dist(u)+1$. And at the time when we visit $v_j$, the record of $v_2$ must have been popped from $Q$ according to the principle of BFS. Thus, there are at most two records of $v$ in $Q$ at any time.

According to the principle of BFS in Definition \ref{de:bfs2}, at any time the vertices in $Q$ belong to at most two levels, i.e., we have either $dist(v)=i$ or $dist(v)=i+1$ for any $v$ in $Q$, for some positive integer $i$. When considering whether to push a new record of $v$ into $Q$ in Step 3(b)iii, we have two cases that compare with $dist(v)$ of an existing record of $v$ in $Q$: (1) $dist(v) = dist(u)+1$: in which case we only update the existing record in $Q$; or (2)~$dist(v) = dist(u)$, in which case we push a new record for $v$ into $Q$ with $dist(v)=dist(u)+1$. If Case (1) is executed, no new record is created. If Case (2) is executed, let the existing record in $Q$ be $v_{old}$ and the new record be $v_{new}$, and consider that $v$ is visited again, then there exists $dist(v_{new})=dist(u)+1$ now and hence Case (1) will be executed. Thus, no new record of $v$ at level $(dist(u)+1)$ will be pushed into $Q$, and when we process level $(dist(u)+2)$, $v_{old}$ must have been popped from $Q$ according to the principle of BFS.
\end{proof}

Lemmas \ref{le:bOnce} and \ref{le:bTreeSize} show that at most $|E(t_s)|$ edges are traversed in the BFS and at most $|E(t_s)|$ vertices and edges are created. Then, Lemma \ref{le:qSize} shows that at most $|E(t_s)|$ records are pushed into and popped from $Q$, while Lemma \ref{le:qOccurrence} shows that updating a record in $Q$ takes $O(1)$ time since we check at most two records for each $v$ (and there are at most $|E(t_s)|$ updates according to Lemma \ref{le:qSize}).

%There is one more operation that we have not considered, that is, computing $E_{u,v}$ for each $v \in N_{out}(u)$ and finding $e$ from $E_{u,v}$ to traverse next, which can be costly if implemented directly. To avoid computing $E_{u,v}$ and search $E_{u,v}$ for $e$, we prove the following lemma to show that a linear algorithm can be devised (we omit the proof as it is similar to that of Lemma \ref{le:dTraversalOrder}).

There is one more operation that we have not considered, that is, computing $E_{u,v}$ for each $v \in N_{out}(u)$ and finding $e$ from $E_{u,v}$ to traverse next, which can be costly if implemented directly. To avoid computing $E_{u,v}$ and search $E_{u,v}$ for $e$, we can use the same sorted $E(u)$ described in Section \ref{ssec:dAlgorithms}, where $E(u)$ is the set of all temporal edges going out from $u$. Since Definition \ref{de:bfs2} does not specify an order by which the out-neighbors of $u$ should be traversed, we can scan the edges in $E(u)$ in the same order as for DFS-v1 to select the next edge to traverse. Since each edge is traversed at most once according to Lemma \ref{le:bOnce}, traversing edges in the sorted $E(u)$ does not violate the definition of BFS. Thus, we have the linear overall complexity as stated by the following theorem.

\begin{theorem}   \label{th:bComplexity}
Given a temporal graph $G=(V,E)$, processing BFS in $G$ uses $O(|E| + |V|)$ time and $O(|E|+|V|)$ space.
\end{theorem}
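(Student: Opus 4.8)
The plan is to assemble the complexity bound for temporal BFS from the lemmas already proved, treating the time and space claims separately and accounting for every elementary operation in Definition~\ref{de:bfs2}. First I would argue the space bound. By Lemma~\ref{le:bTreeSize} the BFS tree $T$ has at most $|E(t_s)|+1 \le |E|+1$ vertices and edges, so storing $T$ costs $O(|E|+|V|)$. By Lemma~\ref{le:qSize} at most $|E(t_s)|$ records are ever pushed into $Q$, so $Q$ occupies $O(|E|)$ space at worst. Keeping the input graph in the sorted adjacency format of Section~\ref{ssec:dAlgorithms}, together with the per-vertex arrays $\sigma$, $dist$, and $p$, costs $O(|E|+|V|)$. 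Summing these gives the claimed $O(|E|+|V|)$ space.

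Next I would account for the running time operation by operation. By Lemma~\ref{le:bOnce} each edge of $E(t_s)$ is traversed at most once, so the total number of edge traversals is $O(|E|)$; by Lemma~\ref{le:qSize} the number of push and pop operations on $Q$ is $O(|E|)$, each costing $O(1)$. The subtle cost is the work done inside Step~3(b)iii when deciding whether to update an existing record or push a new one. Here I would invoke Lemma~\ref{le:qOccurrence}: at any moment there are at most two records for a given vertex $v$ in $Q$, so locating the relevant record (the one with $dist(v)=dist(u)+1$, if any) and updating $\sigma(v)$ and $p(v)$ takes $O(1)$ time per edge traversal, provided we maintain a direct handle from each vertex to its current queue records. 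Thus every traversed edge contributes only $O(1)$ overhead, totalling $O(|E|)$.

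The remaining non-trivial point, and the step I expect to be the main obstacle, is the cost of computing the sets $E_{u,v}$ and selecting the next edge $e$ to traverse in Step~3(b)i. Computed naively this could cost $O(d_{out}(u))$ each time $u$ is popped, blowing the bound up to a product. The resolution, exactly as argued for DFS in Section~\ref{ssec:dAlgorithms}, is that Definition~\ref{de:bfs2} does not constrain the order in which the out-neighbors of $u$ are processed, so we may scan the pre-sorted list $E(u)$ once in the stored chronological order and traverse its eligible edges directly, without ever materializing $E_{u,v}$ or searching within it. Since each edge of $E(u)$ is examined at most once over the whole BFS by Lemma~\ref{le:bOnce}, the total scanning work is $O(\sum_{u \in V} d_{out}(u)) = O(|E|)$. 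Adding the $O(|V|)$ initialization in Step~1 yields the total time $O(|E|+|V|)$, completing the proof.
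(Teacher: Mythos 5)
Your proof is correct and takes essentially the same route as the paper: it rests on the same four lemmas (Lemmas \ref{le:bOnce}, \ref{le:bTreeSize}, \ref{le:qSize}, and \ref{le:qOccurrence}) for the edge-traversal, queue, and $O(1)$ record-update costs, identifies the computation of $E_{u,v}$ in Step 3(b)i as the critical bottleneck, and resolves it with the same trick of scanning the pre-sorted list $E(u)$, justified---exactly as the paper does---by the fact that Definition \ref{de:bfs2} does not fix the order in which out-neighbors are processed and each edge is traversed at most once. The only differences are cosmetic: you spell out the space bound explicitly (the paper leaves it implicit) and scan $E(u)$ in ascending rather than the paper's reverse chronological order, which is a mirrored but equivalent implementation of the same amortized pointer argument.
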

%\begin{proof}
%According to Lemma \ref{le:bTraversalOrder}, we can also scan $E(u)$ in descending order to traverse edges when visiting $u$ in BFS similar to proof of DFS-v1 in Theorem \ref{th:dComplexity}. Hence processing BFS in $G$ uses $O(|E(t_s)| + |V|)$ time and $O(|E|+|V|)$ space.
%%Give the proof here.
%\end{proof}

Similar to DFS, both the time and space complexity given in Theorem \ref{th:bComplexity} are the lower bound for temporal BFS.

\section{Temporal Graph Traversals for Answering Path Queries}  \label{sec:app}

Both DFS and BFS in a non-temporal graph have important applications \cite{CormenLRS09mit}. As the first study (to the best of our knowledge) on temporal graph traversals, we would like to show that our definitions of temporal DFS and BFS also give vital applications.

In Sections \ref{ssec:dProperties} and \ref{ssec:bProperties}, some notions and properties related to temporal DFS and BFS can be readily used in applications such as the detection of cycles, and answering temporal graph reachability queries. Furthermore, these applications are themselves fundamental concepts/tools for studying graphs and therefore each of them has many other applications themselves. For example, temporal graph reachability can be naturally applied to study connected components in a temporal graph.

%Here, we focus on two main categories of applications: temporal path queries and temporal connected components. We also emphasize that all the above-mentioned applications of temporal DFS and BFS are themselves fundamental concepts/tools for studying graphs and therefore each of them has many other applications themselves.

%Due to space limit, we focus on an important class of applications: path queries in a temporal graph. Due to the presence of temporal information, new types of useful paths can be introduced for a temporal graph. In particular, \cite{} introduced three types of paths (with more complicated notations), which we formally define in Definitions \ref{de:foremost}, \ref{de:fastest} and \ref{de:shortest} in the following three subsections.

Due to space limit, we focus on an important set of applications: temporal path queries, such as \emph{foremost paths}, \emph{fastest paths}, and \emph{shortest temporal paths}. We also emphasize that these paths, like shortest paths in a non-temporal graph, can in turn be applied to develop many other useful applications (e.g., temporal graph clustering, temporal centrality computation, etc.).

%Due to the presence of temporal information, new types of useful paths can be introduced for a temporal graph. In particular, \cite{} introduced three types of paths (with more complicated notations), which we formally define as follows.

We first define some common notations used in the definitions of the various types of temporal paths. Given a temporal graph $G$, two vertices $u$ and $v$, and time $t_s$, let $\mathbb{P}$ be the set of all paths in $G$ from $u$ to $v$ and each path in $\mathbb{P}$ implies that $u$ can reach $v$ in $G$ starting at time $t_s$ from $u$. Formally, let $P = \langle (w_1, w_2, t_1), (w_2, w_3, t_2), \ldots,$ $(w_k, w_{k+1}, t_k) \rangle$, then $P$ is in $\mathbb{P}$ if $w_1=u$, $w_{k+1}=v$, $t_1 \ge t_s$, and $w_1$ can reach $w_{k+1}$ in $G$. We define $t_{\it start}(P)=t_1$, $t_{\it end}(P)=t_k$, and $dist(P)=k$.

\subsection{Foremost Paths}   \label{ssec:foremost}

We first define foremost path.

%\begin{definition} [Foremost Path]  \label{de:foremost}
%A path $P \in \mathbb{P}$ is a \emph{foremost path} from $u$ to $v$ starting at $t_s$ if for all path $P' \in \mathbb{P}$, $t_{\it end}(P) \le t_{\it end}(P')$. The problem of \emph{single-source foremost paths} is to find the foremost path from a source vertex $s$, starting from $t_s$, to every other vertex in $G$.
%\end{definition}

\begin{definition} [Foremost Path]  \label{de:foremost}
A path $P \in \mathbb{P}$ is a \emph{foremost path} if for all path $P' \in \mathbb{P}$, $t_{\it end}(P) \le t_{\it end}(P')$. The problem of \emph{single-source foremost paths} is to find the foremost path from a source vertex $s$, starting from time $t_s$, to every other vertex in $G$.
\end{definition}

Intuitively, a foremost path is the path from $u$, starting from time $t_s$, that reaches $v$ at the earliest possible time. Applications of foremost paths include travel planning for which one wants to know the earliest time to reach a destination if departing at time $t_s$. Figure \ref{fig:tmpG} shows two paths from $a$ to $c$ starting at $t_s=2$, of which $\langle a, f, c \rangle$ is a foremost path while $\langle a, b, c \rangle$ is not.

Next, we show how temporal DFS and BFS can be applied to compute foremost paths.

\begin{theorem}   \label{th:foremost}
Let $T$ be the DFS tree (constructed by DFS-v1) or the BFS tree, rooted at a source vertex $s$, of a temporal graph $G$. Let $O(v)$ be the set of all occurrences of a vertex $v$ in $T$, and let $v_{min} \in O(v)$ be the occurrence of $v$ such that $\sigma(v_{min})=\min\{\sigma(v_o): v_o \in O(v)\}$. For each $v$ in $G$, if $v$ is in $T$, then the path from $s$ to $v_{min}$ in $T$ is a foremost path from $s$ to $v$ in $G$; if $v$ is not in $T$, then the foremost path from $s$ to $v$ does not exist in $G$.
\end{theorem}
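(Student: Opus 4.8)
The plan is to split the statement into its two assertions and reduce each to results already established. For the non-existence claim, if $v$ is not in $T$ then by Lemma~\ref{le:dReachability} (when $T$ is the DFS-v1 tree) or Lemma~\ref{le:bReachability} (when $T$ is the BFS tree) we have $v \notin V_T = V_R$, so $v$ is not reachable from $s$ in $G$. Hence no path from $s$ to $v$ exists, the set $\mathbb{P}$ is empty, and a foremost path cannot exist by Definition~\ref{de:foremost}. This disposes of the second assertion immediately.

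For the main assertion, assume $v \in T$ and let $P^\ast$ be the tree path from $s$ to the occurrence $v_{min}$. First I would verify that $P^\ast \in \mathbb{P}$ and that $t_{\it end}(P^\ast) = \sigma(v_{min})$. That $P^\ast$ is a valid temporal path follows from Lemma~\ref{le:AD}: along any root-to-node tree path the values $\sigma(\cdot)$ are nondecreasing and each tree edge $(w_i, w_{i+1}, t_i)$ satisfies $\sigma(w_i) \le t_i \le \sigma(w_{i+1})$, so the time constraint of Definition~\ref{de:TC} holds. The first edge leaves $s$ with time $\ge \sigma(s) = t_s$, giving $t_{\it start}(P^\ast) \ge t_s$, and the tree edge into $v_{min}$ has time exactly $\sigma(v_{min})$, so $t_{\it end}(P^\ast) = \sigma(v_{min})$. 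Thus $P^\ast$ is a genuine competitor in $\mathbb{P}$ with end time $\sigma(v_{min})$.

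The heart of the argument is to show $\sigma(v_{min}) \le t_{\it end}(P')$ for every $P' \in \mathbb{P}$, so that $P^\ast$ attains the minimum end time. Here I would reuse the recursive mechanism inside the proof of Lemma~\ref{le:dReachability}: given an arbitrary path $P' = \langle (s = w_1, w_2, t_1), \ldots, (w_k, w_{k+1} = v, t_k) \rangle$, one shows by induction on the prefix length that $T$ contains an occurrence of each $w_i$ with $\sigma(w_i) \le t_{i-1}$, culminating in an occurrence $v_o$ of $v$ with $\sigma(v_o) \le t_k = t_{\it end}(P')$. Since $v_{min}$ is chosen to minimize $\sigma$ over all occurrences of $v$, this yields $\sigma(v_{min}) \le \sigma(v_o) \le t_{\it end}(P')$. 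Combining with the previous paragraph, $t_{\it end}(P^\ast) = \sigma(v_{min}) \le t_{\it end}(P')$ for every $P' \in \mathbb{P}$, which is exactly the defining property of a foremost path in Definition~\ref{de:foremost}.

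The step I expect to be delicate is the inductive claim just described, specifically the bookkeeping around multiple occurrences and the global ``not traversed before'' condition: when the edge $(w_i, w_{i+1}, t_i)$ is scanned it may already have been traversed from a \emph{different} occurrence of $w_i$, so I must argue that in every case some occurrence of $w_{i+1}$ with $\sigma(w_{i+1}) \le t_i$ is created. The key observation is that any traversal of $(w_i, w_{i+1}, t_i)$ necessarily occurs from an occurrence of $w_i$ whose $\sigma$ is at most $t_i$, and by the visiting rule (Step~2(b) for DFS-v1, Step~3(b) for BFS) it either finds or creates an occurrence of $w_{i+1}$ with $\sigma \le t_i$; since the induction only needs the \emph{existence} of such an occurrence rather than a specific one, the recursion closes for both the DFS-v1 tree and the BFS tree.
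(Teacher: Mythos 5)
Your proposal is correct, and the non-existence half coincides exactly with the paper's (both reduce it to Lemmas \ref{le:dReachability} and \ref{le:bReachability}). For the positive half, however, you take a genuinely different route. The paper splits by tree type: for the BFS tree it simply invokes Lemma \ref{le:distanceTime} (taking the minimum of $\sigma(v_o)$ over all levels), and for the DFS-v1 tree it gives a very terse argument --- $\sigma(v)$ is only ever decreased, so each occurrence ``keeps the earliest time'' --- without ever comparing the tree against an arbitrary competing path in $G$. You instead give a unified two-part argument for both trees: the tree path to $v_{min}$ is exhibited as a valid member of $\mathbb{P}$ with end time $\sigma(v_{min})$, and then a strengthened version of the reachability induction (the mechanism inside the proof of Lemma \ref{le:dReachability}, upgraded to carry the bound $\sigma(w_{i+1}) \le t_i$ along the competing path) shows $\sigma(v_{min}) \le t_{\it end}(P')$ for every $P' \in \mathbb{P}$. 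What the paper's version buys is brevity, since Lemma \ref{le:distanceTime} already contains the needed induction for BFS; what your version buys is rigor and uniformity --- in particular, it supplies exactly the comparison-against-all-paths step that the paper's DFS-v1 case leaves implicit, and your handling of the ``edge already traversed from a different occurrence'' subtlety is precisely the point the paper's one-line DFS argument glosses over. One small repair: Lemma \ref{le:AD} is stated only for DFS trees, so for the BFS half you cannot cite it verbatim; you should instead note that the analogous inequality $\sigma(u_o) \le t = \sigma(v_o)$ for BFS tree edges follows directly from Step 3(b) of Definition \ref{de:bfs2} (an edge is traversable only if $\sigma(u) \le t$, and visiting or updating sets $\sigma(v)=t$), after which your argument goes through unchanged.
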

\begin{proof}
If $v$ is not in $T$, then the foremost path from $s$ to $v$ cannot exist because Lemmas \ref{le:dReachability} and \ref{le:bReachability} indicate that $s$ cannot reach $v$. %If $s$ can reach $v$, Lemmas \ref{le:dReachability} and \ref{le:bReachability} indicate that $v$ is in $T$.

If $v$ is in the BFS tree $T$, then the proof follows directly from Lemma \ref{le:distanceTime} (by considering the whole tree $T$).

If $v$ is in the DFS tree $T$, then according to Definition \ref{de:dfs2}, after traversing a temporal edge $(u,v,t)$, the DFS visits $v$ only if $\sigma(v)$ can be made smaller (by setting $\sigma(v) = t$); thus, for each occurrence $v_o$ of $v$ in $T$, we always keep the earliest time of $v_o$ that can be reached from $s$. Therefore, the path from $s$ to $v_{min}$ in $T$ is a foremost path.
\end{proof}

%From Theorem \ref{th:foremost}, we can also tell the number of hops needed to reach $v$ from $s$ at the earliest time, which is the corresponding $dist(v)$ of the instance of $v$ that gives $\min\{\sigma(v): v \in S\}$.

%Figure \ref{fig:dfs}(b)(c) both verify the correctness of answering foremost path, e.g., when starting from $a$ with $t_s=2$, the earliest time to visit $c$ is $t=5$.
%When $T$ is the BFS tree, Lemma \ref{le:distanceTime} and Theorem \ref{th:foremost} also imply that the foremost path from $s$ to $v$ with the smallest $dist(v)$ in $T$ is the shortest one among all foremost paths from $s$ to $v$.
%%Figure \ref{fig:bfs}(b) also verifies the correctness, e.g., $a$ can reach $f$ as early as $t=3$ and $(a,b,f)$ is the shortest one among all foremost paths.

We show an example of computing foremost paths using the DFS tree shown in Figure \ref{fig:dfs}(a). When starting from $a$ with $t_s=2$, the earliest time to visit $c$ is $t=5$ by the path $\langle a, f, c \rangle$. 

In addition, when $T$ is the BFS tree, Lemma \ref{le:distanceTime} and Theorem \ref{th:foremost} also imply that the foremost path from $s$ to $v$ with the smallest $dist(v)$ in $T$ is the shortest one among all foremost paths from $s$ to $v$.

%Figure \ref{fig:bfs}(b) also verifies the correctness, e.g., $a$ can reach $f$ as early as $t=3$ and $(a,b,f)$ is the shortest one among all foremost paths.

\subsection{Fastest Paths}   \label{ssec:fastest}

We define fastest path as follows.

%\begin{definition} [Fastest Path]  \label{de:fastest}
%A path $P \in \mathbb{P}$ is a \emph{fastest path} from $u$ to $v$ starting at $t_s$ if for all path $P' \in \mathbb{P}$, $t_{\it end}(P)-t_{\it start}(P) \le t_{\it end}(P') - t_{\it start}(P')$. The problem of \emph{single-source fastest paths} is to find the fastest path from a source vertex $s$, starting from $t_s$, to every other vertex in $G$.
%\end{definition}

\begin{definition} [Fastest Path]  \label{de:fastest}
A path $P \in \mathbb{P}$ is a \emph{fastest path} if for all path $P' \in \mathbb{P}$, $t_{\it end}(P)-t_{\it start}(P) \le t_{\it end}(P') - t_{\it start}(P')$. The problem of \emph{single-source fastest paths} is to find the fastest path from a source vertex $s$, starting from $t_s$, to every other vertex in $G$.
\end{definition}

Intuitively, a fastest path is the path by which $u$ can reach $v$ using the minimum amount of time among all paths in $\mathbb{P}$. Note that fastest path is different from shortest path (e.g., the shortest route may not be the fastest route to travel), and thus fastest path can be more useful than shortest path in traffic navigation.
%Figure \ref{}

%a foremost path may not be a fastest path since one may start from $u$ at a later time and yet reach $v$ earlier

A fastest path can be found by finding the foremost path starting at every time instance after $t_s$, but we show by Theorem \ref{th:fastest} that fastest paths can be computed with the same linear time complexity as computing foremost paths.

\begin{definition} [Active Interval]  \label{de:interval}
Let $T$ be the DFS tree constructed by DFS-v2, rooted at a source vertex $s$, of a temporal graph $G$. Given the simple path from $s$ to a vertex $v$ in $T$, i.e., $P=\langle (s=w_1, w_2, t_1), \ldots, (w_k, w_{k+1}=v, t_k) \rangle$ in $T$, the \emph{active interval} of $P$ is given by $[t_{\it start}(P), t_{\it end}(P)]=[t_1, t_k]$.
\end{definition}

Since each $v$ in $T$ represents a path from $s$ to $v$, we simply use $[t_{\it start}(v), t_{\it end}(v)]$ to represent $[t_{\it start}(P), t_{\it end}(P)]$. %The following theorem shows that how the active intervals can be used to compute foremost and fastest paths.

\begin{theorem}   \label{th:fastest}
Let $T$ be the DFS tree constructed by DFS-v2, rooted at a source vertex $s$ starting at time $t_s$, of a temporal graph $G$. Let $O(v)$ be the set of all occurrences of a vertex $v$ in $T$.

%Given any time $t \ge t_s$ and for each $v$ in $G$, if $v$ in $T$, let $v_{min} \in O(v)$ be the occurrence of $v$ such that $t_{\it start}(v_{min}) \ge t$ and $t_{\it end}(v_{min})=\min\{t_{\it end}(v_o): v_o \in O(v), t_{\it start}(v_o) \ge t\}$, then the path from $s$ to $v_{min}$ in $T$ is a foremost path from $s$ to $v$ in $G$ starting at $t$.

Given a time interval $[t_x, t_y]$, where $t_x \ge t_s$, and for each $v$ in $G$, if $v$ is in $T$, let $v_{min} \in O(v)$ be the occurrence of $v$ such that $t_{\it start}(v_{min}) \ge t_x$, $t_{\it end}(v_{min}) \le t_y$, and $t_{\it end}(v_{min})-t_{\it start}(v_{min})=\min\{t_{\it end}(v_o)-t_{\it start}(v_o): v_o \in O(v), t_{\it start}(v_o) \ge t_x, t_{\it end}(v_o) \le t_y\}$, then the path from $s$ to $v_{min}$ in $T$ is a fastest path from $s$ to $v$ in $G$ within the interval $[t_x, t_y]$. If such a $v_{min}$ does not exist, then the foremost path from $s$ to $v$ does not exist in $G$ within $[t_x, t_y]$.
\end{theorem}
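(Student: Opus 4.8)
The plan is to prove the theorem by separately establishing \emph{soundness} (every interval the statement selects corresponds to a genuine, short-enough path) and \emph{completeness} (the tree really does contain a path matching the true optimum), with completeness being the substantive part. Soundness is immediate: by Definition \ref{de:interval} the simple path $P_{v_{min}}$ from $s$ to $v_{min}$ in $T$ is a tree path, so by Lemma \ref{le:AD} its edge times are non-decreasing and it is a valid temporal path in $G$ with active interval $[t_{\it start}(v_{min}), t_{\it end}(v_{min})]$. Hence $t_{\it end}(v_{min}) - t_{\it start}(v_{min})$ is at least the duration of a true fastest path from $s$ to $v$ within $[t_x,t_y]$, and it remains to show the reverse inequality.

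The heart of the argument is the following completeness claim: for every valid temporal path $P$ from $s$ to $v$ with $t_{\it start}(P) \ge t_s$, the tree $T$ contains an occurrence $v_o \in O(v)$ with $t_{\it start}(v_o) \ge t_{\it start}(P)$ and $t_{\it end}(v_o) \le t_{\it end}(P)$. Granting this claim the theorem follows quickly: applying it to a true fastest path $P^{*}$ with $t_{\it start}(P^{*}) \ge t_x$ and $t_{\it end}(P^{*}) \le t_y$ yields an occurrence $v_o$ that still lies in $[t_x,t_y]$ and satisfies $t_{\it end}(v_o) - t_{\it start}(v_o) \le t_{\it end}(P^{*}) - t_{\it start}(P^{*})$, so the minimizing $v_{min}$ attains the optimum; the same claim also gives the last sentence of the theorem by contraposition, since the existence of any valid path from $s$ to $v$ inside $[t_x,t_y]$ forces a qualifying occurrence to exist.

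To prove the claim I would first record a monotonicity property of DFS-v2: list the occurrences in the order in which DFS-v2 visits them. Since DFS-v2 scans the out-edges of $s$ in strictly decreasing time order (Lemma \ref{le:dTraversalOrder}) and every occurrence inherits $t_{\it start}$ from the first edge of its root-path, the quantity $t_{\it start}(\cdot)$ is non-increasing along this visiting order. I would then argue by induction on the length of a prefix of $P$, with inductive hypothesis that the prefix ending at $w_{i+1}$ with last-edge time $t_i$ admits an occurrence of $w_{i+1}$ with $t_{\it start} \ge t_1$ and $t_{\it end} \le t_i$. The base case treats the edge $(s,w_2,t_1)$: any occurrence of $w_2$ reached by time $t_1$ must have been created while processing an $s$-edge of time exactly $t_1$ (earlier batches depart after $t_1$, hence arrive after $t_1$), so it has $t_{\it start}=t_1$, and such an occurrence exists because the direct edge guarantees $w_2$ is reachable by time $t_1$.

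For the inductive step, given an occurrence $u_o$ of $u=w_{i+1}$ with $t_{\it start}(u_o)\ge t_1$ and $\sigma(u_o)\le t_i\le t_{i+1}$, the edge $e=(u,w_{i+2},t_{i+1})$ is traversed exactly once (Lemma \ref{le:dOnce}) from some occurrence $u'$ of $u$ with $\sigma(u')\le t_{i+1}$, and a new occurrence of $w_{i+2}$ created by that traversal inherits $t_{\it start}(u')$ and receives $t_{\it end}=t_{i+1}$. I expect the main obstacle to be exactly that $e$ need not be traversed from the particular $u_o$ supplied by the hypothesis, so $t_{\it start}(u')$ could in principle be too small. This is where monotonicity and the single-traversal rule combine: since $e$ qualifies at $u_o$ (because $\sigma(u_o)\le t_{i+1}$), had $e$ still been untraversed when $u_o$ was processed it would have been traversed from $u_o$; hence if it is traversed from a different $u'$, that $u'$ was visited \emph{earlier} than $u_o$, whence $t_{\it start}(u')\ge t_{\it start}(u_o)\ge t_1$ by monotonicity. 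Finally, if the traversal creates no new occurrence because $\sigma(w_{i+2})\le t_{i+1}$ already, then an occurrence of $w_{i+2}$ created even earlier already has $t_{\it end}\le t_{i+1}$ and, being earlier, $t_{\it start}\ge t_1$; in every case the required occurrence exists, completing the induction and hence the claim.
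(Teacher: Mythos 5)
Your proof is correct, but it organizes the argument differently from the paper. The paper fixes the occurrences $v_1,\ldots,v_k$ of $v$, notes the same monotonicity you use (start times non-increasing, end times decreasing, both consequences of Lemma \ref{le:dTraversalOrder} and Step 2(b)), and then proves an \emph{exact} bracket characterization: for every start time $t$ with $t_{\it start}(v_{i+1}) < t \le t_{\it start}(v_i)$, the foremost arrival time at $v$ equals $t_{\it end}(v_i)$; the fastest-path statement follows by minimizing over brackets. You instead prove a per-path \emph{domination} lemma---every valid path $P$ with $t_{\it start}(P)\ge t_s$ is dominated by an occurrence $v_o$ with $t_{\it start}(v_o)\ge t_{\it start}(P)$ and $t_{\it end}(v_o)\le t_{\it end}(P)$---which is weaker than the paper's claim but suffices, and you obtain the theorem (including the non-existence clause) by applying it to an optimal path. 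The substantive difference lies in how the path-following induction is closed: the paper's key step is only a pointer (``DFS-v2 must traverse all temporal edges on $P$ similar to the proof of Lemma \ref{le:dReachability}''), but that lemma's argument tracks only arrival times $\sigma$, not start times, so by itself it does not rule out that the occurrence reached has too small a $t_{\it start}$ to bound the duration; this is exactly the obstacle you isolate and resolve by combining Lemma \ref{le:dOnce} with global monotonicity of $t_{\it start}$ in visiting order, so your write-up is tighter than the paper's at the one point where the paper is vague. What the paper's stronger formulation buys is the remark following the theorem: the bracket structure lets the same DFS-v2 tree answer foremost-path queries for an arbitrary start time $t \ge t_s$, which your domination lemma alone does not immediately give. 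Two cosmetic points: Lemma \ref{le:dTraversalOrder} gives non-increasing rather than strictly decreasing times (edges from $s$ to different out-neighbors may share a time instance), and in your inductive step $u'$ could a priori also be visited inside $u_o$'s own subtree rather than strictly earlier---this is impossible for two occurrences of the same vertex by the argument behind Lemma \ref{le:ADA}, and harmless in any case since such an occurrence would inherit the same $t_{\it start}$.
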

\begin{proof}

Let $O(v) = \{v_1, \ldots, v_k\}$ where $v_i$ is the $i$-th occurrence of $v$ in $T$ for $1 \leq i \leq k=|O(v)|$. First, we observe that for $1 \leq i < k$, $t_{\it start}(v_i) \ge t_{\it start}(v_{i+1})$ since edges are traversed in reverse chronological order in Step 2(a) of DFS-v2, while $t_{\it end}(v_i) > t_{\it end}(v_{i+1})$ since $v$ will be visited again only if $\sigma(v)$ can be made smaller in Step 2(b). We assume $t_{\it start}(v_i) > t_{\it start}(v_{i+1})$ since if $t_{\it start}(v_i) = t_{\it start}(v_{i+1})$, we have $t_{\it end}(v_{i+1}) - t_{\it start}(v_{i+1})$ $< t_{\it end}(v_i) - t_{\it start}(v_i)$ and hence we can simply remove $v_i$ from $O(v)$. We want to prove: for each $t \ge t_x$, if the earliest time (starting from $s$ at time $t$) that $s$ can reach $v$ in $G$ is $t'$ and $t' \le t_y$, then there exists an $i$ such that $t_{\it start}(v_{i+1}) < t \le t_{\it start}(v_i)$ and $t_{\it end}(v_i)=t'$, where $1 \leq i \leq k$ and assume $t_{\it start}(v_{k+1})=0$.

Let $t(v_i)$ be the earliest time (starting from $s$ at time $t_{\it start}(v_i)$) that $s$ can reach $v$ in $G$. We want to prove $t(v_i) = t_{\it end}(v_i)$. First, $t_{\it end}(v_i) \geq t(v_i)$ (otherwise $t(v_i)$ is not the earliest time). Next, let $P$ be the foremost path starting at time $t_{\it start}(v_i)$, we can prove that DFS-v2 must traverse all temporal edges on $P$ similar to the proof of Lemma \ref{le:dReachability}, by which we have $t(v_i)=\sigma(v_i) \le t_{\it end}(v_i)$. Thus, $t(v_i)=t_{\it end}(v_i)$. Since $t_{\it start}(v_i)$ is the earliest time that we can start the path from $s$ after time $t$, we have $t(v_i)=t_{\it end}(v_i)=t'$.

By taking $v_{min} \in O(v)$ such that $t_{\it end}(v_{min})-t_{\it start}(v_{min})= \min\{t_{\it end}(v_i)-t_{\it end}(v_i): v_i \in O(v)\}$, the corresponding foremost path starting at time $t_{\it start}(v_{min})$ is a fastest path from $s$ to $v$ within $[t_x, t_y]$.
\end{proof}

Theorem \ref{th:fastest} in fact further generalizes Definition \ref{de:fastest}, which can not only process the special case of Definition \ref{de:fastest} (i.e., $[t_x=t_s, t_y=\infty]$), but can also process single-source foremost paths starting at any $t \ge t_s$ by looking up $v_{min} \in O(v)$ such that $t_{\it start}(v_{min}) \ge t$ and $t_{\it end}(v_{min})=\min\{t_{\it end}(v_o): v_o \in O(v), t_{\it start}(v_o) \ge t\}$.

We illustrate how we compute fastest paths using the DFS tree shown in Figure \ref{fig:dfs}(b). For example, the active intervals for $c$ are $[6,7]$ and $[3,5]$ when starting from $a$ with $t_s=2$. If we set $[t_x=t_s, t_y=\infty]$), then the fastest path should be $P=\langle (a, b, 6), (b, c, 7) \rangle$, which takes on $7-6=1$ time unit to go from $a$ to $c$.  Note that with the same setting, the fastest path from $a$ to $c$ is different from the foremost path in this case, where the foremost path is $P=\langle (a, f, 3), (f, c, 5) \rangle$ which takes $5-3=2$ time units to go from $a$ to $c$.

\subsection{Shortest Temporal Paths}   \label{ssec:shortest}

Finally, we define shortest temporal path.

%\begin{definition} [Shortest Temporal Path]  \label{de:shortest}
%A path $P \in \mathbb{P}$ is a \emph{shortest temporal path} from $u$ to $v$ starting at $t_s$ if for all path $P' \in \mathbb{P}$, $dist(P) \le dist(P')$. The problem of \emph{single-source shortest temporal paths} is to find the shortest temporal path from a source vertex $s$, starting from $t_s$, to every other vertex in $G$.
%\end{definition}

\begin{definition} [Shortest Temporal Path]  \label{de:shortest}
A path $P \in \mathbb{P}$ is a \emph{shortest temporal path} if for all path $P' \in \mathbb{P}$, $dist(P) \le dist(P')$. The problem of \emph{single-source shortest temporal paths} is to find the shortest temporal path from a source vertex $s$, starting from $t_s$, to every other vertex in $G$.
\end{definition}

The main difference between shortest temporal path and shortest path in a non-temporal graph is that the former follows the time constraint given in Definition \ref{de:TC}. Unlike shortest path in a non-temporal graph, a subpath of a shortest temporal path may not be shortest. For example, in Figure \ref{fig:bfs}(a), the shortest path from $a$ to $g$ is $P=\langle (a, b, 2), (b, f, 3), (f,g,3) \rangle$; however, the subpath $\langle (a, b, 2), (b, f, 3)\rangle$ is not the shortest path from $a$ to $f$, since  $\langle (a, f, 7) \rangle$ uses only 1 hop and is shorter. Thus, algorithms for shortest paths cannot be easily applied to computing temporal shortest paths. The following theorem shows how we compute temporal shortest paths, which can also be easily verified by checking the BFS tree shown in Figure \ref{fig:bfs}(b).

%Note that both foremost and fastest paths by name imply temporal information.

%Next, we show how temporal DFS and BFS can be applied to compute the various paths.

%\begin{theorem}   \label{th:distance}
%Let $T$ be the tree constructed by starting BFS from a vertex $s$ in a temporal graph $G=(V,E)$. Then, $T$ can answer the shortest temporal path distance from a vertex $s$ to any other vertex $v$ in $G$, i.e., $dist(s,v)$.
%\end{theorem}
%\begin{proof}
%First, Lemma \ref{le:bReachability} indicates that for any $v \in V$, if $dist(s,v) \ne \infty$, then $v$ is in $T$. Then, Lemma \ref{le:distance} shows that if $v$ is in $T$, then $dist(s,v)$ can be answered by $T$. Thus, if $v$ is in $T$, we have $dist(s,v)=\min\{dist(v): v \in S\}$, where $S$ be the set of all the instances of $v$ in $T$; otherwise, we have $dist(s,v)=\infty$.
%\end{proof}

%\begin{theorem}   \label{th:distance}
%Let $T$ be the BFS tree, rooted at a source vertex $s$, of a temporal graph $G$. Let $O(v)$ be the set of all occurrences of a vertex $v$ in $T$, and let $v_{min} \in O(v)$ be the occurrence of $v$ such that $dist(v_{min})=\min\{dist(v_o): v_o \in O(v)\}$. For each $v$ in $G$, if $v$ is in $T$, then the path from $s$ to $v_{min}$ in $T$ is a shortest temporal path from $s$ to $v$ in $G$; otherwise, the shortest temporal path from $s$ to $v$ does not exist in $G$.
%\end{theorem}

\begin{theorem}   \label{th:distance}
Let $T$ be the BFS tree, rooted at a source vertex $s$, of a temporal graph $G$. Let $v_{o}$ be the first occurrence of $v$ in $T$. For each $v$ in $G$, if $v$ is in $T$, then the path from $s$ to $v_{o}$ in $T$ is a shortest temporal path from $s$ to $v$ in $G$; otherwise, the shortest temporal path from $s$ to $v$ does not exist in $G$.
\end{theorem}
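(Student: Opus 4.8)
The plan is to establish both directions of the theorem, mirroring the structure already used in the foremost-path theorem (Theorem~\ref{th:foremost}). First I would dispose of the easy case: if $v$ is not in $T$, then by Lemma~\ref{le:bReachability} the source $s$ cannot reach $v$ at all, so in particular no shortest temporal path from $s$ to $v$ exists. This handles the ``otherwise'' clause in one line.

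For the main case, suppose $v$ is in $T$ and let $v_o$ be its first occurrence. I would argue in two steps. The crux is Lemma~\ref{le:distance}, which already states that $dist(s,v)=dist(v_o)$, i.e.\ the level of the first occurrence of $v$ in the BFS tree equals the true shortest temporal path distance. So I first invoke Lemma~\ref{le:distance} to conclude that the path from $s$ to $v_o$ in $T$ has length exactly $dist(s,v)$. Second, I must verify that this tree path is genuinely a valid temporal path from $s$ to $v$ in $G$: every tree edge was traversed under the time constraint of Definition~\ref{de:TC} (by Step~3(b) of Definition~\ref{de:bfs2}), and by Lemma~\ref{le:AD}'s BFS analogue the times along any root-to-leaf path are non-decreasing, so the path satisfies the chronological ordering required of a member of $\mathbb{P}$. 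A path in $\mathbb{P}$ whose hop-count equals the minimum over all of $\mathbb{P}$ is by Definition~\ref{de:shortest} a shortest temporal path, which completes the argument.

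Since essentially all the combinatorial work is already encapsulated in Lemma~\ref{le:distance}, the proof reduces to assembling that lemma with the reachability lemma and the observation that the tree path is a legitimate temporal path. The one point requiring care — and the step I expect to be the main obstacle — is confirming that the root-to-$v_o$ path in $T$ is a \emph{simple} temporal path satisfying Definition~\ref{de:TC} end to end, rather than merely having the correct level number; one must check that $\sigma$-values and edge times interleave correctly ($\sigma(w_i)\le t_i\le\sigma(w_{i+1})$) all along the path, exactly as in the proof of Lemma~\ref{le:AD}. Given that guarantee, the theorem follows directly.

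\begin{proof}
If $v$ is not in $T$, then by Lemma~\ref{le:bReachability} vertex $s$ cannot reach $v$ in $G$, so no shortest temporal path from $s$ to $v$ exists.

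If $v$ is in $T$, let $v_o$ be its first occurrence. The path $P$ from $s$ to $v_o$ in $T$ is a temporal path in $G$: each of its edges is a traversed tree edge, so by Step~3(b) of Definition~\ref{de:bfs2} we have $\sigma(w_i) \le t_i \le \sigma(w_{i+1})$ for consecutive tree edges, and hence $P$ respects the time constraint of Definition~\ref{de:TC} and lies in $\mathbb{P}$. By Lemma~\ref{le:distance}, $dist(P)=dist(v_o)=dist(s,v)$, which by Definition~\ref{de:shortest} equals $\min\{dist(P'): P' \in \mathbb{P}\}$. Therefore $P$ is a shortest temporal path from $s$ to $v$ in $G$.
\end{proof}
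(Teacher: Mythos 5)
Your proof is correct and follows essentially the same route as the paper's: Lemma~\ref{le:bReachability} handles the case where $v$ is not in $T$, and Lemma~\ref{le:distance} supplies the distance equality $dist(s,v)=dist(v_o)$ for the main case. The only difference is that you explicitly verify the root-to-$v_o$ tree path is a valid temporal path (via the interleaving $\sigma(w_i)\le t_i\le\sigma(w_{i+1})$ from Definition~\ref{de:bfs2}), a point the paper leaves implicit; this is a welcome bit of extra rigor but not a different argument.
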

\begin{proof}
%First, Lemma \ref{le:bReachability} indicates that for any $v \in V$, if $dist(s,v) \ne \infty$, then $v$ is in $T$. Then, Lemma \ref{le:distance} shows that if $v$ is in $T$, then $dist(s,v)=\min\{dist(v_o): v_o \in O(v)\}$. If $v$ is not in $T$, then $dist(s,v)=\infty$.

First, Lemma \ref{le:bReachability} indicates that for any $v \in V$, if $s$ can reach $v$, then $v$ is in $T$ and by Lemma \ref{le:distance} the path from $s$ to $v_{o}$ gives the shortest temporal path distance. If $v$ is not in $T$, then $s$ cannot reach $v$.
\end{proof}

\subsection{Complexity of Temporal Path Processing}   \label{ssec:pathComplexity}

It is easy to see that computing single-source foremost, fastest, and shortest temporal paths all take only $O(|E|+|V|)$ time as the time for DFS and BFS in a temporal graph, because we only need to traverse the DFS or BFS tree once to obtain all necessary information to compute the paths.

We remark that the above three types of paths were introduced in \cite{XuanFJ03ijfcs}. Although there are many studies related to the applications of temporal paths \cite{CasteigtsFMS11ipps,Holme05phyrev,KempeKK02jcss,KossinetsKW08kdd,Kostakos09physica,NicosiaTMRML11corr,PanS11physrev,SantoroQFCA11corr,TangMML09sigcomm,TangMML10ccr,TangSMML10phyrev} (see details in Section \ref{sec:related}), algorithms for computing these paths were only studied in \cite{XuanFJ03ijfcs}. Let $d_{\it max}=\max\{d(u,v): u,v \in V\}$. The algorithms in \cite{XuanFJ03ijfcs} take  $O(|E|(\log d_{\it max} + \log |V|))$, \ \ $O(|E|^2|V| d_{\it max})$, \ \ and  \ \  $O(|E||V| \log d_{\it max})$ time, respectively, for computing single-source foremost paths, fastest paths, and shortest temporal paths. Thus, their complexity is significantly higher than our linear time complexity. Our experiments in Section \ref{result:path} also verify the huge difference in the performance of our algorithms and theirs. Such results demonstrate the need to define useful temporal graph traversals. As a fact, the algorithms in \cite{XuanFJ03ijfcs} for computing foremost paths and fastest paths are based on Dijkstra's strategy, which is not suitable since these paths are critically determined by the time carried by the last edge on the path. Their algorithm computing shortest paths is more like an enumeration strategy, which is a somewhat straightforward solution for the problem.

\section{Experimental Evaluation}   \label{sec:results}

%We implemented our temporal graph traversal algorithms(DFS and BFS) in C++ (all source codes will be made available). First, we study the properties of DFS and BFS in temporal graphs and stress on the importance for studying temporal graph traversal by comparing with DFS and BFS in corresponding static graph. Then we study the applications of DFS and BFS on different paths and stress on our optimal time complexity by comparing time cost of our temporal graph traversal with the algorithm in paper \cite{}.
%
%All experiments are ran on a computer with an Intel 3.3GHz CPU, 16GB RAM, Ubuntu 11.04 Linux OS and using the same gcc compiler.

We implemented our temporal graph traversal algorithms (DFS and BFS) in C++ (all source codes, and relevant data for verifying our results, will be made available online). All our experiments were ran on a computer with an Intel 3.3GHz CPU and 16GB RAM, running Ubuntu 12.04 Linux OS.

The goals of our experiments are twofold. First, we study the performance of DFS and BFS in temporal graphs, and stress on the importance for studying temporal graph traversals by comparing with results obtained from the corresponding non-temporal graphs. Second, we study the applications of DFS and BFS to compute different types of temporal paths, and verify our optimal time complexity by comparing with the path algorithms in \cite{XuanFJ03ijfcs} (also coded in C++ and compiled in the same way as our codes).

%For DFS, we compare tree size between DFS-v1 and DFS-v2

%\subsection{Description on Real Datasets}  \label{results:real}
%We used 7 real datasets from the Koblenz Large Network Collection. We selected one large directed temporal graph from each of the following categories: {\tt arxiv-HepPh} ({\tt arxiv}) from arxiv networks and each temporal edge indicates two author has a common publication in a specified time instance; {\tt edit-enwiki} ({\tt wiki}) from edit network of Wikipedia and each temporal edge indicates one person edit one Wikipedia page in a specified time instance; {\tt enron} ({\tt enron}) from email networks and each temporal edge indicates one user sends an email to another user in a specified time instance; {\tt facebook-wosn-wall} ({\tt fb}) from Message Posting Networks and each temporal edge indicates one user writes a post on another user's facebook post wall in a specified time instance; {\tt dblp-coauthor} ({\tt fm}) from coauthor networks and each temporal edge indicates two persons coauthor in one specified paper; {\tt youtube-u-growth} ({\tt youtube}) from social media networks and each temporal edge indicate a link from one user to another user, which is observed in a specified time instance, {\tt delicious-ti} ({\tt deli}) from bookmark network and each temporal edges indicate one user adds bookmark in a specified time instance.

\subsection{Datasets}  \label{results:real}

We used 7 real temporal graphs from the Koblenz Large Network Collection ({\tt http://konect.uni-koblenz.de/}). We selected the \emph{largest} graphs from each of the following categories: ${\tt arxiv}$-${\tt HepPh}$ (${\tt arxiv}$) from the arxiv networks, in which each temporal edge indicates two authors having a common publication at a specified time instance; ${\tt edit}$-${\tt enwiki}$ (${\tt wiki}$) from the edit networks of Wikipedia,  in which each temporal edge indicates one person editing a Wikipedia page at a specified time; ${\tt enron}$ from the email networks, in which each temporal edge indicates one user sending an email to another user at a specified time; ${\tt facebook}$-${\tt wosn}$-${\tt wall}$ (${\tt fb}$) from the Message Posting Networks, in which each temporal edge indicates one user writing a post on another user's facebook post wall at a specified time; ${\tt dblp}$-${\tt coauthor}$ (${\tt dblp}$) from the DBLP coauthor network, in which each temporal edge indicates two persons coauthoring in a paper at a specific time; ${\tt youtube}$-${\tt u}$-${\tt growth}$ (${\tt youtube}$) from the social media networks, in which each temporal edge indicates a link from one user to another user observed at a specified time; ${\tt delicious}$-${\tt ti}$ (${\tt deli}$) from the bookmark networks, in which each temporal edge indicates one user adding a bookmark at a specified time instance.

%Table \ref{tab:realdata} lists different properties of these datasets. We record the number of vertices(denoted by $|V_{G}|$), the number of temporal edges (denoted by $|E_{G}|$) as well as the number of non-temporal edges (denoted by $|E_{\hat{G}}|$). In addition, we list the average temporal degree (denoted by $d_{\it avg}(G,u)=|E_{G}|/|V_{G}|$) as well as the maximum temporal degree(denoted by $d_{\it max}(G,u))$, the average non-temporal degree (denoted by $d_{\it avg}(\hat{G},u)=|E_{\hat{G}}|/|V_{G}|$) as well as the maximum non-temporal degree (denoted by $d_{\it max}(\hat{G},u)=|E_{\hat{G}}|/|V_{G}|$), and the average number of time instance incident to each non-temporal edge(denoted by $d_{\it avg}(e)= |E_{G}| / |E_{\hat{G}}|$) as well as the maximum number of time instance incident to each non-temporal edge(denoted by $d_{\it max}(e)$).

Table \ref{tab:realdata} lists some information about the datasets, including the number of vertices ($|V_{G}|$), the number of temporal edges ($|E_{G}|$), and the number of non-temporal edges ($|E_{\hat{G}}|$). Note that the number of vertices in a temporal graph is the same as that in its corresponding non-temporal graph, i.e., $|V_{G}|=|V_{\hat{G}}|$. We also list the average and maximal temporal degree ($d_{\it avg}(v)$ and $d_{\it max}(v)$), and the average and maximal non-temporal degree ($\hat{d}_{\it avg}(v)$ and $\hat{d}_{\it max}(v)$), as well as the average and maximum number of temporal edges from one vertex to another ($d_{\it avg}(e)$ and $d_{\it max}(e)$). In addition, we list the number of snapshots for each temporal graph $G$, denoted by $|ss_G|$, which is the number of distinct time instances at which the edges of $G$ are active.

%\begin{table}[!htbp]
%\caption{Real datasets (K $=$ $10^3$)} \label{tab:realdata}
%\begin{center}
%\small
%\resizebox{\linewidth}{!}{
%\begin{tabular}{|l||r|r|r|r|r|r|r|}
%  \hline
%Dataset & {\tt arxiv} & {\tt wiki} & {\tt enron} & {\tt fb} & {\tt dblp} & {\tt youtube} & {\tt deli} \\
%\hline \hline
%$|V_{G}|$ & 28K & 21504K & 87K & 46K & 1103K & 3224K & 33791K  \\
%\hline
%$|E_{\hat{G}}|$ & 6297K & 122075K & 322K & 274K & 8451K & 9377K & 78223K  \\
%\hline
%$|E_{G}|$ & 12730K & 266770K & 1148K & 877K & 14704K & 12224K & 301254K  \\
%\hline
%$d_{\it avg}(\hat{G},u)$ & 224.14 & 5.68 & 3.69 & 5.84 & 7.66 & 2.91 & 2.31  \\
%\hline
%$d_{\it max}(\hat{G},u)$ & 4909 & 1906898 & 1566 & 157 & 1189 & 83292 &1396  \\
%\hline
%$d_{\it avg}(G,u)$ & 453.14 & 12.41 & 13.15 & 18.68 & 13.33 & 3.79 &8.92 \\
%\hline
%$d_{\it max}(G,u)$ & 20451 & 3270682 & 32619 & 1430 & 2219 & 106968 & 100627  \\
%\hline
%$d_{\it avg}(e)$ &  2.02 & 2.19 & 3.57 & 3.20 & 1.74 & 1.30 &3.85  \\
%\hline
%$d_{\it max}(e)$ & 580 & 285579 & 3904 & 742 & 409 & 2 & 12243  \\
%\hline
%\end{tabular}
%}
%\end{center}
%
%\end{table}

\begin{table}[!htbp]
%\vspace{-2mm}
\caption{Real datasets ($K=10^3$)} \label{tab:realdata}
\vspace{-3mm}
\begin{center}
\small
\resizebox{\linewidth}{!}{
\begin{tabular}{|l||r|r|r|r|r|r|r|}
  \hline
 & ${\tt arxiv}$ & ${\tt wiki}$ & ${\tt enron}$ & ${\tt fb}$ & ${\tt dblp}$ & ${\tt youtube}$ & ${\tt deli}$ \\
\hline \hline
$|V_{G}|$ & 28K & 21504K & 87K & 47
K & 1103K & 3224K & 33791K  \\
\hline
$|E_{\hat{G}}|$ & 6297K & 122075K & 322K & 274K & 8451K & 9377K & 78223K  \\
\hline
$|E_{G}|$ & 12730K & 266770K & 1148K & 877K & 14704K & 12224K & 301254K  \\
\hline
$\hat{d}_{\it avg}(v)$ & 224.14 & 5.68 & 3.69 & 5.84 & 7.66 & 2.91 & 2.31  \\
\hline
$\hat{d}_{\it max}(v)$ & 4909 & 1916898 & 1566 & 157 & 1189 & 83292 &1396  \\
\hline
$d_{\it avg}(v)$ & 453.14 & 12.41 & 13.15 & 18.68 & 13.33 & 3.79 &8.92 \\
\hline
$d_{\it max}(v)$ & 20451 & 3270682 & 32619 & 1430 & 2219 & 106968 & 100627  \\
\hline
$d_{\it avg}(e)$ &  2.02 & 2.19 & 3.57 & 3.20 & 1.74 & 1.30 &3.85  \\
\hline
$d_{\it max}(e)$ & 580 & 285579 & 3904 & 742 & 409 & 2 & 12243  \\
\hline
$|ss_G|$ & 2337 & 134075025 & 220364 & 867939 & 70 & 203 & 1583  \\
\hline
\end{tabular}
}
\end{center}
%\vspace{-1mm}
\end{table}

%Noted that we choose these 7 datasets so that they have different properties. E.g., ${\tt wiki}$, ${\tt deli}$ ${\tt dblp}$ and ${\tt youtube}$ have relatively large graph size in terms of $|V_{G}|$, $|E_{G}|$ and $|E_{\hat{G}|}|$; ${\tt arxiv}$ has huge average degree in terms of $d_{\it avg}(G,u)$ and $d_{\it avg}(\hat{G},u)$; while ${\tt enron}$ and ${\tt fb}$ have relatively large $d_{\it avg}(e)$. Notice that we also select large and small $d_{\it max}(e)$ like ${\tt wiki}$ and ${\tt youtube}$.

From Table \ref{tab:realdata}, we can observe that these 7 datasets have rather different characteristics. The datasets ${\tt wiki}$, ${\tt dblp}$, ${\tt youtube}$ and ${\tt deli}$ have larger graph size in terms of $|V_{G}|$, $|E_{G}|$ and $|E_{\hat{G}|}|$. The ${\tt arxiv}$ dataset has much higher average degree than other datasets, while ${\tt wiki}$ has very high maximum degree. The ${\tt wiki}$ dataset also has very high $d_{\it max}(e)$, while ${\tt enron}$, ${\tt fb}$ and ${\tt deli}$ have relatively large $d_{\it avg}(e)$. On the other hand, compared with ${\tt wiki}$, ${\tt youtube}$ has a very small $d_{\it max}(e)$. In addition, the number of snapshots of the graphs also varies significantly.

By selecting datasets from different application sources and with different characteristics, we can demonstrate the robustness of our algorithms and their suitability in different application domains.

%\begin{table}[!htbp]
%\caption{Real datasets (K $=$ $10^3$)} \label{tab:realdata}
%\begin{center}
%\small
%\resizebox{\linewidth}{!}{
%\begin{tabular}{|l||r|r|r|r|r|r|}
%  \hline
%Dataset & $|V_{G}|$ & $|E_{G}|$ & $d_{\it avg}(G,u)$ & $|E_{\hat{G}}|$ & $d_{\it avg}(\hat{G},u)$ & $d_{\it avg}(u,v)$ \\
%\hline \hline
%{\tt arxiv} & 28K & 12730K & 453.14 & 6297K & 224.14 & 2.02  \\
%\hline
%{\tt wiki} & 21504K & 266770K & 12.41 & 122075K & 5.68 & 2.19  \\
%\hline
%{\tt enron} &  87K & 1148K & 13.15 & 321K & 3.69 & 3.57  \\
%\hline
%{\tt facebook} & 46K & 877K & 18.68 & 274K & 5.84 & 3.20  \\
%\hline
%{\tt fm} &  174K & 19151K & 110.01 & 898K & 5.16 & 21.32  \\
%\hline
%{\tt youtube} &  3223K & 12224K & 3.79 & 9377K & 2.91 & 1.30  \\
%\hline
%\end{tabular}
%}
%\end{center}
%
%\end{table}

%\vspace{2mm}

%\subsection{Results on Property }  \label{results:property}
%\noindent \textbf{DFS Performance.} \ We first report some comparisons between DFS-v1 and DFS-v2 in the following two settings.First we randomly generate 100 vertices and record the average number of tree edge(denoted by $|T|$), traversed temporal edge(denoted by $|E_{\it dfs}|$) as well as reachable vertices(denoted by $|V_R|$) starting from 100 randomly chosen vertices(Setting 1). Another setting is to do experiments from 100 high temporal degree vertices(Setting 2) and records the related indexes. Here we both set the starting time $t_s$ as 0.

\subsection{Results of Temporal DFS and BFS}  \label{results:property}

To evaluate the performance of temporal DFS and BFS, we use two sets of source vertices: 100 randomly generated vertices and 100 highest temporal degree vertices. We measure the number of tree edges (denoted by $|T_\textrm{dfs-v1}|$, $|T_\textrm{dfs-v2}|$, and $|T_\textrm{bfs}|$ for DFS-v1, DFS-v2, and BFS, respectively), the number of traversed temporal edges (denoted by $|E_{\it trv}|$), as well as the number of reachable vertices (denoted by $|V_R|$), averaged over the results obtained from the 100 source vertices. We set the starting time $t_s= 0$, i.e., we process all temporal edges.

We report the results in Tables \ref{tab:DFS12_r} and \ref{tab:DFS12_h}. Note that $|V_R|$ is the same for DFS-v1, DFS-v2, and BFS according to Lemma \ref{le:dReachability} and Lemma \ref{le:bReachability}. Also, $|E_{\it trv}|$ does not change for DFS-v1, DFS-v2, and BFS, although the types of edges traversed have different number. The number of non-tree edges of DFS-v1, DFS-v2, and BFS can be computed as $|E_{\it trv}|-|T_\textrm{dfs-v1}|$, $|E_{\it trv}|-|T_\textrm{dfs-v2}|$, and $|E_{\it trv}|-|T_\textrm{bfs}|$, respectively. Since $|V_R|$ and $|E_{\it trv}|$ do not change for the three traversal methods, we only report them once in the tables.

\begin{table}[!htbp]
\caption{Traversal results of random sources} \label{tab:DFS12_r}
\vspace{-3mm}
\begin{center}
\small
\resizebox{\linewidth}{!}{
\begin{tabular}{|l|l||r|r|r|r|r|r|r|}
\hline
 & ${\tt arxiv}$ & ${\tt wiki}$ & ${\tt enron}$ & ${\tt fb}$ & ${\tt dblp}$ & ${\tt youtube}$ & ${\tt deli}$ \\
\hline \hline
$|T_\textrm{dfs-v1}|$  & 85974 & 2037180 & 24231 & 51300 & 911587 & 191936 & 144  \\
\hline
$|T_\textrm{dfs-v2}|$  & 86786 & 3108536 & 37431 & 89337 & 937239 & 231984 & 144  \\
\hline
$|T_\textrm{bfs}|$  & 29159 & 1062208 & 4760 & 20973 & 577573 & 178794 & 144  \\
\hline
\hline
$|E_{\it trv}|$ & 9860K & 5420K & 63K & 180K & 6532K & 471K & 38K  \\
\hline
$|V_R|$ & 25248 & 786359 & 3646 & 12278 & 458516 & 161460 & 145  \\
\hline
\end{tabular}
}
\end{center}
\vspace{-3mm}
\end{table}

\begin{table}[!htbp]
\caption{Traversal results of high-degree sources} \label{tab:DFS12_h}
\vspace{-3mm}
\begin{center}
\small
\resizebox{\linewidth}{!}{
\begin{tabular}{|l|l||r|r|r|r|r|r|r|}
\hline
 & ${\tt arxiv}$ & ${\tt wiki}$ & ${\tt enron}$ & ${\tt fb}$ & ${\tt dblp}$ & ${\tt youtube}$ & ${\tt deli}$ \\
\hline \hline
$|T_\textrm{dfs-v1}|$  & 143172 & 46667260 & 308093 & 176074 & 1860411 & 2199022 & 1054  \\
\hline
$|T_\textrm{dfs-v2}|$  & 144949 & 73353232 & 492177 & 314101 & 1930604 & 2911707 & 1054  \\
\hline
$|T_\textrm{bfs}|$  & 33188 & 20097467 & 51260 & 54028 & 1059935 & 2016305 & 1054  \\
\hline \hline
$|E_{\it trv}|$ & 11444K & 136500K & 827K & 646K & 127776K & 6364K & 578K  \\
\hline
$|V_R|$ & 26723 & 15253044 & 41844 & 31851 & 802647 & 1822059 & 1055  \\
\hline
\end{tabular}
}
\end{center}
%\vspace{-2mm}   
\end{table}

%\begin{table}[!htbp]
%\caption{Comparison between DFS-v1 and DFS-v2
%(Setting 2 starting from 100 high degree vertices)} \label{tab:DFS12}
%\begin{center}
%\small
%\resizebox{\linewidth}{!}{
%\begin{tabular}{|l||r|r|r||r|r|r|}
%  \hline
%
% &\multicolumn{3}{c|}{DFS-v1} & \multicolumn{3}{c|}{DFS-v2} \\
%\hline
%& $|T|$ & $|E_{\it DFS}|$ & $|V_R|$ & $|T|$ & $|E_{\it DFS}|$ & $|V_R|$ \\
%\hline \hline
%{\tt arxiv} & 143172 & 11444137 & 26723 & 144949 & 11444137 & 26723 \\
%\hline
%{\tt wiki} & 46667260 & 136500072 & 15253044 & 73353232 & 136500072 & 15253044 \\
%\hline
%{\tt enron} & 308093 & 826937 & 41844 & 492177 & 826937 & 41844 \\
%\hline
%{\tt facebook} & 176074 & 646434 & 31851 & 314101 & 646434 & 31851 \\
%\hline
%{\tt fm} & 1145897 & 13170223 & 145714 & 5053336 & 13170223 & 145714 \\
%\hline
%{\tt youtube} & 2199022 & 6363865 & 1822059 & 2911707 & 6363865 & 1822059 \\
%\hline
%\end{tabular}
%}
%\end{center}
%
%\end{table}

By comparing $|V_R|$ with $|T|$ ($T$ is either $|T_\textrm{dfs-v1}|$, $|T_\textrm{dfs-v2}|$, or $|T_\textrm{bfs}|$) in Tables \ref{tab:DFS12_r} and \ref{tab:DFS12_h}, we can see that each reachable vertex may have multiple occurrences in the DFS or BFS trees. The results also show that $|T_\textrm{dfs-v2}|$ is considerably larger than $|T_\textrm{dfs-v1}|$ for ${\tt wiki}$, ${\tt enron}$, ${\tt fb}$ and ${\tt youtube}$, but is roughly of the same size for the other datasets. This is reasonable because DFS-v2 can answer interval-based fastest path queries which require more information to be kept than answering foremost path queries using DFS-v1. But we note that $|T_\textrm{dfs-v2}|$ is less than twice of $|T_\textrm{dfs-v1}|$ in all cases.

Compared with $|T_\textrm{bfs}|$, however, both $|T_\textrm{dfs-v1}|$ and $|T_\textrm{dfs-v2}|$ are significantly larger except for ${\tt deli}$ (which is because both DFS and BFS trees for ${\tt deli}$ have only two levels). This result is mainly because BFS imposes a relationship between the distance from the source $s$ and the earlier time to reach a vertex from $s$, i.e., Lemmas \ref{le:distanceTime} and \ref{le:distance}, while DFS only imposes the time constraint on the paths. This result also suggests that for answering foremost path queries (see Theorem \ref{th:foremost}), BFS is a more efficient method than DFS-v1 because of its smaller tree size, which is also verified by the running time shown in Tables \ref{tab:paths_r} and \ref{tab:paths_h}.

Comparing the results starting from different sources, Tables \ref{tab:DFS12_r} and \ref{tab:DFS12_h} show that starting the traversals from the high degree vertices can reach significantly more vertices than from randomly chosen sources. This is understandable since in general there are more alternative paths from high degree vertices to reach other vertices. However, $|V_R|$ for ${\tt arxiv}$ does not differ much in Tables \ref{tab:DFS12_r} and \ref{tab:DFS12_h}, which can be explained by the fact that ${\tt arxiv}$ has high average temporal degree and hence the majority of vertices are reachable from any source vertex with reasonably high degree, and we can verify this by comparing $|V_R|$ with $|V_G|$ of ${\tt arxiv}$ in Table \ref{tab:realdata}.

%Last, we notice that the number of reachable vertices of {\tt deli} and the number of traversed temporal edge are both quite small and there is nearly no multiple occurrence of each vertex in $T$ since $|T|=|V_T|-1$, which may due to the graph's structure {\tt deli} and from the statistics of non-temporal graph in Table \ref{tab:static_r} and \ref{tab:static_h} we can infer that the corresponding non-temporal graph of {\tt deli} is consist of many small components. Hence the results in temporal graph is desirable.

Finally, the running time of the traversals is almost the same as the time reported in Tables \ref{tab:paths_r} and \ref{tab:paths_h} for computing the temporal paths. We can compare the time with $|T_\textrm{dfs-v1}|$, $|T_\textrm{dfs-v2}|$, and $|T_\textrm{bfs}|$, which shows that the running time is in proportion to the DFS or BFS tree size.

\subsection{Results of Different Starting Time}  \label{results:ts}

We further assess the performance of temporal DFS and BFS w.r.t. different starting time. Since the range of time instances varies greatly in different temporal graphs, we set $t_s$ as a proportion of $|ss_G|$ (see $|ss_G|$ in Table \ref{tab:realdata}): $t_s=0/4|ss_G|=0$ (reported in Section \ref{results:property}), $t_s=1/4|ss_G|$, $t_s=2/4|ss_G|$ and $t_s=3/4|ss_G|$.

Tables \ref{tab:DFS12_r_1}, \ref{tab:DFS12_r_2} and \ref{tab:DFS12_r_3} report the results which start DFS or BFS from randomly chosen sources. The results show that as we start at a greater $t_s$ (i.e., a later time), the tree size of both DFS and BFS decreases, and so is the number of traversed temporal edges $|E_{\it trv}|$ and reachable vertices $|V_R|$. However, the decreasing rate varies for different datasets, which is particularly small for ${\tt dblp}$. We checked and found that the temporal edges in  ${\tt dblp}$ concentrate mostly on recent time instances (i.e., after $3/4|ss_G|$). This result demonstrates the need to set a starting time for traversal in some datasets and the usefulness of examining the temporal edges in different time intervals in general.

%they may have different shrink rate, e.g., ${\tt wiki}$ shrinks 10 times from $t_s=1/2|ss_G|$ to $t_s=3/4|ss_G|$ while ${\tt arxiv}$ shrink quite a little when $t_s$ changes, which is due to the density of ${\tt arxiv}$. Also we notice that $|E_{\it trv}|$ and $|V_R|$ have little change in ${\tt dblp}$, which may because the temporal edges concentrate on recent time instance. Then from Table \ref{tab:DFS12_h_ts} and \ref{tab:DFS12_r_1},\ref{tab:DFS12_r_2} \ref{tab:DFS12_r_3}, starting from high temporal degree vertices may not shrink as dramatic as starting from random vertices, e.g.${\tt wiki}$ from $t_s=1/2|ss_G|$ to $t_s=3/4|ss_G|$, which is understandable because high temporal degree vertices tends to have different time instances incident to them.

\begin{table}[!htbp]
%\vspace{-3mm}
\caption{Traversal results of random sources ($t_s=1/4|ss_G|$)} \label{tab:DFS12_r_1}
\vspace{-3mm}
\begin{center}
\small
\resizebox{\linewidth}{!}{
\begin{tabular}{|l||r|r|r|r|r|r|r|}
\hline
 & ${\tt arxiv}$ & ${\tt wiki}$ & ${\tt enron}$ & ${\tt fb}$ & ${\tt dblp}$ & ${\tt youtube}$ & ${\tt deli}$ \\
\hline \hline
 $|T_\textrm{dfs-v1}|$  & 85212 & 1885457 & 18014 & 36600 & 911587 & 80594 & 144  \\
\hline
 $|T_\textrm{dfs-v2}|$  & 85996 & 2864129 & 27375 & 61916 & 937239 & 99387 & 144  \\
\hline
 $|T_\textrm{bfs}|$  & 29107 & 1032306 & 3848 & 18156 & 577573 & 75322 & 144  \\
\hline
\hline
 $|E_{\it trv}|$ & 9849K & 4861K & 46K & 122K & 6532K & 190K & 38K  \\
\hline
 $|V_R|$ & 25234 & 760612 & 2901 & 10915 & 458516 & 69281 & 145  \\
\hline
\end{tabular}
}
\end{center}
\vspace{-5mm}
\end{table}

\begin{table}[!htbp]
\caption{Traversal results of random sources ($t_s=2/4|ss_G|$)} \label{tab:DFS12_r_2}
\vspace{-3mm}
\begin{center}
\small
\resizebox{\linewidth}{!}{
\begin{tabular}{|l||r|r|r|r|r|r|r|}
\hline
 & ${\tt arxiv}$ & ${\tt wiki}$ & ${\tt enron}$ & ${\tt fb}$ & ${\tt dblp}$ & ${\tt youtube}$ & ${\tt deli}$ \\
\hline \hline
 $|T_\textrm{dfs-v1}|$  & 69687 & 458119 & 9674 & 18671 & 911586 & 45295 & 118  \\
\hline
 $|T_\textrm{dfs-v2}|$  & 70230 & 672954 & 14449 & 30555 & 937239 & 56800 & 118  \\
\hline
 $|T_\textrm{bfs}|$  & 27397 & 300995 & 2338 & 12043 & 577573 & 42434 & 118  \\
\hline
\hline
 $|E_{\it trv}|$ & 9145K & 1046K & 24K & 57K & 6532K & 101K & 25K  \\
\hline
 $|V_R|$ & 23937 & 234424 & 1748 & 7924 & 458516 & 39438 & 119  \\
\hline
\end{tabular}
}
\end{center}
\vspace{-5mm}
\end{table}

\begin{table}[!htbp]
\caption{Traversal results of random sources ($t_s=3/4|ss_G|$)} \label{tab:DFS12_r_3}
\vspace{-3mm}
\begin{center}
\small
\resizebox{\linewidth}{!}{
\begin{tabular}{|l||r|r|r|r|r|r|r|}
\hline
 & ${\tt arxiv}$ & ${\tt wiki}$ & ${\tt enron}$ & ${\tt fb}$ & ${\tt dblp}$ & ${\tt youtube}$ & ${\tt deli}$ \\
\hline \hline
 $|T_\textrm{dfs-v1}|$  & 47835 & 24193 & 1917 & 1366 & 813473 & 26088 & 35  \\
\hline
 $|T_\textrm{dfs-v2}|$  & 48101 & 33744 & 3038 & 2165 & 835203 & 33299 & 35  \\
\hline
 $|T_\textrm{bfs}|$  & 25665 & 22810 & 640 & 1165 & 523374 & 24261 & 35  \\
\hline
\hline
 $|E_{\it trv}|$ & 8373K & 43K & 5K & 4K & 5865K & 59K & 2K  \\
\hline
 $|V_R|$ & 22904 & 20220 & 481 & 974 & 416975 & 22528 & 36  \\
\hline
\end{tabular}
}
\end{center}
%\vspace{-4mm}
\end{table}

The results for starting traversals from the high-degree sources follow a similar trend, though in higher magnitude. Thus, we only report $|E_{\it trv}|$ and $|V_R|$ in Table \ref{tab:DFS12_h_ts} due to space limit.

\begin{table}[!htbp]
\caption{Results of high-degree sources (varying $t_s$)} \label{tab:DFS12_h_ts}
\vspace{-3mm}
\begin{center}
\small
\resizebox{\linewidth}{!}{
\begin{tabular}{|l|l||r|r|r|r|r|r|r|}
\hline
\multicolumn{2}{|c|}{} & ${\tt arxiv}$ & ${\tt wiki}$ & ${\tt enron}$ & ${\tt fb}$ & ${\tt dblp}$ & ${\tt youtube}$ & ${\tt deli}$ \\
\hline \hline
\hline
\multirow{2}{*}{$t_s$$=$$1/4|ss(G)|$} & $|E_{\it trv}|$ & 11416K & 114342K & 580K & 460K & 12776K & 2762K & 515K  \\
\cline{2-9}
& $|V_R|$ & 26685 & 14138968 & 31447 & 30072 & 802647 & 853058 & 998  \\
\hline
\multirow{2}{*}{$t_s$$=$$2/4|ss(G)|$} & $|E_{\it trv}|$ & 11115K & 70381K & 252K & 249K & 12773K & 2061K & 310K  \\
\cline{2-9}
& $|V_R|$ & 26387 & 10881799 & 16338 & 25419 & 802436 & 652770 & 720  \\
\hline
\multirow{2}{*}{$t_s$$=$$3/4|ss(G)|$} & $|E_{\it trv}|$ & 10178K & 24187K & 74K & 30K & 12337K & 1337K & 58K  \\
\cline{2-9}
& $|V_R|$ & 25522 & 5733690 & 6482 & 6746 & 775913 & 437211 & 391  \\
\hline
\end{tabular}
}
\end{center}
\vspace{-2mm}
\end{table}

\subsection{Temporal vs. Non-Temporal}  \label{results:nonTemp}

In this experiment we want to show that the non-temporal graph obtained by discarding the temporal information presents misleading information. Table \ref{tab:static} lists the average number of reachable vertices from the randomly chosen and high-degree sources, respectively. Compared with $|V_R|$ obtained from temporal DFS and BFS as reported in Tables \ref{tab:DFS12_r_1}-\ref{tab:DFS12_h_ts}, $|V_R|$ obtained from the corresponding non-temporal graphs is significantly larger. The result is expected as discarding the temporal information removes the time constraint on the paths. Thus, this result confirms the importance of studying temporal graphs directly.

\begin{table}[!htbp]
%\vspace{-3mm}
\caption{$|V_R|$ in non-temporal graphs} \label{tab:static}
%\vspace{-3mm}
\begin{center}
\small
\resizebox{\linewidth}{!}{
\begin{tabular}{|l||r|r|r|r|r|r|r|}
\hline
  & ${\tt arxiv}$ & ${\tt wiki}$ & ${\tt enron}$ & ${\tt fb}$ & ${\tt dblp}$ & ${\tt youtube}$ & ${\tt deli}$ \\
\hline \hline
Random & 28045 & 2976860 & 9952 & 27639 & 849559 & 236991  & 1583  \\
\hline
High-degree & 28045 & 21050813 & 55277 & 35434 &  965408 & 1983367  & 1582  \\
\hline
\end{tabular}
}
\end{center}
%\vspace{-3mm}
\end{table}

%\begin{table}[!htbp]
%\caption{Temporal v.s Non-Temporal Graph
%(high-degree sources)} \label{tab:static_h}
%\begin{center}
%\small
%\resizebox{\linewidth}{!}{
%\begin{tabular}{|l|l||r|r|r|r|r|r|r|}
%\hline
%\multicolumn{2}{|c|}{Dataset} & {\tt arxiv} & {\tt wiki} & {\tt enron} & {\tt fb} & {\tt dblp} & {\tt youtube} & {\tt deli} \\
%\hline \hline
%Temporal & $|V_R|$ & 26723 & 15253044 & 41844 & 31851 & 802647 & 1822059 & 1055  \\
%\hline
%Non-Temporal& $|V_R|$ & 28045 & 21050813 & 55277 & 35434 &  965407 & 1983366  & 1581  \\
%\hline
%\end{tabular}
%}
%\end{center}
%\end{table}

%\subsection{Results on Applications}  \label{result:path}
%
%In Section \ref{sec:app}, we have stated that our temporal graph traversal can have multiple applications and we focus on the applications on different paths. According to Theorem \ref{th:foremost}, \ref{th:fastest} and \ref{th:distance}, temporal graph traversal can answer foremost path. fastest path and shortest path separately. In this section, as shown in Table \ref{tab:paths_r} and \ref{tab:paths_h} we record our algorithm's performance on different paths and compare with baseline in \cite{XuanFJ03ijfcs}, which is implemented according to their algorithms. We also have two settings: Setting 1 is from 100 random vertices while Setting 2 is from 100 high-degree vertices. Here we set $t_s=0$ in all paths while $[t_x, t_y]=[0,\infty)$ in fastest path. Noted that starting from the same vertex but with different $[t_x, t_y]$, we only need to scan the active intervals to answer fastest path.

\subsection{Performance on Applications}  \label{result:path}

According to Theorems \ref{th:foremost}, \ref{th:fastest} and \ref{th:distance}, temporal graph traversals can answer single-source foremost paths, fastest paths, and shortest paths, respectively. We compare the performance of our graph traversal algorithms with the existing algorithms for computing these three types of paths by Xuan et al. \cite{XuanFJ03ijfcs}, denoted by \textbf{XuanFJ}. We set $t_s=0$ for all paths and $[t_x, t_y]=[0,\infty)$ for fastest path queries. Note that for different $[t_x, t_y]$, we only need to scan the active intervals to obtain the fastest paths.

Tables \ref{tab:paths_r} and \ref{tab:paths_h} report the average running time of answering these path queries, starting from randomly chosen and high-degree sources, respectively. The results show that our method is up to orders of magnitude faster than XuanFJ, especially for computing the fastest paths. Since these paths have many applications in studying various properties of temporal graphs \cite{CasteigtsFQS12paapp,HolmeS11corr,KossinetsKW08kdd,Kostakos09physica,PanS11physrev,SantoroQFCA11corr,TangMML09sigcomm,TangMML10ccr,XuanFJ03ijfcs}, the usefulness of our algorithms is clear and their efficiency is vital for processing large temporal graphs which are becoming increasingly common today.

%Our algorithm is 10 times faster than the baseline when answering foremost path while it is 100 times faster in fastest and shortest path on average. E.g. for dataset {\tt wiki} and {\tt youtube} in Table \ref{tab:paths_r}, our algorithm has a significant improvement.

\begin{table}[!htbp]
\caption{Path query time in seconds  (random sources)} \label{tab:paths_r}
\vspace{-3mm}
\begin{center}
\small
\resizebox{\linewidth}{!}{
\begin{tabular}{|l|l||r|r|r|r|r|r|r|}
\hline
\multicolumn{2}{|c|}{} & ${\tt arxiv}$ & ${\tt wiki}$ & ${\tt enron}$ & ${\tt fb}$ & ${\tt dblp}$ & ${\tt youtube}$ & ${\tt deli}$ \\
\hline \hline
\multirow{2}{*}{Foremost} & $Ours$ & 0.0572 & 0.1830 & 0.0006 & 0.0036 & 0.1857 & 0.0160 & 0.0002  \\
\cline{2-9}
 & XuanFJ & 0.1772 & 2.8920 & 0.0103 & 0.0119 &  0.3164 & 0.2327  & 3.342  \\
\hline
\multirow{2}{*}{Fastest} & $Ours$ & 0.0601 & 0.2233 & 0.0008 & 0.0039 & 0.1872 & 0.0186 & 0.0003  \\
\cline{2-9}
  & XuanFJ & 115.26 & 27.98 & 0.1524 & 0.3187 & 3.8738 & 1.9689  & 17.61   \\
\hline
\multirow{2}{*}{Shortest} & $Ours$ & 0.0239 & 0.1387 & 0.0005 & 0.0021 & 0.1108 & 0.0155 & 0.0001  \\
\cline{2-9}
  & XuanFJ & 1.8346 & 26.0324 & 0.0416 & 0.1240 & 1.9953 & 1.6017 & 9.0814 \\
\cline{2-9}
\hline
\end{tabular}
}
\end{center}
\vspace{-3mm}
\end{table}

\begin{table}[!htbp]
\caption{Path query time in seconds (high-degree sources)} \label{tab:paths_h}
\vspace{-3mm}
\begin{center}
\small
\resizebox{\linewidth}{!}{
\begin{tabular}{|l|l||r|r|r|r|r|r|r|}
\hline
\multicolumn{2}{|c|}{} & ${\tt arxiv}$ & ${\tt wiki}$ & ${\tt enron}$ & ${\tt fb}$ & ${\tt dblp}$ & ${\tt youtube}$ & ${\tt deli}$ \\
\hline \hline
\multirow{2}{*}{Foremost} & $Ours$ & 0.0693 & 4.4225 & 0.0106 & 0.0134 & 0.3433 & 0.2235 & 0.0026  \\
\cline{2-9}
 & XuanFJ & 0.1841 & 21.1088 & 0.0197 & 0.0163 & 0.5969 & 0.8307 & 1.6462 \\
\hline
\multirow{2}{*}{Fastest} & $Ours$ & 0.0710 & 5.6701 & 0.0116 & 0.0143 & 0.3885 & 0.2683 & 0.0023  \\
\cline{2-9}
 & XuanFJ & 3778.44 & 5000+ & 83.2 & 9.36 & 1014.63 & 5000+ & 5000+  \\
\hline
\multirow{2}{*}{Shortest} & $Ours$ & 0.0280 & 3.0686 & 0.0042 & 0.0006 & 0.2065 & 0.1961 & 0.0004  \\
\cline{2-9}
  & XuanFJ & 2.4196 & 144.2384 & 0.1453 & 0.1635 & 4.4822 & 5.2822 & 13.5160  \\
\hline
\end{tabular}
}
\end{center}
\vspace{-3mm}
\end{table}

\section{Related Work}  \label{sec:related}

Most existing works on temporal graphs are related to temporal paths. Temporal paths were first proposed in \cite{KempeKK02jcss} to study the connectivity of a temporal network, for which disjoint temporal paths between any two vertices are computed. Later, three different types of temporal paths, namely foremost, fastest and shortest paths, were introduced in \cite{XuanFJ03ijfcs}, and similar definitions of temporal paths, distance, proximity, and reachability (and their applications) were proposed in a number of works \cite{Holme05phyrev,Kostakos09physica,PanS11physrev,SantoroQFCA11corr}. Among them, only \cite{XuanFJ03ijfcs} gave algorithms and formal complexity analysis for computing these paths. In \cite{CasteigtsFMS11ipps,KossinetsKW08kdd}, temporal paths were applied to study the temporal view or information latency of a vertex about other vertices. In \cite{TangMML10ccr}, temporal paths were used to define a number of temporal metrics such as temporal efficiency and clustering coefficient. Temporal paths were also applied to find temporal connected components in \cite{NicosiaTMRML11corr,TangMML10ccr}. In \cite{TangSMML10phyrev}, small-world behavior was analyzed in temporal networks using temporal paths. In \cite{SantoroQFCA11corr}, betweenness and closeness were defined based on temporal paths but not studied in details. In \cite{PanS11physrev}, empirical studies were conducted to measure correlation between temporal paths and temporal closeness. In \cite{ClementiP10}, the speed of information propagation from one time to another was also related to temporal paths. Other than temporal paths, the computability and complexity of exploring temporal graphs were analyzed in \cite{FlocchiniMS09isaac}, a hierarchy of classes of temporal graphs was defined in \cite{CasteigtsFQS12paapp},  and a survey of many prior proposed concepts of temporal graphs was given in \cite{HolmeS11corr}.

Other related work \cite{DBLP:journals/networks/CaiKW97,DBLP:conf/mdm/ChonAA03,DBLP:conf/edbt/DingYQ08}  studied how to compute time-dependent shortest paths. They focused on graphs in which the cost (travel time) of edges between nodes varies with time but edges are active all the time, which are different from temporal graphs where temporal edges become active and inactive frequently with time. Time-dependent graphs are more suitable in applications such as road networks. We are also aware of the work by
\cite{DBLP:journals/talg/DemetrescuI06,DBLP:journals/corr/abs-1302-5549,DBLP:journals/pvldb/RenLKZC11}, which are related to path and query problems in dynamic graphs or time-evolving graphs subjecting to edge insertions, edge deletions and edge weight updates. Though time-evolving graphs are also time-related, it is significantly different from our temporal graphs. First, only a few snapshots exist in time-evolving graphs while the number of snapshots in a temporal graph can be very large. Second, snapshots of a time-evolving graph often have much overlapping among them, while snapshots in a temporal graph usually have a much lower degree of overlapping.

\section{Conclusions}  \label{sec:conclusion}

We proposed formal definitions for DFS and BFS in temporal graphs, studied various properties and concepts related to temporal DFS and BFS, and proposed efficient algorithms that are vital for answering a number of useful path queries in temporal graphs. Our results on a variety of real world temporal graphs verified the high efficiency of our algorithms, the need for temporal graph traversals as to retain important temporal information, and the usefulness of our work in the application of computing various temporal paths.

\small

\bibliographystyle{abbrv}
%\bibliographystyle{latex8}
%\begin{small}

%\vspace*{3mm}

\bibliography{ref_tgt}

\begin{thebibliography}{10}

\bibitem{AgrawalBJ89sigmod}
R.~Agrawal, A.~Borgida, and H.~V. Jagadish.
\newblock Efficient management of transitive relationships in large data and
  knowledge bases.
\newblock In {\em SIGMOD Conference}, pages 253--262, 1989.

\bibitem{DBLP:journals/networks/CaiKW97}
X.~Cai, T.~Kloks, and C.~K. Wong.
\newblock Time-varying shortest path problems with constraints.
\newblock {\em Networks}, 29(3):141--150, 1997.

\bibitem{CasteigtsFMS11ipps}
A.~Casteigts, P.~Flocchini, B.~Mans, and N.~Santoro.
\newblock Measuring temporal lags in delay-tolerant networks.
\newblock In {\em IPDPS}, pages 209--218, 2011.

\bibitem{CasteigtsFQS12paapp}
A.~Casteigts, P.~Flocchini, W.~Quattrociocchi, and N.~Santoro.
\newblock Time-varying graphs and dynamic networks.
\newblock {\em International Journal of Parallel, Emergent and Distributed
  Systems}, 27(5):387--408, 2012.

\bibitem{DBLP:conf/mdm/ChonAA03}
H.~D. Chon, D.~Agrawal, and A.~{El Abbadi}.
\newblock Fates: Finding a time dependent shortest path.
\newblock In {\em Mobile Data Management}, pages 165--180, 2003.

\bibitem{ClementiP10}
A.~Clementi and F.~Pasquale.
\newblock Information spreading in dynamic networks: An analytical approach.
\newblock In {\em S. Nikoletseas and J. Rolim (Eds): Theoretical Aspects of
  Distributed Computing in Sensor Networks}. Springer, 2010.

\bibitem{CormenLRS09mit}
T.~H. Cormen, C.~E. Leiserson, R.~L. Rivest, and C.~Stein.
\newblock {\em Introduction to Algorithms}.
\newblock The MIT Press, 3rd edition, 2009.

\bibitem{DBLP:journals/talg/DemetrescuI06}
C.~Demetrescu and G.~F. Italiano.
\newblock Experimental analysis of dynamic all pairs shortest path algorithms.
\newblock {\em ACM Transactions on Algorithms}, 2(4):578--601, 2006.

\bibitem{DBLP:conf/edbt/DingYQ08}
B.~Ding, J.~X. Yu, and L.~Qin.
\newblock Finding time-dependent shortest paths over large graphs.
\newblock In {\em EDBT}, pages 205--216, 2008.

\bibitem{FlocchiniMS09isaac}
P.~Flocchini, B.~Mans, and N.~Santoro.
\newblock Exploration of periodically varying graphs.
\newblock In {\em ISAAC}, pages 534--543, 2009.

\bibitem{Holme05phyrev}
P.~Holme.
\newblock Network reachability of real-world contact sequences.
\newblock {\em Physical Review E}, 71(4):046119, 2005.

\bibitem{HolmeS11corr}
P.~Holme and J.~Saram{\"a}ki.
\newblock Temporal networks.
\newblock {\em CoRR}, abs/1108.1780, 2011.

\bibitem{KempeKK02jcss}
D.~Kempe, J.~M. Kleinberg, and A.~Kumar.
\newblock Connectivity and inference problems for temporal networks.
\newblock {\em J. Comput. Syst. Sci.}, 64(4):820--842, 2002.

\bibitem{DBLP:journals/corr/abs-1302-5549}
G.~Koloniari, D.~Souravlias, and E.~Pitoura.
\newblock On graph deltas for historical queries.
\newblock {\em CoRR}, abs/1302.5549, 2013.

\bibitem{KossinetsKW08kdd}
G.~Kossinets, J.~M. Kleinberg, and D.~J. Watts.
\newblock The structure of information pathways in a social communication
  network.
\newblock In {\em KDD}, pages 435--443, 2008.

\bibitem{Kostakos09physica}
V.~Kostakos.
\newblock Temporal graphs.
\newblock {\em Physica A: Statistical Mechanics and its Applications},
  388(6):1007--1023, 2009.

\bibitem{NicosiaTMRML11corr}
V.~Nicosia, J.~Tang, M.~Musolesi, G.~Russo, C.~Mascolo, and V.~Latora.
\newblock Components in time-varying graphs.
\newblock {\em CoRR}, abs/1106.2134, 2011.

\bibitem{PanS11physrev}
R.~K. Pan and J.~Saram{\"a}ki.
\newblock Path lengths, correlations, and centrality in temporal networks.
\newblock {\em Phys. Rev. E}, 84:016105, 2011.

\bibitem{DBLP:journals/pvldb/RenLKZC11}
C.~Ren, E.~Lo, B.~Kao, X.~Zhu, and R.~Cheng.
\newblock On querying historical evolving graph sequences.
\newblock {\em PVLDB}, 4(11):726--737, 2011.

\bibitem{SantoroQFCA11corr}
N.~Santoro, W.~Quattrociocchi, P.~Flocchini, A.~Casteigts, and F.~Amblard.
\newblock Time-varying graphs and social network analysis: Temporal indicators
  and metrics.
\newblock {\em CoRR}, abs/1102.0629, 2011.

\bibitem{TangMML09sigcomm}
J.~Tang, M.~Musolesi, C.~Mascolo, and V.~Latora.
\newblock Temporal distance metrics for social network analysis.
\newblock In {\em Proceedings of the ACM Workshop on Online Social Networks},
  pages 31--36, 2009.

\bibitem{TangMML10ccr}
J.~Tang, M.~Musolesi, C.~Mascolo, and V.~Latora.
\newblock Characterising temporal distance and reachability in mobile and
  online social networks.
\newblock {\em Computer Communication Review}, 40(1):118--124, 2010.

\bibitem{TangSMML10phyrev}
J.~Tang, S.~Scellato, M.~Musolesi, C.~Mascolo, and V.~Latora.
\newblock Small-world behavior in time-varying graphs.
\newblock {\em Physical Review E}, 81(5):055101, 2010.

\bibitem{SchaikM11sigmod}
S.~J. van Schaik and O.~de~Moor.
\newblock A memory efficient reachability data structure through bit vector
  compression.
\newblock In {\em SIGMOD Conference}, pages 913--924, 2011.

\bibitem{XuanFJ03ijfcs}
B.-M.~B. Xuan, A.~Ferreira, and A.~Jarry.
\newblock Computing shortest, fastest, and foremost journeys in dynamic
  networks.
\newblock {\em Int. J. Found. Comput. Sci.}, 14(2):267--285, 2003.

\end{thebibliography}

%\end{sloppy}
\end{document}